\title{The Maximal MAM,\\ a Reasonable Implementation of\\ the Maximal Strategy}
\author{Beniamino Accattoli}
\affil{INRIA, UMR 7161, LIX, \'Ecole Polytechnique, France}
    \newtheorem{proposition}[theorem]{Proposition}
    \newtheorem{theorem}{Theorem}[section]
    \newtheorem{lemma}[theorem]{Lemma}
    \newtheorem{definition}[theorem]{Definition}
    \newtheorem{remark}[theorem]{Remark}
\newcommand{\myproof}[1]{
\ifthenelse{\boolean{withproofs}}{#1}{}
}
\newcommand{\la}[1]{\lambda #1.}
\newcommand{\tm}{t}
\newcommand{\tmtwo}{s}
\newcommand{\tmthree}{u}
\newcommand{\tmfour}{r}
\newcommand{\tmfive}{p}
\newcommand{\tmsix}{q}
\newcommand{\var}{x}
\newcommand{\vartwo}{y}
\newcommand{\varthree}{z}
\newcommand{\rootRew}[1]{\mapsto_{#1}}
\newcommand{\Rew}[1]{\rightarrow_{#1}}
\renewcommand{\to}{\Rew{}}
\newcommand{\tob}{\Rew{\beta}}
\newcommand{\rtob}{\rootRew{\beta}}
\newcommand{\osym}{o}
\newcommand{\esym}{{\mathtt e}}
\newcommand{\msym}{{\mathtt m}}
\newcommand{\csym}{{\tt c}}
\newcommand{\ctxholep}[1]{\langle #1\rangle}
\newcommand{\ctxhole}{\ctxholep{\cdot}}
\newcommand{\ctx}{C}
\newcommand{\ctxtwo}{D}
\newcommand{\ctxthree}{E}
\newcommand{\ctxp}[1]{\ctx\ctxholep{#1}}
\newcommand{\ctxtwop}[1]{\ctxtwo\ctxholep{#1}}
\newcommand{\nbvctxtwo}[1]{\nbvctxtwo{#1}}
\newcommand{\defeq}{:=}
\newcommand{\grameq}{::=}
\newcommand{\isub}[2]{\{#1/#2\}}
\newcommand{\esub}[2]{[#1/#2]}
\renewcommand{\esub}[2]{[#1{\shortleftarrow}#2]}
\renewcommand{\isub}[2]{\{#1{\shortleftarrow}#2\}}
\newcommand{\llbrace}{\{ \kern -0.27em \vert}
\newcommand{\rrbrace}{\vert \kern -0.27em \}}
\renewcommand{\l}{\lambda}
\newcommand{\ie}{{\em i.e.}\xspace}
\newcommand{\eg}{{\em e.g.}\xspace}
\newcommand{\ih}{{\textit{i.h.}}\xspace}
\newcommand{\fv}[1]{{\tt fv}(#1)}
\newcommand{\red}[1]{{\color{red} {#1}}}
\newcommand{\ignore}[1]{}
\newcommand{\myinput}[1]{\ifthenelse{\boolean{withimages}}{\input{#1}}{}}
\newcommand{\reflemma}[1]{Lemma~\ref{l:#1}}
\newcommand{\reflemmap}[2]{Lemma~\ref{l:#1}.\ref{p:#1-#2}}
\newcommand{\reflemmaeq}[1]{{L.\ref{l:#1}}}
\newcommand{\reflemmaeqp}[2]{{L.\ref{l:#1}.\ref{p:#1-#2}}}
\newcommand{\refthm}[1]{Theorem~\ref{thm:#1}}
\newcommand{\refthmp}[2]{Theorem~\ref{thm:#1}.\ref{p:#1-#2}}
\newcommand{\refthmcompact}[1]{Thm~\ref{thm:#1}}
\newcommand{\refsect}[1]{Sect.~\ref{sect:#1}}
\newcommand{\reffig}[1]{Fig.~\ref{fig:#1}}
\newcommand{\refdef}[1]{Definition~\ref{def:#1}}
\newcommand{\refrem}[1]{Remark~\ref{rem:#1}}
\newcommand{\refpoint}[1]{Point~\ref{p:#1}}
\newcommand{\refpointeq}[1]{P.\ref{p:#1}}
\newcommand{\set}[1]{\{#1\}}
\newcommand{\nat}{\mathbb{N}}
\newcommand{\size}[1]{|#1|}
\newcommand{\sizep}[2]{|#1|_{#2}}
\newcommand{\sizem}[1]{\sizep{#1}{\msym}}
\newcommand{\sizee}[1]{\sizep{#1}{\esym}}
\newcommand{\code}{\overline{\tm}}
\newcommand{\codetwo}{\overline{\tmtwo}}
\newcommand{\codethree}{\overline{\tmthree}}
\newcommand{\codefour}{\overline{\tmfour}}
\newcommand{\genv}{E}
\newcommand{\genvtwo}{\genv'}
\newcommand{\genvthree}{\genv''}
\newcommand{\genvfour}{\genv'''}
\newcommand{\stack}{\pi}
\newcommand{\stacktwo}{\stack'}
\newcommand{\stackthree}{\stack''}
\newcommand{\state}{s}
\newcommand{\statetwo}{\state'}
\newcommand{\statethree}{\state''}
\newcommand{\statefour}{\state'''}
\newcommand{\exec}{\rho}
\newcommand{\exectwo}{\sigma}
\newcommand{\execthree}{\tau}
\newcommand{\decode}[1]{\underline{#1}}
\newcommand{\decodep}[2]{\decode{#1}\ctxholep{#2}}
\newcommand{\decoderevp}[2]{\ctxholep{#2}\decode{#1}}
\newcommand{\stempty}{\epsilon}
\newcommand{\cons}{::}
\newcommand{\deriv}{\ensuremath{d}}
\newcommand{\derivtwo}{\ensuremath{e}}
\newcommand{\unfsym}{\rotatebox[origin=c]{-90}{$\rightarrow$}}
\newcommand{\unf}[1]{#1\unfsym\,}
\newcommand{\relunf}[2]{\unf{#1}_{#2}}
\newcommand{\rename}[1]{#1^\alpha}
\newcommand{\stackitem}{\phi}
\newcommand{\lab}{\red l}
\renewcommand{\dump}{D}
\newcommand{\dentry}[2]{#1\lozenge#2}
\newcommand{\gcentry}[2]{#1 \blacklozenge#2 }
\newcommand{\decstack}{\decode{\pi}}
\newcommand{\decstackp}[1]{\decstack\ctxholep{#1}}
\newcommand{\mach}{{\tt M}}
\newcommand{\admsym}{\csym}
\newcommand{\tomachhole}[1]{\leadsto_{#1}}
\newcommand{\tomach}{\tomachhole{}}
\newcommand{\tomachm}{\tomachhole{\msym}}
\newcommand{\tomachmone}{\tomachhole{\monesym}}
\newcommand{\tomachmtwo}{\tomachhole{\mtwosym}}
\newcommand{\tomachmthree}{\tomachhole{\mthreesym}}
\newcommand{\tomache}{\tomachhole{\esym}}
\newcommand{\tomachc}{\tomachhole{\admsym}}
\newcommand{\midd}{\mid}
\newcommand{\withproofs}[1]{\ifthenelse{\boolean{withproofs}}{#1}{}}
\newcommand{\withoutproofs}[1]{\ifthenelse{\boolean{withproofs}}{}{#1}}
\newcommand{\maca}[5]{#2 & #3 & #4 & #5 & #1}
\newcommand{\skamstate}[5]{#2\mid#3\mid#4\mid#5\mid#1}
\newcommand{\skphase}{\varphi}
\newcommand{\skeval}{\textcolor{blue}{\blacktriangledown}}
\newcommand{\skback}{\textcolor{red}{\blacktriangle}}
\newcommand{\mframe}{\skframe}
\newcommand{\skframe}{F}
\newcommand{\skframetwo}{\skframe'}
\newcommand{\skframethree}{\skframe''}
\newcommand{\lababs}{abs}
\newcommand{\labredsym}{red}
\newcommand{\labred}[1]{(\labredsym,#1)}
\newcommand{\labneu}{neu}
\newcommand{\commone}{\skeval\csym_1}
\newcommand{\commtwo}{\skeval\csym_2}
\newcommand{\commthree}{\skeval\csym_3}
\newcommand{\commfour}{\skback\csym_4}
\newcommand{\commfive}{\skback\csym_5}
\newcommand{\commsix}{\skback\csym_6}
\newcommand{\commseven}{\skeval\csym_7}
\newcommand{\eabssym}{\esym_{\lababs}}
\newcommand{\eredsym}{\esym_{\labredsym}}
\newcommand{\monesym}{\msym_1}
\newcommand{\mtwosym}{\msym_2}
\newcommand{\mthreesym}{\msym_3}
\newcommand{\tomachered}{\tomachhole{\eredsym}}
\newcommand{\tomacheabs}{\tomachhole{\eabssym}}
\newcommand{\tomacheredp}[1]{\tomachhole{\esym_{\labred {#1}}}}
\newcommand{\tomachasubone}{\tomachhole\commone}
\newcommand{\tomachasubtwo}{\tomachhole\commtwo}
\newcommand{\tomachasubthree}{\tomachhole\commthree}
\newcommand{\tomachasubfour}{\tomachhole\commfour}
\newcommand{\tomachasubfive}{\tomachhole\commfive}
\newcommand{\tomachasubsix}{\tomachhole\commsix}
\newcommand{\tomachasubseven}{\tomachhole{\commseven}}
\newcommand{\esublab}[2]{[#1{\shortleftarrow}#2]^{\lab}}
\newcommand{\esublabe}[3]{[#1{\shortleftarrow}#2]^{#3}}
\newcommand{\ucam}{Checking AM\xspace}
\newcommand{\ecnf}[1]{\mathtt{nf}_{\esym\admsym}(#1)}
\newcommand{\skap}[2]{\dentry{#1}{#2}}
\newcommand{\stctx}[1]{\ctx_{#1}}
\newcommand{\stctxp}[2]{\ctx_{#1}\ctxholep{#2}}
\newcommand{\sizex}[1]{\size{#1}_{\mathtt{x}}}
\newcommand{\decodefun}{\decode{{ }\cdot{ }}}
\newcommand{\csub}[2]{\llbrace#1{\shortleftarrow}#2\rrbrace}
\newcommand{\toucmach}{\rightharpoonup}
\newcommand{\toucmachhole}[1]{\toucmach_{#1}}
\newcommand{\toucmachasubone}{\toucmachhole\commone}
\newcommand{\toucmachasubtwo}{\toucmachhole\commtwo}
\newcommand{\toucmachasubthree}{\toucmachhole\commthree}
\newcommand{\toucmachasubfour}{\toucmachhole\commfour}
\newcommand{\toucmachasubfive}{\toucmachhole\commfive}
\newcommand{\toucmachasubsix}{\toucmachhole\commsix}
\newcommand{\toucmachasubseven}{\toucmachhole\commseven}
\newcommand{\toucmachoutone}{\toucmachhole{\osym_1}}
\newcommand{\toucmachouttwo}{\toucmachhole{\osym_2}}
\newcommand{\toucmachoutthree}{\toucmachhole{\osym_3}}
\newcommand{\toucmachoutfour}{\toucmachhole{\osym_4}}
\newcommand{\toucmachoutfive}{\toucmachhole{\osym_5}}
\newcommand{\toucmachoutsix}{\toucmachhole{\osym_6}}
\newcommand{\decdump}{\decode \dump}
\newcommand{\decdumpp}[1]{\decdump \ctxholep{#1}}
\newcommand{\ucamstate}[4]{(#1,#2,#3,#4)}
\newcommand{\mac}[5]{(#2,#3,#4,#5,#1)}
\newcommand{\decodestack}[2]{\ctxholep{#1}\decode{#2}}
\newcommand{\sizecom}[1]{|#1|_c}
\newcommand{\sizecomev}[1]{|#1|_{\skeval \csym}}
\newcommand{\sizecombt}[1]{|#1|_{\skback \csym}}
 \newcommand{\case}[1]{{\bf Case #1.}}
\newcommand{\rmeasure}[1]{\size{#1}}
\newcommand{\polsize}[2]{|#1|_{#2}}
\newcommand{\quiet}{neutral\xspace}
\newcommand{\closescopem}[1]{\skback#1}
\renewcommand{\perp}{Max\xspace}
\newcommand{\perpbeta}{{\tiny \mathtt{\perp}}\beta}
\newcommand{\toperp}{\Rew{\perpbeta}}
\newcommand{\rtobperp}{\rootRew{\perpbeta}}
\newcommand{\perpRuleAx}{(ax)}
\newcommand{\perpRuleLambda}{($\l$)}
\newcommand{\perpRuleAppR}{(@r)}
\newcommand{\perpRuleAppL}{(@l)}
\newcommand{\perpRuleGc}{(gc)}
\newcommand{\maxmam}{Max MAM\xspace}
\newenvironment{varitemize}
{
\begin{list}{\labelitemi}
{
\setlength{\itemsep}{.7pt}
 \setlength{\topsep}{.7pt}
 \setlength{\parsep}{.7pt}
 \setlength{\partopsep}{.7pt}
 \setlength{\leftmargin}{15pt}
 \setlength{\rightmargin}{0pt}
 \setlength{\itemindent}{0pt}
 \setlength{\labelsep}{5pt}
 \setlength{\labelwidth}{15pt}
}}
{
 \end{list} 
}
\newcounter{numberone}
\newenvironment{varenumerate}
{
\begin{list}{\arabic{numberone}.}
{
 \usecounter{numberone}
 \setlength{\itemsep}{.7pt}
 \setlength{\topsep}{.7pt}
 \setlength{\parsep}{.7pt}
 \setlength{\partopsep}{.7pt}
 \setlength{\leftmargin}{15pt}
 \setlength{\rightmargin}{0pt}
 \setlength{\itemindent}{0pt}
 \setlength{\labelsep}{5pt}
 \setlength{\labelwidth}{15pt}
}}
{
\end{list} 
}
\newtheorem*{theorem*}{Theorem}
\newcommand\scalemath[2]{\scalebox{#1}{\mbox{\ensuremath{\displaystyle #2}}}}
\renewcommand{\tmtwo}{u}
\renewcommand{\tmthree}{r}
\renewcommand{\tmfour}{s}
\date{}
\begin{document}
\maketitle
%\pagenumbering{arabic}

%\newcommand{\detLam}{\Lambda_{det}}

% !TEX root = main.tex

\section{Introduction}

This note is about a reasonable abstract machine, called Maximal MAM, implementing the maximal strategy of the $\l$-calculus, that is, the strategy that always produces a longest evaluation sequence. The abstract machine is a minor variation over the Useful MAM of \cite{DBLP:conf/wollic/Accattoli16}, that is a reasonable implementation of the leftmost-outermost strategy. Here \emph{reasonable} is a technical term: an abstract machine $\mach$ implementing a strategy $\to$ is reasonable when its overhead on a given term $\tm$ is polynomial with respect to the size of $\tm$ and the number of $\to$-steps necessary to evaluate $\tm$.

The result of this note is discussed and put in context in the author's \emph{(In)Efficiency and Reasonable Cost Models} \cite{LSFA2017invited}. Essentially, the maximal strategy is a peculiar case-study, for the following reasons:
\begin{itemize}
	\item \emph{Non-trivial reasonable implementation}: it is the most inefficient strategy ever, and yet implementing it with a reasonable overhead requires the whole heavy machinery, namely \emph{useful sharing}, needed for the leftmost-outermost strategy, the other only strong strategy known to be reasonable.
	\item \emph{Unreasonable cost model}: it is reasonably implementable---that usually is the tricky part in the study of cost models---but its number of steps does not provide a reasonable cost model. Indeed the strategy is so inefficient that it cannot even simulate Turing machines, that usually is the easy part in the study of cost models. To be precise, there are no proofs of such impossibility, but the known encodings of Turing machines do not work when the evaluation strategy in the $\l$-calculus is the maximal one.
\end{itemize}

We first define the strategy, and then show how to implement it via the \maxmam. The technical details follow closely those in \cite{DBLP:conf/wollic/Accattoli16} for the Useful MAM. In turn, the Useful MAM is a refinement of the Strong MAM, an unreasonable abstract machine for the leftmost-outermost strategy studied in \cite{DBLP:conf/aplas/AccattoliBM15}, that in turn is a simplification of Cregut's machine \cite{DBLP:conf/lfp/Cregut90,DBLP:journals/lisp/Cregut07,DBLP:conf/ppdp/Garcia-PerezNM13}. The literature about abstract machines for strong evaluation is extremely limited, we essentially already cited all existing papers on the subject. A few further papers \cite{DBLP:conf/icfp/GregoireL02,DBLP:conf/lics/AccattoliC15,DBLP:conf/fsen/AccattoliG17} deal with abstract machines for weak evaluation with open terms, that is an intermediate framework between the one of weak evaluation with closed terms, to which the almost totality of the literature is devoted, and the almost inexistent one of strong evaluation.

% !TEX root = main.tex

\section{\texorpdfstring{$\l$}{lambda}-Calculus and Maximal Evaluation}

The syntax of the ordinary $\l$-calculus is given by the following grammar for
terms:
\[\begin{array}{c@{\hspace{2em}}ccccc}
   \mbox{\textsc{$\l$-Terms}} & \tm,\tmtwo,\tmthree,\tmfour\grameq\var\midd \l \var. \tm \midd \tm \tmtwo.
  \end{array}\]
We use $\tm\isub{\var}{\tmtwo}$ for the usual (meta-level)
notion of substitution. An abstraction $\la
\var \tm$ binds $\var$ in $\tm$, and we
silently work modulo $\alpha$-equivalence of  bound variables,
\eg\ $(\la\vartwo(\var\vartwo))\isub\var\vartwo =
\la\varthree(\vartwo\varthree)$. We use $\fv\tm$ for the set of
free variables of $\tm$.

\emph{$\beta$-reduction.} We define $\beta$-reduction $\tob$ as follows:
$$\begin{array}{ccc}
  \textsc{Contexts} \\   
	\ctx \grameq \ctxhole\midd \la \var \ctx\midd \ctx \tm \midd\tm\ctx \\\\
	\textsc{Rule at top level}\\
	(\l\var.\tm)\tmtwo \rtob \tm\isub\var\tmtwo \\\\
	\textsc{Contextual closure} \\
        \ctxp \tm \tob \ctxp \tmtwo \textrm{~~~if } \tm \rtob \tmtwo
\end{array}$$
 A context $\ctx$ is \emph{applicative} if $\ctx = \ctxtwop{\ctxhole\codetwo}$ for some $\ctxtwo$ and $\codetwo$. A term $\tm$ is a \emph{normal form}, or simply \emph{normal}, if there is no $\tmtwo$ such that $\tm\tob\tmtwo$, and it is \emph{neutral} if it is normal and it is not of the form $\la\var\tmtwo$ (\ie\ it is not an abstraction). The \emph{position} of a $\beta$-redex $\ctxp \tm \tob \ctxp \tmtwo$ is the context $\ctx$ in which it takes place. To ease the language, we will identify a redex with its position. A \emph{derivation} $\deriv:\tm\to^k\tmtwo$
is a finite, possibly empty, sequence of reduction steps. We write $\size\tm$ for the size of $\tm$ and $\size\deriv$ for the length of
$\deriv$.

\paragraph{Maximal Evaluation.} The maximal strategy is the variation over leftmost-outermost (LO) evaluation in which, when the LO redex $(\la\var\tm)\tmtwo$ is erasing (that is, when $\var \notin \fv\tm$), the strategy first evaluates (maximally) $\tmtwo$ to $\tmtwo'$, and then fires the erasing redex $(\la\var\tm)\tmtwo'\tob \tm$, to avoid the erasure of the $\beta$-redexes of $\tmtwo$ before being reduced. Of course, if $\tmtwo$ diverges than the maximal strategy diverges, while the LO strategy would not (or, at least, not because of $\tmtwo$). The LO strategy is inefficient but it has the key property of being \emph{normalizing}, \ie it reaches a normal form whenever it exists. The maximal strategy, dually, is \emph{perpetual}, that is, it diverges whenever possible. See van Raamsdonk at al.'s \cite{DBLP:journals/iandc/RaamsdonkSSX99} for more about perpetual and maximal strategies.

We define the maximal strategy by first defining the notion of \emph{maximal context}, that is a context in which a maximal redex can appear (and \emph{not} a context that cannot be extended).
 \begin{definition}[\perp Context]
 \label{def:ilob-ctx}
 Maximal (or max) contexts are defined by induction as follows:
 \begin{center}
$\begin{array}{c@{\hspace{1em}}c@{\hspace{1em}}cccc}
	\AxiomC{}
	\RightLabel{\perpRuleAx}
	\UnaryInfC{$\ctxhole$ is \perp}
	\DisplayProof
	&
	\AxiomC{$\ctx$ is \perp}
	\AxiomC{$\ctx\neq\la\var\ctxtwo$}
	\RightLabel{\perpRuleAppL}
	\BinaryInfC{$\ctx \tm$ is \perp}
	\DisplayProof 
 \\\\
	
	\AxiomC{$\ctx$ is \perp}
	\RightLabel{\perpRuleLambda}
	\UnaryInfC{$\la\var\ctx$ is \perp}
	\DisplayProof &
	\AxiomC{$\var \notin \fv\tm$}
	\AxiomC{$\ctx$ is \perp}	
	\RightLabel{\perpRuleGc}
	\BinaryInfC{$(\la\var\tm) \ctx$ is \perp}
	\DisplayProof
	\\\\
	&
	\AxiomC{$\tm$ is neutral}
	\AxiomC{$\ctx$ is \perp}
	\RightLabel{\perpRuleAppR}
	\BinaryInfC{$\tm \ctx$ is \perp}
	\DisplayProof 
\end{array}$
\end{center}
 \end{definition}

We define the \emph{maximal $\beta$-reduction strategy} $\toperp$ as follows:
$$\begin{array}{cc}
  \textsc{Rule at top level} \\
	(\la\var\tm)\tmtwo \rtobperp \tm\isub\var\tmtwo  \mbox{\ \ if $\var\in \fv\tm$ or $\tmtwo$ is normal}\\\\
	\textsc{Contextual closure} \\
        \ctxp \tm \toperp \ctxp \tmtwo \textrm{~~~if } \tm \rtobperp \tmtwo
\end{array}$$

As expected,

\begin{lemma}[Basic Properties of the Maximal Strategy, Proof at page \pageref{ssect:perp-properties}]
\label{l:perp-properties} % \reflemmap{perp-properties}{determinism}
Let $\tm$ be a $\l$-term that is not normal. Then
\begin{varenumerate}
	\item \emph{Completeness}: \label{p:perp-properties-completeness}
	there exists $\tmtwo$ such that $\tm\toperp\tmtwo$.
	\item \emph{Determinism}: \label{p:perp-properties-determinism}
	moreover, such a $\tmtwo$ is unique.
\end{varenumerate}
\end{lemma}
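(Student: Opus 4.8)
The plan is to prove both statements simultaneously by structural induction on the term $\tm$, strengthening the statement so that the induction carries through cleanly. Concretely, I would prove for every term $\tm$: (i) $\tm$ is normal iff there is no $\tmtwo$ with $\tm\toperp\tmtwo$; (ii) if $\tm$ is not normal, the $\toperp$-reduct is unique; and additionally (iii) if $\tm$ is not normal, whether $\tm$ is an abstraction or neutral is decided by a uniquely determined max redex position, and in particular the leftmost-outermost non-erasing redex (or, in the erasing case, the recursively determined redex inside the argument) is the unique $\toperp$-redex. The completeness direction of (i) is the substantive half; determinism then follows by showing the inference rules defining max contexts are mutually exclusive at each term constructor.

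First I would set up the case analysis on the shape of $\tm$. If $\tm=\var$, it is normal and there is no step, so the statement holds vacuously. If $\tm=\la\var\tmthree$, then $\tm$ is normal iff $\tmthree$ is normal; if $\tmthree$ is not normal, by IH it has a unique $\toperp$-step $\tmthree\toperp\tmthree'$, and the only applicable max-context rule at an abstraction is $\perpRuleLambda$, so $\la\var\tmthree\toperp\la\var\tmthree'$ is the unique step — here I must check that no other rule produces a max context of the form $\la\var\ctx$, which is immediate since the only rule whose conclusion is an abstraction context is $\perpRuleLambda$. The interesting case is $\tm=\tmthree\tmfour$. I would split on whether $\tmthree$ is an abstraction. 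If $\tmthree=\la\var\tmfive$: if $\var\in\fv\tmfive$ then $\tm\rtobperp$ fires at the root (rule $\perpRuleAx$ at the hole inside an application head), and this is forced; if $\var\notin\fv\tmfive$, then by the side condition the root redex fires iff $\tmfour$ is normal — so if $\tmfour$ is normal, $\tm\toperp\tmfive$, otherwise by IH $\tmfour\toperp\tmfour'$ uniquely and rule $\perpRuleGc$ gives $\tm\toperp(\la\var\tmfive)\tmfour'$ uniquely (and $\tm$ is never normal in this subcase, matching the definition of normal form). If $\tmthree$ is not an abstraction: by IH either $\tmthree$ is normal or it has a unique $\toperp$-step. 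If $\tmthree$ is not normal, rule $\perpRuleAppL$ applies (its side condition $\ctx\neq\la\var\ctxtwo$ holds precisely because $\tmthree$ is not an abstraction, so the induced max context of $\tmthree$ composed with $\cdot\,\tmfour$ is legal) and gives the unique step $\tmthree\tmfour\toperp\tmthree'\tmfour$; one checks $\perpRuleAppR$ cannot also apply since it requires $\tmthree$ neutral, i.e. normal, contradiction. If $\tmthree$ is normal and not an abstraction, it is neutral, so rule $\perpRuleAppR$ is the only option: $\tmthree\tmfour$ is normal iff $\tmfour$ is normal, and if $\tmfour$ is not normal, IH gives its unique step, lifted uniquely through $\perpRuleAppR$.

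The main obstacle, and where I would spend the most care, is the \emph{mutual exclusivity} bookkeeping needed for determinism: at an application $\tmthree\tmfour$ one must rule out simultaneous applicability of $\perpRuleAppL$, $\perpRuleAppR$, and (when $\tmthree$ is an erasing abstraction) $\perpRuleGc$ or the root rule. This is handled by the observation that these three rules are discriminated by a trichotomy on the head $\tmthree$ — "erasing abstraction / non-abstraction not normal / neutral" — together with the non-overlap of the root erasing-redex side condition ($\tmfour$ normal) with the premise of $\perpRuleGc$ ($\ctx$ is max, hence by completeness-direction of the IH, $\tmfour$ not normal). Once this trichotomy is pinned down, both completeness and determinism drop out of the same induction. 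I would also remark that the argument implicitly reproves that $\toperp$ is a restriction of $\tob$ and that max normal forms coincide with ordinary $\beta$-normal forms, which is why the "not normal" hypothesis in the statement is exactly the right one.
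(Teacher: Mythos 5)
Your proof is correct and follows essentially the same route as the paper's: structural induction on $\tm$ with the same case analysis (variable, abstraction, application with head an erasing/non-erasing abstraction or a non-abstraction), using the inductive hypothesis on the relevant subterm and discharging determinism by the mutual exclusivity of the max-context rules at each constructor, exactly as the paper does. Your strengthening to ``normal iff no $\toperp$-step'' is a harmless packaging choice (the paper leaves that direction implicit), not a different argument.
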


Let us point out that we talk about \emph{the} maximal strategy despite the fact that there are many maximal strategies, that can be obtained from our definition by adding some more freedom in the choice of the redex. A typical example is given by $\var \tm\tmtwo$: our strategy first reduces $\tm$ and then (if $\tm$ terminates) it evaluates $\tmtwo$, while one can interleave the evaluations of $\tm$ and $\tmtwo$ and still be maximal (because $\tm$ cannot act on $\tmtwo$ in $\var\tm\tmtwo$). We refer it as \emph{the} maximal strategy, because it is the leftmost maximal strategy, \ie it mostly behaves as the leftmost strategy, it only changes its behavior on erasing redexes (and the leftmost strategy is a sort of canonical strategy).

% !TEX root = main.tex
\section{Preliminaries on Abstract Machines}
We study two abstract machines, the Maximal Milner Abstract Machine (\maxmam) (\reffig{UMAM}) and an auxiliary machine called the \ucam (\reffig{UCAM}). The \maxmam is a reasonable implementation of the maximal strategy resting on labeled environments to implement useful sharing, and on the \ucam to produce these labels.

The \maxmam is meant to implement the maximal strategy via a decoding function $\decodefun$ mapping machine states to $\l$-terms. Machine states $\state$ are given by a \emph{code} $\code$, that is a $\l$-term $\tm$ \emph{not considered up to $\alpha$-equivalence} (which is why it is over-lined), and some data-structures like stacks, frames, and environments. The data-structures are used to implement the search for the next maximal redex and a form of micro-steps substitution, and they decode to evaluation contexts for $\toperp$. Every state $\state$ decodes to a term $\decode\state$, having the shape $\stctx\state\ctxholep\code$, where $\code$ is the code currently under evaluation and $\stctx\state$ is the evaluation context given by the data-structures. 

The \ucam uses the same states and data-structures of the \maxmam.\smallskip

\textbf{The Data-Structures.} First of all, our machines are executed on \emph{well-named} terms, that are those $\alpha$-representants where all variables (both bound and free) have distinct names. Then, the data-structures used by the machines are defined in \reffig{grammars}, namely:
\begin{varenumerate}
\item  \emph{Stack $\stack$}: it contains the arguments of the current code;

\item \emph{Frame $\mframe$}: a second stack, that together with  $\stack$ is used to walk through the term and search for the next redex to reduce. The items $\stackitem$ of a frame are of three kinds:
\begin{varenumerate}
\item \emph{Variables}: a variable $\var$ is pushed on the frame $\mframe$ whenever the machines starts evaluating under an abstraction $\l\var$. 
\item \emph{Head argument contexts}: $\dentry\code\stack$ is pushed on $\mframe$ every time evaluation enters in the right subterm $\codetwo$ of an application $\code\codetwo$. The entry saves the left part $\code$ of the application and the current stack $\stack$, to restore them when the evaluation of the right subterm $\codetwo$ is over. 
\item \emph{Erasing Contexts}: $\gcentry{ \la\var\code }\stack$ is pushed on $\mframe$ every time evaluation finds an erasing redex $(\la\var\code) \codetwo$ (that is, a redex for which $\var \notin \fv\code$). In this case evaluation enters in the argument $\codetwo$ to normalize it (or, possibly, to diverge) before erasing it.
\end{varenumerate}

\item \emph{Global Environment $\genv$}: it is used to implement micro-step substitution  (\ie on a variable occurrence at the time), storing the arguments of $\beta$-redexes that have been encountered so far. Most of the literature on abstract machines uses \emph{local environments} and \emph{closures}. Having just one global environment $\genv$ (used only in a minority of works  \cite{DBLP:journals/entcs/FernandezS09,DBLP:conf/birthday/SandsGM02,DBLP:conf/ppdp/DanvyZ13,DBLP:conf/icfp/AccattoliBM14,DBLP:conf/lics/AccattoliC15,DBLP:conf/aplas/AccattoliBM15,DBLP:conf/wollic/Accattoli16,DBLP:conf/fsen/AccattoliG17}, and discussed at length in \cite{DBLP:conf/ppdp/AccattoliB17}) removes the need for closures and simplifies the machine. On the other hand, it forces to use explicit $\alpha$-renamings (the operation $\rename\code$ in $\tomachered$ and $\tomacheabs$ in \reffig{UMAM}), but this does not affect the overall complexity, as it speeds up other operations, see \cite{DBLP:conf/ppdp/AccattoliB17}.
The entries of $\genv$ are of the form $\esublab\var\code$, \ie they carry a label $\lab$ used to implement usefulness, to be explained later on in this section. We write $\genv(\var) = \esublab\var\code$ when $\genv$ contains $\esublab\var\code$ and $\genv(\var) = \bot$ when in $\genv$ there are no entries of the form $\esublab\var\code$.
\end{varenumerate}\smallskip

\textbf{The Decoding.} Every state $\state$ decodes to a term $\decode\state$ (see \reffig{decoding}) of shape $\stctxp\state{\relunf\code\genv}$, where 
\begin{varenumerate}
 \item $\relunf\code\genv$ is a $\l$-term, roughly obtained by applying to the code the substitution induced by the global environment $\genv$. More precisely, the operation $\relunf\code\genv$ is called \emph{unfolding} and it is properly defined at the end of this section.
 \item $\stctx\state$ is a context, that will be shown to be a max context, obtained by decoding the stack $\stack$ and the dump $\dump$ and applying the unfolding. Note that, to improve readability, $\stack$ is decoded in postfix notation for plugging.
\end{varenumerate}\smallskip

\textbf{The Transitions.} According to the distillation approach of \cite{DBLP:conf/icfp/AccattoliBM14} (related to Danvy and Nielsen's \emph{refocusing} \cite{Danvy04refocusingin}) we distinguish different kinds of transitions, whose names reflect a proof-theoretical view, as transitions can be seen as cut-elimination steps \cite{DBLP:journals/toplas/AriolaBS09,DBLP:conf/icfp/AccattoliBM14}:
\begin{varenumerate}
	\item \emph{Multiplicatives} $\tomachm$: they correspond to the firing of a $\beta$-redex $(\la\var\tm)\tmtwo$, except that if the argument $\tmtwo$ is not a variable and the redex is not erasing (that happens when $\var \in \fv\tm$) then $\codetwo$ is not substituted but added to the environment;
	\item \emph{Exponentials} $\tomache$: they perform a clashing-avoiding substitution from the environment on the single variable occurrence represented by the current code. They implement micro-step substitution.
	\item \emph{Commutatives} $\tomachc$: they locate and
expose the next redex according to the maximal strategy, by rearranging the data-structures.
\end{varenumerate}

Both exponential and commutative transitions are invisible on the $\l$-calculus. Garbage collection of environment entries that are no longer necessary is here simply ignored, or, more precisely, it is encapsulated at the meta-level, in the decoding function. \smallskip

\textbf{Labels for Useful Sharing} A label $\lab$ for a code in the environment can be of three kinds. Roughly, they are:
\begin{varenumerate}
	\item \emph{Neutral}, or $\lab = \labneu$: it marks a neutral term, that is always useless as it is $\beta$-normal and its substitution cannot create a redex, because it is not an abstraction;
	\item \emph{Abstraction}, or $\lab = \lababs$: it marks an abstraction, that is a term that is at times useful to substitute. If the variable that it is meant to replace is applied, indeed, the substitution of the abstraction creates a $\beta$-redex. But if it is not applied, it is useless. 
	\item \emph{Redex}, or $\lab = \labredsym$: it marks a term that contains a $\beta$-redex. It is always useful to substitute these terms.
\end{varenumerate}
Actually, the explanation we just gave is oversimplified, but it provides a first intuition about labels. In fact in an environment $\esublab\var\code \cons \genv$ it is not really $\code$ that has the property mentioned by its label, rather the term $\relunf\code\genv$ obtained by unfolding the rest of the environment on $\code$. The idea is that $\esublabe\var\code\labredsym$ states that it is useful to substitute $\code$ to \emph{later on} obtain a redex inside it (by potential further substitutions on its variables coming from $\genv$). The precise meaning of the labels will be given by \refdef{well-lab-env}, and the properties they encode will be made explicit by \reflemma{exp-decr} in the Appendix (page \pageref{l:exp-decr}).

A further subtlety is that the label $\labredsym$ for redexes is refined as a pair $\labred n$, where $n$ is the number of substitutions in $\genv$ that are needed to obtain the maximal redex in $\relunf\code\genv$. Our machines never inspect these numbers, they are only used for the complexity analysis of \refsect{compl-anal}.\smallskip

\textbf{Grafting and Unfoldings} The unfolding of the environment $\genv$ on a code $\code$ is defined as the recursive  \emph{capture-allowing} substitution (called \emph{grafting}) of the entries of $\genv$ on $\code$. For lack of space, the precise definition has been moved to the Appendix (page \pageref{ssect:grafting}).
% !TEX root = main.tex
\section{The Checking Abstract Machine}
% !TEX root = main.tex
\begin{figure}[t]
	\begin{center}
	\ovalbox{$\begin{array}{l@{\hspace{1em}}rcl@{\hspace{2em}}l@{\hspace{1em}}rcl}
			\text{Frames} &
			\skframe & \grameq & \stempty
								\mid \skframe\cons \stackitem
			&
			\text{Stacks} &
			\stack   & \grameq & \stempty \mid \code\cons\stack
			\\
			\text{Frame Items} & \stackitem & \grameq & \var \mid \dentry{\code}{\stack} \mid \gcentry{ \la\var\code }\stack
			&
			\text{Phases} &
			\skphase & \grameq & \skeval \mid \skback
			\\
			\text{Labels} &
			\lab    & \grameq & \lababs \mid (\labredsym, n\in \nat) \mid \labneu
&
			\text{Environments} &
			\genv    & \grameq & \stempty
								\mid \esublab{\var}{\code}\cons\genv
		\end{array}$}
	\end{center}
	\caption{Grammars.}
	\label{fig:grammars}
	%\vspace{-0.25cm}
\end{figure}
	\begin{figure}[t]
	\begin{center}
	  \ovalbox{
		\scalemath{0.94}{
		\setlength{\arraycolsep}{0.5em}
		\begin{array}{c|c|c|c|clc|c|c|c|c}
		\mbox{Frame} & \mbox{Code} & \mbox{Stack} & \mbox{Env} & \mbox{Ph}
		&&
		\mbox{Frame} & \mbox{Code} & \mbox{Stack} & \mbox{Env} & \mbox{Ph}\\
			\maca{ \skeval }{ \skframe }{ \code\codetwo }{ \stack }{ \genv } &
			\toucmachasubone &
			\maca{ \skeval }{ \skframe }{ \code }{ \codetwo\cons\stack }{ \genv }
		\\
% 			\maca{ \skeval }{ \stempty }{ \la\var\code }{ \stempty }{ \genv } &
% 			\toucmachoutone &
% 			\multicolumn{5}{c}{\text{output $\lababs$}}
% 		\\
			\maca{ \skeval }{ \skframe }{ \la\var\code }{ \codetwo\cons\stack }{ \genv } &
			\toucmachoutone &
			\multicolumn{5}{l}{\text{output $\labred 1$}}\\
			\multicolumn{11}{r}{\text{ if $\var \in \fv\code$}}
			\\
			\maca{ \skeval }{ \skframe }{ \la\var\code }{ \codetwo\cons\stack }{ \genv } &
			\tomachasubseven &
			\maca{ \skeval }{ \skframe \cons \gcentry{ \la\var\code }\pi}{ \codetwo }{ \stempty }{ \genv }		\\
			\multicolumn{11}{r}{\text{ if $\var \notin \fv\code$}}
		\\
			\maca{ \skeval }{ \skframe }{ \la\var\code }{ \stempty }{ \genv } &
			\toucmachasubtwo &
			\maca{ \skeval }{ \skframe\cons\var }{ \code }{ \stempty }{ \genv }
		\\
			\maca{ \skeval }{ \skframe }{ \var }{ \stack }{ \genv } &
			\toucmachouttwo &
			\multicolumn{5}{l}{\text{output $\labred {n+1}$}} \\
			\multicolumn{11}{r}{\text{ if $\genv(\var) = \esublabe\var\code{\labred n}$}}
		\\
			\maca{ \skeval }{ \skframe }{ \var }{ \codetwo\cons\stack }{ \genv } &
			\toucmachoutthree &
			\multicolumn{5}{l}{\text{output $\labred 2$}} \\
			\multicolumn{11}{r}{\text{ if $\genv(\var) = \esublabe\var\code\lababs$}}

		\\
			\maca{ \skeval }{ \skframe }{ \var }{ \stack }{ \genv } &
			\toucmachasubthree &
			\maca{ \skback }{ \skframe }{ \var }{ \stack }{ \genv }
			\\ \multicolumn{11}{r}{\text{ if $\genv(\var) = \bot$ or $\genv(\var) = \esublabe\var\code\labneu$ or ($\genv(\var) = \esublabe\var\code\lababs$ and $\stack = \stempty$)}}
		\\
		%    \skback & \var\cons\skframe & \code & \stempty & \genv_1\cons\esub{\var}{\skboundvar}\cons\genv_2 &
		%    \toucmachasubfour &
		%    \skback & \skframe & \l\var.\code & \stempty & \genv_2
		%  \\
			\maca{ \skback }{ \skframe\cons\var }{ \code }{ \stempty }{ \genv } &
			\toucmachasubfour &
			\maca{ \skback }{ \skframe }{ \l\var.\code }{ \stempty }{ \genv }
		\\

			\maca{ \skback }{ \skframe\cons \dentry\code\stack  }{ \codetwo }{ \stempty }{ \genv } &
			\toucmachasubfive &
			\maca{ \skback }{ \skframe }{ \code\codetwo }{ \stack }{ \genv }
		\\
			\maca{ \skback }{ \skframe\cons \gcentry{ \la\var\code }\stack  }{ \codetwo }{ \stempty }{ \genv } &
			\toucmachoutsix &
			\multicolumn{5}{l}{\text{output $\labred 1$}}
		\\
			\maca{ \skback }{ \skframe }{ \code }{ \codetwo\cons\stack }{ \genv } &
			\toucmachasubsix &
			\maca{ \skeval }{ \skframe \cons \dentry\code\stack }{ \codetwo }{ \stempty }{ \genv }
		\\
			\maca{ \skback }{ \stempty }{ \code\codetwo }{ \stempty }{ \genv } &
			\toucmachoutfour &
			\multicolumn{5}{l}{\text{output $\labneu$}} 
		\\
			\maca{ \skback }{ \stempty }{ \la\var\code }{ \stempty }{ \genv } &
			\toucmachoutfive &
			\multicolumn{5}{l}{\text{output $\lababs$}} 

		\end{array}
		}
		%\vspace{-0.5cm}
		}
	\end{center}
	
	\caption{The Checking Abstract Machine (\ucam).}
	\label{fig:UCAM}
	%\vspace{-0.25cm}
\end{figure}

\begin{figure}[t]
\begin{center}
     \ovalbox{
$\begin{array}{rcllrcllllllll}
	\decode{\stempty}	& \defeq & \ctxhole &&	
	\decode{\mframe\cons\var} & \defeq & \decdumpp{\la\var\ctxhole}  \\
        
	\decode{ \codetwo \cons \stack} 			& \defeq & \decstackp{\ctxhole\codetwo} &&
	\decode{\stctx{\state}}				& \defeq & \relunf{\decdump\ctxholep{\decstack}} \genv  \\
	
         \decode{\mframe\cons\dentry\code\stack} & \defeq & \decdumpp{\decstackp{\code\ctxhole}} &&
         \decode{\state}				& \defeq & \relunf{ \decdump\ctxholep{\decstackp\code} }\genv & = &
	\decodep{\stctx{\state}}{\relunf\code\genv} &  %&=&\decdump\ctxholep{\decstackp\code}\indsub\genv \\
	\\
	\decode{\mframe\cons \gcentry{ \la\var\code }\stack} & \defeq & \decdumpp{\decstackp{(\la\var\code)\ctxhole}} &&&\multicolumn{5}{l}{\mbox{where $\state = \ucamstate\mframe\code\stack\genv$}}
	\\
              
\end{array}$
}
  \end{center}  
	\caption{Decoding.}
	\label{fig:decoding}
\end{figure}

The Checking Abstract Machine (\ucam) is defined in \reffig{UCAM} and it is a variation over the very similar auxiliary checking machine for the Useful MAM in \cite{DBLP:conf/wollic/Accattoli16}. The difference between the two is in the two new transitions $\toucmachhole{\csym_7}$ and $\toucmachhole{\osym_6}$ (that is also why in \reffig{UCAM} they are misplaced with respect to the progressing numbering), plus the side condition \emph{if $\var\in\fv\code$} in $\toucmachhole{\osym_1}$. 

The \ucam starts executions on states of the form $\mac{\skeval}{\stempty}{\code}{\stempty}{\genv}$, with the aim of checking the usefulness of $\code$ with respect to the environment $\genv$, \ie it walks through $\code$ and whenever it encounters a variable $\var$ it looks up its usefulness in $\genv$. 

The \ucam has seven commutative transitions, noted $\toucmachhole{\csym_i}$ with $i=1,..,7$, used to walk through the term, and six output transitions, noted $\toucmachhole{\osym_j}$ with $j=1,..,6$, that produce the value of the test for usefulness, to be later used by the \maxmam. The exploration is done in two alternating phases, evaluation $\skeval$ and backtracking $\skback$. Evaluation explores the current code (morally towards the head, except when it encounters an erasing redex, that is when it first explores the right subterm) storing in the stack and in the frame the parts of the code that it leaves behind. Backtracking comes back to an argument that was stored in the frame, when the current head has already been checked. Note that the \ucam never modifies the environment, it only looks it up, nor it erases or duplicates any piece of code, it only walks through the data structures.

Let us explain the transitions. First the commutative ones:

\begin{itemize}
 \item \emph{$\toucmachasubone$}: the code is an application $\code\codetwo$ and the machine starts exploring the left subterm $\code$, storing $\codetwo$ on top of the stack $\stack$.
   \item \emph{$\toucmachasubseven$}: the code $\la\var\code$ and the first argument $\codetwo$ on the stack form an erasing redex (by hypothesis $\var \notin \fv\code$). The machine has found a $\beta$-redex but it cannot output yet, because it does not know what number $n$ to associate to the label $\lab = \labred n$. Indeed, $(\la\var\code)\codetwo$ is the next $\beta$-redex to reduce only if $\codetwo$ is normal, otherwise the next one will be in $\codetwo$ and obtaining it may require many substitution steps. Therefore, the machine stores $\la\var\code$ and the current stack $\stack$ on the frame $\skframe$ and starts checking $\codetwo$ (with an empty stack).
 \item \emph{$\toucmachasubtwo$}: the code is an abstraction $\la\var\code$ and the machine goes under the abstraction, storing $\var$ on top of the frame $\mframe$.  
  \item \emph{$\toucmachasubthree$}: the machine finds a variable $\var$ that either has no associated entry in the environment (if $\genv(\var) = \bot$) or its  associated entry $\esublab\var\code$ in the environment is useless. This can happen if either $\lab = \labneu$, \ie substituting $\code$ would only lead to a neutral term, or $\lab = \lababs$, \ie substituting $\code$ would provide an abstraction, but the stack is empty, and so it is useless to substitute the abstraction because no $\beta$-redexes will be obtained. Thus the machine switches to the backtracking phase ($\skback$), whose aim is to undo the frame to obtain a new subterm to explore.
  \item \emph{$\toucmachasubfour$}: it is the inverse of $\toucmachasubtwo$, it puts back on the code an abstraction that was previously stored in the frame.
  \item \emph{$\toucmachasubfive$}: backtracking from the evaluation of an argument $\codetwo$, it restores the application $\code\codetwo$ and the stack $\stack$ that were previously stored in the frame.
  \item \emph{$\toucmachasubsix$}: backtracking from the evaluation of the left subterm $\code$ of an application $\code\codetwo$, the machine starts evaluating the right subterm (by switching to the evaluation phase $\skeval$) with an empty stack $\stempty$, storing on the frame the pair $\dentry\code\stack$ of the left subterm and the previous stack $\stack$.

\end{itemize}

Then the output transitions:
\begin{itemize}
 \item \emph{$\toucmachoutone$}: the machine finds a non-erasing $\beta$-redex, namely $(\la\var\code)\codetwo$ (by hypothesis $\var \in \fv\code$, and thus outputs a label saying that it requires only one substitution step (namely substituting the term the machine was executed on) to eventually find a $\beta$-redex.
 \item \emph{$\toucmachouttwo$}: the machine finds a variable $\var$ whose associated entry $\esublabe\var\code{\labred n}$ in the environment is labeled with $\labred n$, and so outputs a label saying that it takes $n+1$ substitution steps to eventually find a $\beta$-redex ($n$ plus 1 for the term the machine was executed on).  
  \item \emph{$\toucmachoutthree$}: the machine finds a variable $\var$ whose associated entry $\esublabe\var\code\lababs$ in the environment is labeled with $\lababs$, so $\code$ is an abstraction, and the stack is non-empty. Since substituting the abstraction will create a $\beta$-redex, the machine outputs a label saying that it takes two substitution steps to obtain a $\beta$-redex, one for the term the machine was executed on and one for the abstraction $\code$.
    \item \emph{$\toucmachoutsix$}: the machine went through the whole code $\codetwo$ and found no redexes. Then $\codetwo$ is normal and together with the erasing abstraction $\la\var\code$ in the frame it forms an erasing redex ready to fire, and so the output is a label saying that it requires only one substitution step (namely substituting the term the machine was executed on) to eventually find a $\beta$-redex.
  \item \emph{$\toucmachoutfour$}: the machine went through the whole term, that is an application, and found no redex, nor any redex that can be obtained by substituting from the environment. Thus that term is neutral and so the machine outputs the corresponding label.
  \item \emph{$\toucmachoutfive$}: as for the previous transition, except that the term is an abstraction, and so the output is the $\lababs$ label.

\end{itemize}

The fact that commutative transitions only walk through the code, without changing anything, is formalized by the following lemma, that is crucial for the proof of correctness of the \ucam (forthcoming \refthm{ucam-propert}).

\begin{lemma}[Commutative Transparency\withproofs{, Proof at P. \pageref{ssect:comm-transp}}]
  \label{l:comm-transp} % \reflemmap{comm-transp}{trunk}
	Let $\state = \mac{\skphase}{\skframe}{\code}{\stack}{\genv} \tomachhole{\csym_{1,2,3,4,5,6,7}}\mac{\skphase'}{\skframe'}{\code'}{\stack'}{\genv} =\statetwo$. Then
	\begin{enumerate}
	 \item \emph{Decoding Without Unfolding}: \label{p:comm-transp-trunk} $\decodep\mframe{\decodestack{\code}{\stack}} = \decodep{\mframe'}{\decodestack{\code'}{\stack'}}$, and
	 \item \emph{Decoding With Unfolding}: \label{p:comm-transp-state}
	 $\decode\state = \decode\statetwo$.
	\end{enumerate}	
 \end{lemma}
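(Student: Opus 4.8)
The plan is to reduce Point~2 to Point~1 for free, then prove Point~1 by a case analysis on the commutative transition, using a few equalities read off the decoding of \reffig{decoding}. For the reduction: no commutative transition writes on the global environment (the \ucam only reads $\genv$), so $\state$ and $\statetwo$ carry the same $\genv$; by the last clause of \reffig{decoding}, $\decode\state = \relunf{\decodep\skframe{\decodestack\code\stack}}\genv$ and, with the same $\genv$, $\decode\statetwo = \relunf{\decodep{\skframe'}{\decodestack{\code'}{\stack'}}}\genv$. Applying the unfolding operation $\relunf{\cdot}\genv$ (a function of $\l$-terms) to the two equal codes supplied by Point~1 then gives $\decode\state = \decode\statetwo$. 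So the content is entirely in Point~1, \ie in the equality $\decodep\skframe{\decodestack\code\stack} = \decodep{\skframe'}{\decodestack{\code'}{\stack'}}$.

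For Point~1, I would first extract from \reffig{decoding} the following immediate consequences of the decoding clauses: $\decodestack{t}{\stempty} = t$; $\decodestack{t}{\codetwo\cons\stack} = \decodestack{t\codetwo}{\stack}$; $\decodep{\skframe\cons\var}{t} = \decodep\skframe{\la\var t}$; $\decodep{\skframe\cons\dentry\code\stack}{t} = \decodep\skframe{\decodestack{\code\, t}{\stack}}$; and $\decodep{\skframe\cons\gcentry{\la\var\code}{\stack}}{t} = \decodep\skframe{\decodestack{(\la\var\code)\, t}{\stack}}$. Then I would inspect the seven transitions. Transition $\tomachhole{\csym_3}$ changes only the phase, leaving $\skframe,\code,\stack$ untouched, so there is nothing to prove. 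For each of the remaining six, one side of the desired equality rewrites to the other by a single use of one of the equalities above; for instance, from $\mac{\skeval}{\skframe}{\code\codetwo}{\stack}{\genv}$ to $\mac{\skeval}{\skframe}{\code}{\codetwo\cons\stack}{\genv}$ ($\tomachhole{\csym_1}$) the target decodes to $\decodep\skframe{\decodestack\code{\codetwo\cons\stack}} = \decodep\skframe{\decodestack{\code\codetwo}\stack}$, which is the decoding of the source, and from $\mac{\skeval}{\skframe}{\la\var\code}{\codetwo\cons\stack}{\genv}$ to $\mac{\skeval}{\skframe\cons\gcentry{\la\var\code}{\stack}}{\codetwo}{\stempty}{\genv}$ (the new transition $\tomachhole{\csym_7}$) both states decode to $\decodep\skframe{\decodestack{(\la\var\code)\codetwo}\stack}$. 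The cases $\tomachhole{\csym_2}$, $\tomachhole{\csym_4}$ (entering/leaving an abstraction, via the $\skframe\cons\var$ clause) and $\tomachhole{\csym_5}$, $\tomachhole{\csym_6}$ (leaving/entering an argument, via the $\skframe\cons\dentry\code\stack$ clause) are symmetric.

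I do not expect any genuine obstacle here: the statement is a bookkeeping fact about the decoding, with no induction and no invariant to maintain, and the two parts of the lemma are really the same computation (with or without the final unfolding). The only points calling for care are the postfix reading of the stack-decoding clauses --- pushing $\codetwo$ on the stack corresponds to filling the stack context with $\ctxhole\codetwo$, \ie hanging the application on the outside in the right place --- and, in the backtracking transitions, keeping track of which data structure re-absorbs the current application or abstraction. The side conditions of the transitions (the shape of $\genv(\var)$, or $\var\notin\fv\code$ in $\csym_7$) and well-namedness are not needed for this lemma; they come into play only for the transitions that actually rename or substitute.
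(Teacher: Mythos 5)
Your proposal is correct and matches the paper's own argument: a direct case analysis on the seven commutative transitions, checking via the decoding clauses of the figure that source and target decode to the same term without unfolding, with the unfolded equality of Point~2 following immediately because the environment is untouched. The only nitpick is that some cases (e.g.\ the new erasing-redex transition) need a couple of the listed decoding equalities rather than a single one, but your explicit computation for that case is right, so this is purely a matter of phrasing.
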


For the analysis of the properties of the \ucam we need a notion of well-labeled environment, \ie of environment where the labels are consistent with their intended meaning. It is a technical notion also providing enough information to perform the complexity analysis, later on. Moreover, it includes two structural properties of environments: 1) in $\esublab\var\code$ the code $\code$ cannot be a variable, and 2) there cannot be two entries associated to the same variables.

\begin{definition}[Well-Labeled Environments]
\label{def:well-lab-env}
  \emph{Well-labeled environments} $\genv$ are defined by		 
  \begin{enumerate}
    \item \emph{Empty}: $\stempty$ is well-labeled;
    \item \emph{Inductive}: $\esublab\var\code \cons\genvtwo$ is well-labeled if $\genvtwo$ is well-labeled, $\var$ is fresh with respect to $\code$ and $\genvtwo$, and 
    \begin{enumerate}
     \item \emph{Abstractions}: 
      if $\lab = \lababs$ then $\code$ and $\relunf\code\genvtwo$ are normal abstractions; 
						
    \item \emph{Neutral Terms}:
      if $\lab = \labneu$ then $\code$ is an application and $\relunf\code{\genvtwo}$ is neutral.
      
          \item \emph{Redexes}: if $\lab = \labred n$ then $\code$ is not a variable, $\relunf\code{\genvtwo}$ contains a $\beta$-redex. Moreover, there is a \perp-context $\ctx$ such that $\code = \ctxp\codetwo$ and 
          \begin{itemize}
          \item if $n= 1$ then $\codetwo$ is a $\rtobperp$-redex,
          \item if $n >1 $ then $\codetwo = \var$ and $\genvtwo = \genvthree\cons\esublab\vartwo\codetwo\cons\genvfour$ with 
          \begin{itemize}
          \item if $n> 2$ then $\lab = \labred {n-1}$
          \item if $n = 2$ then  $\lab = \labred {1}$ or ($\lab = \lababs$ and $\ctx$ is applicative).
          \end{itemize}
          \end{itemize}
    \end{enumerate}    
   \end{enumerate}
\end{definition}

% \begin{remark}
%   \label{rem:well-lab-length}
%  Note that by the definition it immediately follows that if $\genv = \genvtwo \cons \esublabe\var\code{\labred n} \cons \genvthree$ is well-labeled then the length of $\genvthree$, and thus of $\genv$, is at least $n$. This fact is used in the proof of \refthmp{exp-linearity}{local-bound}.
% \end{remark}

The study of the \ucam requires some technical invariants proved in the appendix (page \pageref{ssect:ucam-invariants}). The next theorem provides the main properties of the \ucam, \ie that when executed on $\code$ and $\genv$ it provides a label $\lab$ to extend $\genv$ with a consistent entry for $\code$ (\ie such that $\esublab\var\code\cons\genv$ is well-labeled), and that such an execution takes time linear in the size of $\code$. 

Let us explain an important point about the complexity of the machine transition. Variables are meant to be implemented as memory locations and variable occurrences as pointers to those location. Therefore, the global environment $\genv$ is a store and can be accessed \emph{randomly}, that is, with no need to go through it sequentially. With this hypothesis on the representation of terms, see \cite{DBLP:conf/ppdp/AccattoliB17} for an actual implementation, all the transitions of the \ucam but $\toucmachasubseven$ and $\toucmachoutone$ can be implemented in constant time. The check $\var\in/\notin \fv\code$ in $\toucmachasubseven$ and $\toucmachoutone$ at first sight requires time proportional to the size of the initial term, but in fact it can be implemented in $O(1)$ if one assumes a stronger hypothesis on the representation of terms: a variable is a data-type with a memory location \emph{plus} pointers to its occurrences. Then all transitions can be implemented in $O(1)$. Anyway, this hypothesis is not essential for the implementation to be reasonable. If dropped, the \emph{subterm property} (\reflemmap{umam-quant-invs}{subterm}) guarantees that the size of terms does not explode and that the check $\var\in/\notin \fv\code$ can be done in reasonable time. The hypothesis, however, simplifies considerably the analysis.

\begin{theorem}[\ucam Properties\withproofs{, Proof at Page \pageref{ssect:ucam-propert}}]
\label{thm:ucam-propert} % \refthmp{ucam-propert}{compl}
Let $\code$ be a code and $\genv$ a global environment. 
\begin{enumerate}
	\item \emph{Determinism and Progress}: the \ucam is deterministic and there always is a transition that applies;
	\item \label{p:ucam-propert-compl}
	\emph{Termination and Complexity}: the execution of the \ucam on $\code$ and $\genv$ always terminates, taking $O(\size{\code})$ steps, moreover 
	\item \label{p:ucam-propert-correc}
	\emph{Correctness}: if $\genv$ is well-labeled, $\var$ is fresh with respect to $\genv$ and $\code$, and $\lab$ is the output then $\esublab\var\code\cons\genv$ is well-labeled. 
\end{enumerate}
\end{theorem}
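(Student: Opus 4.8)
The plan is to prove the three items of \refthm{ucam-propert} essentially separately, though the analysis of correctness will reuse quantities tracked for the complexity bound.

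\medskip

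\emph{Determinism and Progress.} First I would observe that the transitions of the \ucam are pairwise mutually exclusive: they are discriminated by the phase ($\skeval$ vs.\ $\skback$), by the shape of the code (variable, application, abstraction), by whether the stack is empty, and—for abstractions applied to a non-empty stack—by the side condition $\var\in\fv\code$ vs.\ $\var\notin\fv\code$, and—for variables—by the four mutually exclusive cases on $\genv(\var)$ ($\bot$, $\labneu$, $\lababs$ with the further split on the stack being empty or not, and $\labred n$). A straightforward case analysis on $(\skphase,\code,\stack,\genv(\var))$ then shows that exactly one transition applies to any non-output state, which gives both determinism and progress (output states are terminal by design).

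\medskip

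\emph{Termination and Complexity.} Here I would exhibit a size measure on states that strictly decreases along commutative transitions and is bounded by $O(\size\code)$ on the initial state $\mac{\skeval}{\stempty}{\code}{\stempty}{\genv}$. Since the \ucam never writes to $\genv$ and never duplicates code (as stressed in the text after \reffig{UCAM}), the total amount of ``syntax'' in the state—i.e.\ the code plus the codes buried in the stack and in the frame items—is invariant, equal to $\size\code$ up to a constant. The natural measure is a pair (number of unexplored-in-evaluation-phase constructors, something for backtracking) ordered lexicographically; more simply, one can weight an $\skeval$-position more than a $\skback$-position and check, transition by transition (there are only the seven commutative ones to inspect, the output ones being final), that the measure drops. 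The $\csym_7$ and $\csym_2$ transitions push onto the frame but simultaneously move strictly inward into the code in evaluation phase; the $\csym_{4,5,6}$ transitions shrink the frame; $\csym_1$ moves inward; $\csym_3$ switches $\skeval\to\skback$, which must be counted as a decrease. Since each step does $O(1)$ work under the random-access hypothesis on environments (the only delicate point being $\var\in/\notin\fv\code$ in $\csym_7$, handled by the extra pointer structure or, failing that, by the subterm property \reflemmap{umam-quant-invs}{subterm} as the text notes), the $O(\size\code)$ step bound yields an $O(\size\code)$ time bound.

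\medskip

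\emph{Correctness.} This is the main obstacle and the part that needs the invariants from the appendix (page \pageref{ssect:ucam-invariants}). Given $\genv$ well-labeled and $\var$ fresh, I must show that if the \ucam outputs $\lab$ on $(\code,\genv)$ then $\esublab\var\code\cons\genv$ is well-labeled, i.e.\ the appropriate clause of \refdef{well-lab-env} holds for $\code$, $\relunf\code\genv$ and $\lab$. The strategy is to strengthen the statement to a claim about arbitrary reachable states, of the form: for every state $\statetwo$ reachable from $\mac{\skeval}{\stempty}{\code}{\stempty}{\genv}$, the decoded context $\stctx\statetwo$ is a \perp-context, and the code-in-context appropriately reflects what has been checked so far (the frame items record that the parts left behind are neutral, abstractions in empty-stack position, or pending erasing redexes). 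Then each output transition is matched against the clause it is supposed to justify:
for $\osym_4$/$\osym_5$ the backtracking-to-empty-frame invariant says $\relunf{\code\codetwo}\genv$ (resp.\ $\relunf{\la\var\code}\genv$) is neutral (resp.\ a normal abstraction)—exactly clauses (b) and (a);
for $\osym_1$ and $\osym_6$ one exhibits $\ctx=\ctxhole$ (resp.\ the \perp-context obtained from the stored erasing-redex frame item, using that $\codetwo$ has been fully checked hence is normal, via \refl{comm-transp} and the reachability invariant) so that $\code$ is literally a $\rtobperp$-redex, giving the $n=1$ case;
for $\osym_2$ and $\osym_3$ one uses the hypothesis that $\genv$ is well-labeled: the entry $\esublabe\var\code{\labred n}$ (resp.\ $\esublabe\var\code\lababs$) already comes with its own \perp-context witness, and one checks that incrementing to $\labred{n+1}$ (resp.\ producing $\labred 2$) satisfies the nested conditions of the Redexes clause—here the ``$\lab=\lababs$ and $\ctx$ applicative'' sub-case of $n=2$ is exactly what $\osym_3$ produces, since the non-empty stack makes the surrounding context applicative. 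The bookkeeping with the integer $n$—verifying the ``if $n>2$ then $\labred{n-1}$, if $n=2$ then $\labred 1$ or ($\lababs$ and applicative)'' clause is preserved—is the fiddly core of the argument, and I would isolate it as a small lemma matching \refdef{well-lab-env} case by case, leaning on \reflemma{comm-transp} to transport the decoding through the commutative run that precedes each output.
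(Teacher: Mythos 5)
Your overall decomposition is exactly the paper's: determinism and progress by inspection of the mutually exclusive side conditions, a linear bound on the commutative transitions for termination and complexity, and correctness by a case analysis on the six output transitions against \refdef{well-lab-env}, powered by the reachable-state invariants (decoding and normal form, \reflemma{ucam-invariants}) and by commutative transparency (\reflemma{comm-transp}). The genuine problem is in your treatment of $\toucmachoutone$ (and, similarly, $\toucmachoutsix$): the witness context is \emph{not} $\ctxhole$. The machine can fire $\toucmachoutone$ arbitrarily deep inside the checked code: running the \ucam on $\vartwo((\la\var\var)\codethree)$ with $\genv(\vartwo)=\bot$, the output $\labred 1$ happens with the non-empty frame $\dentry\vartwo\stempty$, and the decomposition required by the Redexes clause is $\vartwo\ctxholep{(\la\var\var)\codethree}$, not a decomposition with an empty context. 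So the witness must be the context read off the \emph{whole} frame and the remaining stack (for $\toucmachoutsix$ too, not just the topmost erasing-redex item), obtained via \reflemmap{comm-transp}{trunk}, which guarantees that this context plugged with the current redex is syntactically the initial code. Moreover, showing that this context is \perp is not a direct consequence of the invariant you state: the decoding invariant (\reflemmap{ucam-invariants}{decoding}) concerns the \emph{unfolded} context $\stctx\state$, whereas \refdef{well-lab-env} asks for the syntactic, non-unfolded context to be \perp; the bridge is the unfolding-removal property (\reflemmap{lo-properties}{unf-rem}). These ingredients are available in your own setup, but as written the $\toucmachoutone$/$\toucmachoutsix$ cases would fail.

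A smaller issue concerns termination: the simple weighting ``an $\skeval$-position counts more than a $\skback$-position'' does not decrease on $\toucmachasubsix$, where nothing moves from $\skeval$ to $\skback$: the backtracked code is parked inside a frame item while a stack item (already counted as unexplored) becomes the evaluated code, and a new frame item is created. The robust argument, which is the paper's, is the per-constructor accounting: each constructor of $\code$ can be charged by at most a bounded number of commutative transitions (its evaluation pass, its backtracking pass, and for applications the switch to the argument via $\toucmachasubsix$), since the \ucam never duplicates code; this yields $\sizecom{} = O(\size\code)$ and, with the constant-time hypotheses on environment access and on the $\var\in/\notin\fv\code$ test, the $O(\size\code)$ cost. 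Your measure can be repaired along these lines, but not with the eval-versus-back weighting alone.
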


% !TEX root = main.tex
\section{The Maximal Milner Abstract Machine}
\label{sect:SMAM}
% !TEX root = main.tex
\begin{figure}[t]
	\begin{center}
	  \ovalbox{
		\scalemath{0.94}{
		\setlength{\arraycolsep}{0.5em}
		\begin{array}{c|c|c|c|clc|c|c|c|c}
		\mbox{Frame} & \mbox{Code} & \mbox{Stack} & \mbox{Env} & \mbox{Ph}
		&&
		\mbox{Frame} & \mbox{Code} & \mbox{Stack} & \mbox{Env} & \mbox{Ph}\\
			\maca{ \skeval }{ \skframe }{ \code\codetwo }{ \stack }{ \genv } &
			\tomachasubone &
			\maca{ \skeval }{ \skframe }{ \code }{ \codetwo\cons\stack }{ \genv }
		\\
			\maca{ \skeval }{ \skframe }{ \la\var\code }{ \vartwo\cons\stack }{ \genv } &
			\tomachmone &
			\maca{ \skeval }{ \skframe }{ \code\isub\var\vartwo }{ \stack }{ \genv }
		\\
			\maca{ \skeval }{ \skframe }{ \la\var\code }{ \codetwo\cons\stack }{ \genv } &
			\tomachmtwo &
			\maca{ \skeval }{ \skframe }{ \code }{ \stack }{ \esublab{\var}{\codetwo}\cons\genv }
		\\
			\multicolumn{11}{r}{\text{ if $\var \in \fv\code$ and $\codetwo$ is not a variable and $\lab$ is the output of the \ucam on $\codetwo$ and $\genv$}}
		\\
			\maca{ \skeval }{ \skframe }{ \la\var\code }{ \codetwo\cons\stack }{ \genv } &
			\tomachasubseven &
			\maca{ \skeval }{ \skframe \cons \gcentry{ \la\var\code }\pi}{ \codetwo }{ \stempty }{ \genv }
		\\
			\multicolumn{11}{r}{\text{ if $\var \notin \fv\code$ and $\codetwo$ is not a variable}}
		\\
			\maca{ \skeval }{ \skframe }{ \l\var.\code }{ \stempty }{ \genv } &
			\tomachasubtwo &
			\maca{ \skeval }{ \skframe\cons\var }{ \code }{ \stempty }{ \genv }
		\\
			\maca{ \skeval }{ \skframe }{ \var }{ \stack }{ \genv } &
			\tomachered &
			\maca{ \skeval }{ \skframe }{ \rename{\code} }{ \stack }{ \genv } \\
			\multicolumn{11}{r}{\text{ if $\genv(\var) = \esublabe\var\code{\labred n}$}}
		\\
			\maca{ \skeval }{ \skframe }{ \var }{ \codetwo\cons\stack }{ \genv } &
			\tomacheabs &
			\maca{ \skeval }{ \skframe }{ \rename{\code} }{ \codetwo\cons\stack }{ \genv } \\
			\multicolumn{11}{r}{\text{ if $\genv(\var) = \esublabe\var\code\lababs$}}

		\\
			\maca{ \skeval }{ \skframe }{ \var }{ \stack }{ \genv } &
			\tomachasubthree &
			\maca{ \skback }{ \skframe }{ \var }{ \stack }{ \genv }
			\\ \multicolumn{11}{r}{\text{ if $\genv(\var) = \bot$ or $\genv(\var) = \esublabe\var\code\labneu$ or ($\genv(\var) = \esublabe\var\code\lababs$ and $\stack = \stempty$)}}
		\\
		%    \skback & \var\cons\skframe & \code & \stempty & \genv_1\cons\esub{\var}{\skboundvar}\cons\genv_2 &
		%    \tomachasubfour &
		%    \skback & \skframe & \l\var.\code & \stempty & \genv_2
		%  \\
			\maca{ \skback }{ \skframe\cons\var }{ \code }{ \stempty }{ \genv } &
			\tomachasubfour &
			\maca{ \skback }{ \skframe }{ \l\var.\code }{ \stempty }{ \genv }
		\\

			\maca{ \skback }{ \skframe\cons \dentry\code\stack  }{ \codetwo }{ \stempty }{ \genv } &
			\tomachasubfive &
			\maca{ \skback }{ \skframe }{ \code\codetwo }{ \stack }{ \genv }
		\\
			\maca{ \skback }{ \skframe\cons \gcentry{ \la\var\code }\stack  }{ \codetwo }{ \stempty }{ \genv } &
			\tomachmthree &
			\maca{ \skeval }{ \skframe }{ \code }{ \stack }{ \genv }
		\\
			\maca{ \skback }{ \skframe }{ \code }{ \codetwo\cons\stack }{ \genv } &
			\tomachasubsix &
			\maca{ \skeval }{ \skframe \cons \dentry\code\stack }{ \codetwo }{ \stempty }{ \genv }
				
		\end{array}
		}}
		
		\scriptsize$\rename{\code}$ is any code $\alpha$-equivalent to $\code$ such that it is well-named and its bound names are fresh with respect to those in the other machine components.
	\end{center}
	\caption{The Maximum (Useful) Milner Abstract Machine (\maxmam).}
	\label{fig:UMAM}
\end{figure}

The Maximal Milner Abstract Machine, or \maxmam, is defined in \reffig{UMAM} and it is a small variation over the Useful MAM of \cite{DBLP:conf/wollic/Accattoli16} (as for the \ucam, the difference is in the two new transitions, $\tomachasubseven$ and $\tomachmthree$, plus the side condition \emph{if $\var\in\fv\code$} in $\tomachmtwo$). It is very similar to the \ucam, in particular it has exactly the same commutative transitions, and the same organization in evaluating and backtracking phases. The difference with respect to the \ucam is that the output transitions are replaced by micro-step computational rules that reduce $\beta$-redexes and implement useful substitutions. Let us explain them:
\begin{varitemize}
 \item \emph{Multiplicative Transition $\tomachmone$}: when the argument of the $\beta$-redex $(\la\var\code)\vartwo$ is a variable $\vartwo$ then it is immediately substituted in $\code$. This happens because 1) such substitution are not costly (by the subterm invariant, \reflemma{umam-quant-invs} in the Appendix, their cost is bound by the size of $\code$ that is bound by the size of the initial term); 2) because in this way the environment stays compact; 3) because in this way the labels for useful sharing are slightly simpler.

 \item \emph{Multiplicative Transition $\tomachmtwo$}: since the argument $\codetwo$ is not a variable and the redex is not erasing ($\var \in \fv\code$ by hypothesis) then the redex is fired by adding the entry $\esublab\var\codetwo$ to the environment, with $\lab$ obtained by running the \ucam on $\codetwo$ and $\genv$.

 \item \emph{Multiplicative Transition $\tomachmthree$}: an invariant of the machine is that when it backtracks (phase $\skback$) the code is normal (see \reflemmap{umam-invariants}{nf-code} in the Appendix, page \pageref{l:umam-invariants}). Moreover, every abstraction $\la\var\code$ in an erasing redex entry $\gcentry{\la\var\code}\stack$ in the frame is such that $\var \notin \fv\code$ (\reflemmap{umam-invariants}{erasing}). Then in $\tomachmthree$ the code $\codetwo$ is normal and together with $\la\var\code$ it forms an erasing $\rtobperp$ redex. Correctly, the machine throws away $\codetwo$ and switches to evaluating $\code$ (note the change of phase from $\skback$ to $\skeval$).

 \item \emph{Exponential Transition $\tomachered$}: the environment entry associated to $\var$ is labeled with $\labred n$ thus it is useful to substitute $\code$. The idea is that in at most $n$ additional substitution steps (shuffled with commutative steps) a $\beta$-redex will be obtained. To avoid variable clashes the substitution $\alpha$-renames $\code$.
 \item \emph{Exponential Transition $\tomacheabs$}: the environment associates an abstraction to $\var$ and the stack is non empty, so it is useful to substitute the abstraction (again, $\alpha$-renaming to avoid variable clashes). Note that if the stack is empty the machine rather backtracks using $\tomachasubthree$.
\end{varitemize}

The \maxmam starts executions on \emph{initial states} of the form $\mac{\skeval}{\stempty}{\code}{\stempty}{\stempty}$, where $\code$ is such that any two variables (bound or free) have distinct names, and any other component is empty. A state $\state$ is \emph{reachable} if there are an initial state $\statetwo$ and a \maxmam execution $\exec:\statetwo \tomach^*\state$, and it is \emph{final} if no transitions apply.

The theorem of correctness and completeness of the machine with respect to $\toperp$ follows. It rests on a technical development that is in the appendix, starting at page \pageref{ssect:umam-invariants}. The bisimulation is \emph{weak} because transitions other than $\tomachm$ are invisible on the $\l$-calculus. For a machine execution $\exec$ we denote with $\size\exec$ (resp. $\sizex\exec$) the number of transitions (resp. $\mathtt{x}$-transitions for $\mathtt{x} \in \set{\msym, \esym, \csym, \ldots}$) in $\exec$.

\begin{theorem}[Weak Bisimulation, Proof at Page \pageref{ssect:weak-bis}]
\label{thm:weak-bis} % \refthm{weak-bis}{rev-sim}
  Let $\state$ be an initial \maxmam state of code $\code$.
  
 \begin{enumerate}
  \item \emph{Simulation:} for every execution $\exec:\state\tomach^*\statetwo$ there exists a derivation $\deriv \colon \decode\state \toperp^*\decode\statetwo$ such that $\size\deriv = \sizem\exec$;
  
  \item \label{p:weak-bis-rev-sim} \emph{Reverse Simulation:} for every derivation $\deriv \colon \code\toperp^*\tmtwo$ there is an execution $\exec:\state\tomach^*\statetwo$ such that $\decode\statetwo = \tmtwo$ and $\size{\deriv} = \sizem\exec$.
 \end{enumerate}
\end{theorem}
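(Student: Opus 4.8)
The plan is to prove both directions by a routine induction on lengths, resting on three ``one-step'' properties of the transitions of the \maxmam (\reffig{UMAM}) together with the invariants of reachable states proved in the appendix. The three one-step properties are: \emph{(i) transparency of the overhead} --- if $\state\tomache\statetwo$ or $\state\tomachc\statetwo$ then $\decode\state=\decode\statetwo$; for commutative transitions this is \reflemmap{comm-transp}{state}, and for the exponential transitions $\tomachered$ and $\tomacheabs$ it follows from the grafting/unfolding lemmas, since substituting a renamed copy of $\genv(\var)$ for an occurrence of $\var$ does not change the unfolding up to $\alpha$; \emph{(ii) one-step correctness of the multiplicatives} --- if $\state$ is reachable and $\state\tomachm\statetwo$ then $\decode\state\toperp\decode\statetwo$, contracting exactly one redex; \emph{(iii) progress} --- a reachable state whose decoding is not $\toperp$-normal is not final (equivalently, a final state decodes to a $\toperp$-normal term). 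I will also use that an initial state of code $\code$ decodes to $\code$ (immediate from \reffig{decoding}: all data structures and $\genv$ are empty), the determinism of $\toperp$ (\reflemmap{perp-properties}{determinism}), and the \emph{termination of the overhead}: from a reachable state there is no infinite run consisting only of $\tomache$ and $\tomachc$ transitions --- a consequence of the label discipline of \refdef{well-lab-env}, via \reflemma{exp-decr} for the exponential steps and a size measure on the frame, stack and code for the commutative ones.

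For \emph{Simulation} I would induct on $\size\exec$. The empty execution is matched by the empty derivation. Otherwise write $\exec$ as $\exec'\colon\state\tomach^*\state'$ followed by one transition $\state'\tomach\statetwo$; the induction hypothesis gives $\deriv'\colon\decode\state\toperp^*\decode{\state'}$ with $\size{\deriv'}=\sizem{\exec'}$. If the last transition is exponential or commutative then $\decode{\state'}=\decode\statetwo$ by (i), so $\deriv'$ already witnesses $\decode\state\toperp^*\decode\statetwo$ and $\sizem\exec=\sizem{\exec'}$; if it is multiplicative then $\decode{\state'}\toperp\decode\statetwo$ by (ii), so $\deriv'$ extends by one step and $\sizem\exec=\sizem{\exec'}+1$. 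In both cases $\size\deriv=\sizem\exec$.

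For \emph{Reverse Simulation} I would induct on $\size\deriv$. The empty derivation is matched by the empty execution, using $\decode\state=\code$. Otherwise write $\deriv$ as $\deriv'\colon\code\toperp^*\tmthree$ followed by $\tmthree\toperp\tmtwo$; the induction hypothesis gives $\exec'\colon\state\tomach^*\state'$ with $\decode{\state'}=\tmthree$ and $\sizem{\exec'}=\size{\deriv'}$. Since $\tmthree$ is not $\toperp$-normal, $\state'$ is not final by (iii). I then let the machine run from $\state'$, taking a multiplicative transition as soon as one applies; while only overhead transitions apply, the decoding stays equal to $\tmthree$ by (i), hence the state stays non-final by (iii), and by termination of the overhead this can last only finitely many steps. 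Thus the run reaches a state $\state''$ with $\decode{\state''}=\tmthree$ from which a multiplicative transition $\state''\tomachm\statetwo$ fires; by (ii), $\tmthree=\decode{\state''}\toperp\decode\statetwo$, and by determinism of $\toperp$ this forces $\decode\statetwo=\tmtwo$. Concatenating $\exec'$, the intervening overhead run, and this multiplicative step gives $\exec\colon\state\tomach^*\statetwo$ with exactly one more $\tomachm$-step than $\exec'$, so $\sizem\exec=\size{\deriv'}+1=\size\deriv$.

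The inductive skeleton above is routine; the real work is in (ii) and in the termination of the overhead, and (ii) is the main obstacle. Establishing it means showing, through the structural invariants of reachable states, that $\stctx\state$ always decodes to a \perp-context and that the redex contracted by $\tomachmone$, $\tomachmtwo$ or the erasing $\tomachmthree$ is \emph{exactly} the $\rtobperp$-redex that the maximal strategy selects inside $\stctx\state$. The delicate part is the perpetual behaviour on erasing redexes: the invariants must guarantee that whenever the machine fires an erasing redex --- in particular in $\tomachmone$ with $\var\notin\fv\code$ --- its argument unfolds to a $\toperp$-normal term, so that the ``descend into the argument first'' policy of $\toperp$, mirrored by $\tomachasubseven$ together with $\tomachmthree$ and guarded by the backtracking normal-form invariant \reflemmap{umam-invariants}{nf-code} and the erasing invariant \reflemmap{umam-invariants}{erasing}, is never skipped. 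Termination of the overhead is the other sensitive point, since exponential steps enlarge the code; it rests on \reflemma{exp-decr}, which shows that each exponential step strictly decreases the ``distance-to-redex'' counter $n$ in the $\labred n$ labels, bounding the exponential steps between two consecutive multiplicative ones.
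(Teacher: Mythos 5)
Your proposal is correct and follows essentially the same route as the paper: both directions by induction on length, using the one-step weak simulation properties (your (i)--(ii)), and for reverse simulation the progress property, determinism of $\toperp$, and termination of the commutative/exponential overhead. The paper merely packages your ``run until a multiplicative fires'' step as the $\csym,\esym$-normal form of the state and defers overhead termination to the later quantitative theorems, exactly as you defer it to \reflemma{exp-decr} and a traversal measure.
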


% !TEX root = main.tex
\section{Quantitative Analysis}
\label{sect:compl-anal}

The complexity analysis of the \maxmam is omitted because it follows exactly, by changing only minimal details, the one for the Useful MAM in \cite{DBLP:conf/wollic/Accattoli16}. All the details can be found in the Appendix, starting from page \pageref{ssect:quant-analysis}.

Let us anyway provide the schema of the analysis. First of all one proves a subterm invariant, proving that most codes in the state of the \maxmam are subcodes of the initial code. Then the proof of the polynomial bound of the overhead is in three steps. 
\begin{varenumerate}
\item \emph{Exponential vs Multiplicative Transitions}: we bound the number $\sizee\exec$ of exponential transitions of an execution $\exec$ using the number $\sizem\exec$ of multiplicative transitions of $\exec$, that by \refthm{weak-bis} corresponds to the number of maximal $\beta$-steps on the $\l$-calculus. The bound is quadratic.

\item \emph{Commutative vs Exponential Transitions}: we bound the number $\sizecom\exec$ of commutative transitions of $\exec$ by using the number of exponential transitions and the size of the initial term. The bound is linear in both quantities.

\item \emph{Global bound}: we multiply the number of each kind of transition for the cost of that kind (everything is constant time but for exponential  transitions, that are linear in the size of the initial term), and then sum over the kind of transitions.
\end{varenumerate}

Concretely, one obtains the following theorem.

\begin{theorem}[\maxmam Overhead Bound, Proof at Page \pageref{thm:final-thm}]
\label{thm:eglamour-overhead-bound}
  Let $\deriv:\tm \toperp^* \tmtwo$ be a maximal derivation and $\exec$ be the \maxmam execution simulating $\deriv$ given by \refthmp{weak-bis}{rev-sim}. Then:
  \begin{enumerate}
   \item \emph{Length}: $\size\exec = O((1+\size{\deriv}^2)\cdot{\size\tm})$.
   
   \item \emph{Cost}: $\exec$ is implementable on RAM in $O((1+\size{\deriv}^2)\cdot{\size\tm})$ steps.
  \end{enumerate}
\end{theorem}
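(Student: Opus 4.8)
The plan is to follow, essentially verbatim, the three-step schema of \cite{DBLP:conf/wollic/Accattoli16} for the Useful MAM, checking only that the two new transitions $\tomachmthree$ and $\tomachasubseven$ (and the side condition on $\tomachmtwo$) do not disturb it. The backbone is a \emph{subterm invariant}: every code occurring in a reachable \maxmam state---in the code component, inside a stack, inside a frame item, or inside an environment entry---is a well-named copy of a subterm of the initial code $\code$. I would prove it by induction on the execution; the only transitions introducing "new" codes are $\tomachmtwo$ and $\tomachasubseven$, which move a subterm of the current code into $\genv$, resp.\ into the frame; $\tomachmone$, whose substitution $\code\isub\var\vartwo$ is size-preserving; and $\tomachered,\tomacheabs$, which $\alpha$-rename an environment entry that is already a subterm of $\code$. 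The consequence is that every code in a reachable state has size $\le\size\tm$, so a single exponential transition costs $O(\size\tm)$ and a single multiplicative transition costs $O(\size\tm)$ (the renaming in $\tomachmone$; for $\tomachmtwo$, the \ucam run, which by \refthm{ucam-propert} takes $O(\size{\codetwo}) = O(\size\tm)$ steps).

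\textbf{Step 1 (exponential vs.\ multiplicative).} By \refthm{weak-bis}, $\sizem\exec = \size\deriv$, counting all three multiplicative transitions $\tomachmone,\tomachmtwo,\tomachmthree$, each decoding to one $\toperp$-step. I would then introduce a potential on states---essentially the sum of the label-numbers $n$ of the $\labred n$ entries of $\genv$---which, using the well-labeled invariant (\refdef{well-lab-env}) and \reflemma{exp-decr}, is non-increasing along commutative transitions, strictly decreasing along exponential transitions, and increased by at most $|\genv|\le\sizem\exec$ units by each multiplicative transition. Summing these increments over the at most $\sizem\exec$ multiplicative transitions bounds the total potential consumed, whence $\sizee\exec = O(\sizem\exec^2) = O(\size\deriv^2)$. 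This is the step where \emph{useful sharing} is essential---the condition "$\stack\neq\stempty$" on $\tomacheabs$ and the erasing-redex handling of $\tomachasubseven$/$\tomachmthree$ are exactly what keep the labels consistent and prevent a size explosion---and I expect it to be the main obstacle; the rest is bookkeeping.

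\textbf{Step 2 (commutative vs.\ exponential).} Commutative transitions only rearrange the data structures without duplicating or erasing code (\reflemma{comm-transp}), so a maximal run of consecutive commutative transitions either moves the evaluation pointer forward through the current code/frame or backtracks, and its length is $O(\size\tm)$ by the subterm invariant. A new such region is created only by a multiplicative or exponential transition (or at start-up), so forward movement is bounded by $O((1+\sizem\exec+\sizee\exec)\cdot\size\tm)$, and backtracking is bounded by forward movement; hence $\sizecom\exec = O((1+\sizee\exec)\cdot\size\tm) = O((1+\size\deriv^2)\cdot\size\tm)$, using $\sizem\exec = \size\deriv = O(1+\sizee\exec)$.

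\textbf{Step 3 (global bound).} Summing the three counts, $\size\exec = \sizem\exec + \sizee\exec + \sizecom\exec = \size\deriv + O(\size\deriv^2) + O((1+\size\deriv^2)\cdot\size\tm) = O((1+\size\deriv^2)\cdot\size\tm)$, giving point~(1). For point~(2), I weigh each transition by its RAM cost under the assumed pointer-based term representation: $O(1)$ for commutative transitions and $O(\size\tm)$ for multiplicative and exponential ones (by the subterm invariant, plus \refthm{ucam-propert} for the \ucam runs triggered by $\tomachmtwo$). Thus the total cost is $\sizecom\exec\cdot O(1) + (\sizem\exec+\sizee\exec)\cdot O(\size\tm) = O((1+\size\deriv^2)\cdot\size\tm)$. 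The only genuinely new checks with respect to \cite{DBLP:conf/wollic/Accattoli16} are that $\tomachmthree$ acts as an erasing multiplicative transition in Step~1---it adds nothing to $\genv$, so it can only lower the potential---and that $\tomachasubseven$ acts as an ordinary commutative transition in Step~2; both are immediate from the shape of the transitions.
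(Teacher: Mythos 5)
Your overall schema---subterm invariant, exponentials quadratic in multiplicatives, commutatives bilinear in exponentials, then weighing each kind of transition by its unitary cost---is exactly the paper's. The genuine problem is in your Step 1. The potential you propose, the sum of the indices $n$ over the $\labred n$ entries of $\genv$, is a function of the environment alone, and the exponential transitions $\tomachered$ and $\tomacheabs$ never modify $\genv$: they only copy an $\alpha$-renamed entry into the code component. Hence your potential is \emph{unchanged}, not strictly decreased, by exponential transitions, and the amortized accounting collapses: nothing bounds the number of exponentials by the total potential injected by the multiplicatives. What is actually needed is not a global potential on $\genv$ but a local measure on chains of exponential steps: by \reflemma{exp-decr}, inside an $\msym$-free block the indices of consecutive $\tomacheredp n$ steps strictly decrease (and $\tomacheabs$ is immediately followed by a multiplicative), while by well-labeledness (\refrem{well-lab-length}) together with the environment size invariant (\reflemmap{umam-quant-invs}{env-size}) the first such index is at most $\size\genv\le\sizem\exec$. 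This gives local boundedness (\refthmp{exp-linearity}{local-bound}): any $\msym$-free block contains at most $\sizem\exec$ exponential transitions, and summing over blocks yields $\sizee\exec = O(\sizem\exec^2)$. So you cite the right ingredients (\reflemma{exp-decr} and well-labeledness), but the bookkeeping you build on them is incorrect as stated and does not prove the quadratic bound, which you yourself identify as the crux.

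A second, minor slip: in Step 2 you use $\sizem\exec = \size\deriv = O(1+\sizee\exec)$, which is false in general---multiplicative transitions $\tomachmone$, $\tomachmthree$, and $\tomachmtwo$ whose entries are never usefully substituted can occur without any exponential transition, so multiplicatives are not bounded by exponentials. This does not affect your conclusion: your bound $\sizecom\exec = O((1+\sizem\exec+\sizee\exec)\cdot\size\tm)$ already suffices, since $\sizem\exec=\size\deriv$ and $\sizee\exec = O(\size\deriv^2)$ give $O((1+\size\deriv^2)\cdot\size\tm)$; the paper's statement (\refthmp{commutative-bilinearity}{three}) avoids the $\sizem\exec$ term altogether. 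Steps 2 and 3 and the cost analysis (constant-time commutatives; $O(\size\tm)$ per multiplicative and exponential via the subterm invariant and \refthmp{ucam-propert}{compl} for the \ucam calls in $\tomachmtwo$) otherwise match the paper's proof.
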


\bibliographystyle{alpha}

\withproofs{% !TEX root = main.tex
\section*{Proof Appendix}

\subsection{Proof of the properties of the Maximal Strategy (\reflemmaeq{perp-properties}, p. \pageref{l:perp-properties})}
 \label{ssect:perp-properties}

\begin{proof}
 By induction on $\tm$. Cases:
 \begin{itemize}
   \item \emph{Variable}, \ie $\tm = \var$. Then $\tm$ is normal, absurd.
   
   \item \emph{Abstraction}, \ie $\tm = \la\var\tmthree$. Then by \ih there exists $\tmfour$ such that $\tmthree \toperp \tmfour$. By rule \perpRuleLambda, $\tm = \la\var\tmthree \toperp \la\var\tmfour$, \ie just take $\tmtwo \defeq \la\var\tmfour$. \emph{Determinism}: it follows from the \ih
   
   \item \emph{Application}, \ie $\tm = \tmthree \tmfour$. Two cases:
   \begin{itemize}
    \item \emph{$\tmthree$ is an abstraction}, \ie $\tmthree = \la\var\tmfive$. Two sub-cases:
    \begin{itemize}
      \item \emph{$\var\in\tmthree$ or $\tmfour$ is normal}. Then $\tm = (\la\var\tmfive) \tmfour \rtobperp \tmfive \isub\var\tmfour$. \emph{Determinism}: the rules for \perp contexts do not allow to evaluate in $\tmthree$, because $\la\varthree \tmthree$ is applied, nor to evaluate in $\tmfour$ (if it is not normal) because then $\var \in \tmthree$ and so rule \perpRuleGc\ cannot be applied.
      
      \item \emph{$\var\notin\tmthree$ and $\tmfour$ is not normal}. By \ih there exists a unique $\tmsix$ such that $\tmfour \toperp \tmsix$, that is, there is a perpetual context $\ctx$ such that $\tmfour = \ctxp{\tmfour'} \toperp \ctxp{\tmsix'}$ with $\tmfour' \rtobperp \tmsix'$. By rule \perpRuleGc, $(\la\var\tmfive) \ctx$ is a \perp-context and $\tm = (\la\var\tmfive) \ctxp{\tmfour'} \toperp (\la\var\tmfive) \ctxp{\tmsix'}$. Clearly, there cannot be any $\toperp$ redex in $\tmthree$, so determinism holds.
    \end{itemize}
    
    \item \emph{$\tmthree$ is not an abstraction}. Two sub-cases:
    \begin{itemize}
      \item \emph{$\tmthree$ is not normal}. Then by \ih there exists unique $\tmfive$ such that $\tmthree \toperp \tmfive$, By rule \perpRuleAppL, $\tm = \tmthree \tmfour \toperp \tmfive\tmfour$. Since $\tmthree$ reduces, it is not neutral and so there cannot be $\toperp$ redexes in $\tmfour$ (because rule \perpRuleGc does not apply).

      \item \emph{$\tmthree$ is normal and thus neutral}. Then $\tmfour$ is not normal (otherwise $\tm$ is normal, absurd). By \ih there exists unique $\tmfive$ such that $\tmfour \toperp \tmfive$. By rule \perpRuleAppR, $\tm = \tmthree \tmfour \toperp \tmthree \tmfive$, and since $\tmthree$ is neutral this is the unique $\toperp$ redex of $\tm$.
    \end{itemize}
   \end{itemize}
   
 \end{itemize}

\end{proof}

\subsection{Definition of Grafting and Unfolding, and their Properties}
\label{ssect:grafting}

The unfolding of the environment $\genv$ on a code $\code$ is defined as the recursive  \emph{capture-allowing} substitution (called \emph{grafting}) of the entries of $\genv$ on $\code$.

\begin{definition}[Grafting and Environment Unfolding]
 The operation of grafting $\code\csub\var\codetwo$ is defined by 
 \[\begin{array}{rllllllrlll}   
 (\codethree\codefour)\csub\var\codetwo & \defeq & \codethree\csub\var\codetwo \codefour\csub\var\codetwo  &&&&&
    (\la\vartwo\codethree)\csub\var\codetwo & \defeq &  \la\vartwo\codethree\csub\var\codetwo \\
    \var\csub\var\codetwo & \defeq & \codetwo &&&&&
    \vartwo\csub\var\codetwo  & \defeq & \vartwo
    \end{array}\]
 Given an environment $\genv$ we define the unfolding of $\genv$ on a code $\code$ as follows:
 \[\begin{array}{rllllllrlll}    
    \relunf{\code}{\stempty} & \defeq & \code &&&&&
    \relunf{\code}{\esublab\var\codetwo \cons \genv} & \defeq & \relunf{\code\csub\var\codetwo}\genv    
    \end{array}\]
 or equivalently as:
 \[\begin{array}{rllllllrlllllllllll}    
    \relunf{(\codetwo\codethree)}\genv & \defeq & \relunf\codetwo\genv \relunf\codethree\genv&&&&&
    \relunf\var{\esublab\var\codetwo\cons\genvtwo} & \defeq & \relunf{\codetwo}{\genvtwo} \\
    \relunf{(\la\var\codetwo)}\genv & \defeq & \la\var\relunf\codetwo\genv &&&&&
    \relunf\var{\esublab\vartwo\codetwo\cons\genvtwo} & \defeq & \relunf\var\genvtwo &&&&&\relunf\var{\stempty} & \defeq & \var
    \end{array}\]
\end{definition}
For instance,  $\relunf{(\la\var\vartwo)}{\esublabe\vartwo{\var\var}\labneu} = \la\var(\var\var)$. The unfolding is extended to contexts as expected (\ie recursively propagating the unfolding and setting $\relunf\ctxhole\genv = \genv$).

Let us explain the need for grafting. In \cite{DBLP:conf/aplas/AccattoliBM15}, the Strong MAM is decoded to the LSC, that is a calculus with explicit substitutions, \ie a calculus able to represent the environment of the Strong MAM. Matching the representation of the environment on the Strong MAM and on the LSC does not need grafting but it is, however, a quite technical affair. Useful sharing adds many further complications in establishing such a matching, because useful evaluation computes a shared representation of the normal form and forces some of the explicit substitutions to stay under abstractions. The difficulty is such, in fact, that we found much easier to decode directly to the $\l$-calculus rather than to the LSC. Such an alternative solution, however, has to push the substitution induced by the environment through abstractions, which is why we use grafting.

The following easy properties will be used to prove the correctness of the machine (in \reflemma{one-step-weak-sim}).

\begin{lemma}[Properties of Grafting and Unfolding]
\label{l:grft-unf-proper} % \reflemmap{grft-unf-proper}{one}
\hfill
\begin{enumerate}
  \item \label{p:grft-unf-proper-one} If the bound names of $\tm$ do not appear free in $\tmtwo$ then $\tm\isub\var\tmtwo = \tm\csub\var\tmtwo$.
  \item \label{p:grft-unf-proper-two} If moreover they do not appear free in $\genv$ then $\relunf\tm\genv \isub\var{\relunf\tmtwo\genv} = \relunf{\tm\isub\var\tmtwo} \genv$.
 \end{enumerate}
\end{lemma}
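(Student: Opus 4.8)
The plan is to prove both items by straightforward structural induction on $\tm$, with the inductive hypothesis carrying the side conditions on bound names. For item~\ref{p:grft-unf-proper-one}, the point is that grafting $\code\csub\var\codetwo$ differs from meta-level substitution $\tm\isub\var\tmtwo$ only in that grafting pushes through abstractions blindly, without $\alpha$-renaming; so the two coincide exactly when no capture can occur, which is guaranteed by the hypothesis that the bound names of $\tm$ do not appear free in $\tmtwo$. I would do the induction on $\tm$: the variable cases $\tm = \var$ and $\tm = \vartwo \neq \var$ are immediate from the defining equations of grafting (and of substitution); the application case $\tm = \codethree\codefour$ follows by applying the IH to $\codethree$ and $\codefour$ separately, noting that the bound names of each are among those of $\tm$, hence still disjoint from $\fv\tmtwo$; the abstraction case $\tm = \la\vartwo\codethree$ is where the hypothesis is actually used: since $\vartwo \notin \fv\tmtwo$ the meta-level substitution does not need to rename $\vartwo$, so $(\la\vartwo\codethree)\isub\var\tmtwo = \la\vartwo(\codethree\isub\var\tmtwo)$, which by IH equals $\la\vartwo(\codethree\csub\var\tmtwo) = (\la\vartwo\codethree)\csub\var\tmtwo$.

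For item~\ref{p:grft-unf-proper-two}, I would again go by structural induction on $\tm$, now using item~\ref{p:grft-unf-proper-one} to replace $\isub{}{}$ by $\csub{}{}$ wherever convenient and using the equational (compositional) characterization of the unfolding given right after the definition, namely $\relunf{(\codetwo\codethree)}\genv = \relunf\codetwo\genv\,\relunf\codethree\genv$, $\relunf{(\la\vartwo\codetwo)}\genv = \la\vartwo\relunf\codetwo\genv$, and the two variable clauses. The application and abstraction cases then reduce to the IH componentwise, exactly as in item~\ref{p:grft-unf-proper-one}, with the abstraction case relying on $\vartwo$ being disjoint from $\fv\tmtwo$ and from the variables bound/substituted in $\genv$ so that the renaming-free unfolding commutes with the substitution. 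The interesting case is $\tm = \var$: here $\relunf\var\genv\isub\var{\relunf\tmtwo\genv} = \relunf\var\genv \isub\var{\relunf\tmtwo\genv}$, and since $\var\isub\var\tmtwo = \tmtwo$ we must check $\relunf\var\genv\isub\var{\relunf\tmtwo\genv} = \relunf{\tmtwo}\genv$; this holds because $\var$ is (by the disjointness hypothesis) not captured nor bound anywhere in $\genv$, so $\relunf\var\genv$ is either $\var$ itself (if $\genv$ has no entry for $\var$), in which case the equation is trivial, or — if $\genv = \genvthree\cons\esublab\var\codethree\cons\genvfour$ — it is $\relunf{\codethree}{\genvfour}$ which contains no free occurrence of $\var$, so the outer substitution is vacuous and does not match $\relunf\tmtwo\genv$; wait: in that situation $\var$ does not occur free in $\relunf\tm\genv$ so the identity $\relunf\tm\genv\isub\var{\cdot} = \relunf{\tm\isub\var\tmtwo}\genv$ still holds because $\relunf{\tm\isub\var\tmtwo}\genv = \relunf{\tmtwo}\genv$ and both sides collapse correctly once one observes $\var$ cannot be a substituted variable of $\genv$ under the freshness assumption. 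The subtlety to handle carefully is precisely this bookkeeping of which variables may legally appear as keys in $\genv$ versus as free variables of $\tm,\tmtwo$.

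The main obstacle I expect is not conceptual but notational: making the freshness/disjointness side conditions precise enough that the abstraction and variable cases go through cleanly, in particular being careful about the difference between "bound names of $\tm$", "free names of $\tmtwo$", and "names occurring (free or as keys) in $\genv$", and checking that these conditions are preserved when passing to subterms in the IH. Since the machines run on well-named terms and the environment entries use fresh keys (as recorded in Definition~\ref{def:well-lab-env}), all the required disjointness will in fact hold in every application of this lemma, but for the lemma as stated in isolation one simply carries the hypotheses through the induction. No step requires more than unfolding definitions and invoking the IH, so the whole proof is routine once the invariants are spelled out.
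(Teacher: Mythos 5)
Your overall strategy---structural induction on $\tm$, with item 1 used to replace capture-avoiding substitution by grafting and the compositional equations for $\relunf\code\genv$ used componentwise---is exactly the routine argument the paper intends (it states the lemma without proof, as ``easy properties''), and your variable, application and abstraction cases for item 1, as well as the application and abstraction cases for item 2, are fine.

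The one case you do not actually close is the variable case of item 2, and your own ``wait:'' marks the spot. Under the hypotheses as literally stated nothing forbids $\var$ from occurring as a key of $\genv$: the freshness condition only concerns the \emph{bound names of $\tm$}, and $\var$ need not be one of them. In that situation the equation genuinely fails: take $\tm=\var$, $\tmtwo=\varthree$ and $\genv=\esublab\var{(\vartwo\vartwo)}$ (both hypotheses hold vacuously, since $\tm$ has no bound names); then $\relunf\tm\genv\isub\var{\relunf\tmtwo\genv}=\vartwo\vartwo$ while $\relunf{\tm\isub\var\tmtwo}\genv=\varthree$. So the claim that ``$\var$ cannot be a substituted variable of $\genv$ under the freshness assumption'' is not derivable from the statement; it is an additional, implicit hypothesis. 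The repair is to make it explicit---assume also that $\var$ does not occur in $\genv$, so that $\relunf\var\genv=\var$ and the variable case is immediate---and to observe that this is guaranteed in the lemma's only use, the multiplicative case of Lemma~\ref{l:one-step-weak-sim}, by the name invariant of Lemma~\ref{l:umam-invariants}. A similar implicit convention ($\var$ not among the bound names of $\tm$, automatic for the well-named codes the machine manipulates) is what protects your abstraction case of item 1 when the abstracted variable happens to be $\var$ itself; it is worth stating once rather than leaving it to the reader.
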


%%%%%%%%%%
\subsection{Proof of the Commutative Transparency Lemma (\reflemmaeq{comm-transp}, p. \pageref{l:comm-transp})}
\label{ssect:comm-transp}

% !TEX root = ../main.tex
% \myproof{
% \begin{proof}
% \hfill
\begin{enumerate}
 \item Transitions:
\begin{itemize}
\item \case{$\mac{ \skeval }{ \skframe }{ \code\codetwo }{ \stack }{ \genv }
             \tomachasubone
             \mac{ \skeval }{ \skframe }{ \code }{ \codetwo\cons\stack }{ \genv }$}
          We have 
     $\decdump\ctxholep{\decstackp{\code\codetwo}}
     =
          \decdump\ctxholep{\decoderevp{\codetwo\cons\stack}{\code}}$.

\item \case{$\mac{ \skeval }{ \skframe }{ \la\var\code }{ \codetwo\cons\stack }{ \genv }
             \tomachasubseven
             \mac{ \skeval }{ \skframe \cons \gcentry{ \la\var\code }\pi}{ \codetwo }{ \stempty }{ \genv }$}
            We have 
          $\decdump\ctxholep{\decoderevp{\codetwo\cons\stack}{\la\var\code}}
	    = \decdump\ctxholep{\decstackp{(\la\var\code)\codetwo}}
	    = \decodep{ (\skframe \cons \gcentry{ \la\var\code }\stack)}\codetwo $. 

\item \case{$\mac{ \skeval }{ \skframe}{ \la\var\code }{ \stempty }{ \genv }
             \tomachasubtwo
             \mac{ \skeval }{ \mframe\cons\var }{ \code }{ \stempty }{ \genv }$}
    We have $\decodep{\mframe}{\la\var\code} = \decodep{\mframe\cons\var}\code$.

\item \case{$\mac{ \skeval }{ \skframe }{ \var }{ \stack }{ \genv }
             \tomachasubthree
             \mac{ \skback }{ \skframe }{ \var }{ \stack }{ \genv }$} Nothing to prove.

\item \case{$\mac{ \skback }{ \mframe\cons\var }{ \code }{ \stempty }{ \genv }
             \tomachasubfour
             \mac{ \skback }{ \skframe }{ \la\var\code }{ \stempty }{ \genv }
             $}
             Exactly as $\tomachasubtwo$.

\item \case{$\mac{ \skback }{ \skframe\cons \dentry\code\stack  }{ \codetwo }{ \stempty }{ \genv }
             \tomachasubfive
             \mac{ \skback }{ \skframe }{ \code\codetwo }{ \stack }{ \genv }$}
     We have $\decodep{ (\skframe \cons \dentry\code\stack)}\codetwo =
    \decdump\ctxholep{\decstackp{\code\codetwo}}$.

\item \case{$\mac{ \skback }{ \skframe }{ \code }{ \codetwo\cons\stack }{ \genv }
             \tomachasubsix
             \mac{ \skeval }{ \skframe \cons \dentry\code\stack }{ \codetwo }{ \stempty }{ \genv }$}
            We have 
          $\decdump\ctxholep{\decoderevp{\codetwo\cons\stack}{\code}}
	    = \decdump\ctxholep{\decstackp{\code\codetwo}}
	    = \decodep{ (\skframe \cons \dentry\code\stack)}\codetwo $.

\end{itemize}

\item We have $\decode\state = \relunf{\decodep\mframe{\decodestack{\code}{\stack}}}\genv =_{\refpointeq{comm-transp-trunk}} \relunf{\decodep{\mframe'}{\decodestack{\code'}{\stack'}}}\genv = \decode\statetwo$.\qed

\end{enumerate}
% \end{proof}
% }

\subsection{The \ucam Invariants Lemma (\reflemmaeq{ucam-invariants}, p. \pageref{l:ucam-invariants})}
\label{ssect:ucam-invariants}

Before the invariants we need a lemma to be used in the proof of the decoding invariant.

\begin{lemma}
\label{l:lo-properties} % \reflemmap{lo-properties}{unf-rem}
Let $\ctx$ be a context.
\begin{enumerate}
\item \emph{Right Extension}: \label{p:lo-properties-arg}
$\ctxp{\cdot}$ is \perp iff $\ctxp{\ctxhole\codetwo}$ is \perp;

\item \emph{Left Neutral Extension}: \label{p:lo-properties-neut}
$\ctxp{\cdot}$ is \perp and $\codetwo$ is neutral iff $\ctxp{\codetwo \ctxhole}$ is \perp;

\item \emph{Abstraction Extension}: \label{p:lo-properties-abs-app}
$\ctxp{\cdot}$ is \perp and not applicative iff $\ctxp{\la\var\ctxhole}$ is \perp.

\item \emph{Erasing Redexes Extension}: \label{p:lo-properties-erasing-red}
$\ctxp{\cdot}$ is \perp and $\var \notin \fv\tm$ iff $\ctxp{(\la\var\tm)\ctxhole}$ is \perp.

\item \emph{Unfolding Removal}: \label{p:lo-properties-unf-rem}
if $\relunf\ctx\genv$ is \perp then $\ctx$ is \perp.

\end{enumerate}
\end{lemma}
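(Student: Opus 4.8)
The plan is to read the five items as structural facts about \perp-contexts and to prove each by induction, handling the two implications of the biconditional separately (item~\ref{p:lo-properties-unf-rem} has a single direction). For the direction that \emph{builds} a larger \perp-context --- from ``$\ctx$ is \perp'' to ``$\ctxp{\ctxhole\codetwo}$ is \perp'' in item~\ref{p:lo-properties-arg}, and analogously in items~\ref{p:lo-properties-neut}--\ref{p:lo-properties-erasing-red} --- I would induct on the derivation that $\ctx$ is \perp; for the direction that \emph{reflects} \perp-ness (and, in items~\ref{p:lo-properties-neut}--\ref{p:lo-properties-erasing-red}, also recovers the side condition on $\ctx$), I would induct on the structure of $\ctx$, using inversion on the last rule applied to the extended context. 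In every item the only informative case is the base case $\ctx=\ctxhole$: there the claim reduces to asking which single-constructor context --- $\ctxhole\codetwo$, $\codetwo\ctxhole$, $\la\var\ctxhole$, or $(\la\var\tm)\ctxhole$ --- is \perp, and the side condition on the left of the biconditional --- nothing, $\codetwo$ neutral, $\ctx$ not applicative (and $\ctxhole$ is indeed not applicative), or $\var\notin\fv\tm$, respectively --- is exactly what rules \perpRuleAppL, \perpRuleAppR, \perpRuleLambda and \perpRuleGc demand of that constructor.

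The inductive cases are uniform: when $\ctx=\la\vartwo\ctxtwo$ one re-applies \perpRuleLambda to the output of the induction hypothesis, and when $\ctx$ is an application with the hole in the left (resp. right) subterm one re-applies \perpRuleAppL (resp. \perpRuleAppR or \perpRuleGc). The only thing to watch is the propagation of two side conditions: the requirement in \perpRuleAppL that the left subcontext not be an abstraction-context, and the erasing condition $\var\notin\fv\tm$ in \perpRuleGc. Both are stable, because plugging any of the four constructors into the hole of a \emph{nonempty} context never alters its outermost symbol --- so ``being an abstraction-context'' is preserved for the proper subcontexts the rules inspect --- and because ``$\codetwo$ neutral'' and ``$\var\notin\fv\tm$'' concern only the pure terms sitting beside the hole, which plugging does not modify. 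For item~\ref{p:lo-properties-abs-app} I would additionally record that $\ctx$ is applicative iff it has the shape $\ctxtwop{\ctxhole\codetwo}$, \ie the constructor immediately around the hole is an application with the hole on the left, and that this property propagates unchanged through outer constructors just like abstraction-context-ness; this lets the ``not applicative'' hypothesis travel down the induction, and it is exactly what rules out the configuration that would violate the side condition of \perpRuleAppL at the freshly inserted $\la\var\ctxhole$.

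For item~\ref{p:lo-properties-unf-rem} I would induct on the structure of $\ctx$, using that unfolding commutes with the context constructors --- $\relunf{\ctxtwo\tm}\genv=\relunf\ctxtwo\genv\,\relunf\tm\genv$, $\relunf{\la\var\ctxtwo}\genv=\la\var\relunf\ctxtwo\genv$, $\relunf\ctxhole\genv=\ctxhole$ --- so that $\relunf\ctx\genv$ has the same outermost symbol as $\ctx$ and the case analysis aligns with inversion on the \perp-derivation of $\relunf\ctx\genv$. The substantive ingredient is that grafting is capture-\emph{allowing}, hence non-erasing, so properties of $\relunf\tm\genv$ transfer back to $\tm$: a $\beta$-redex in $\tm$ survives unfolding and an abstraction stays an abstraction, whence ``$\relunf\tm\genv$ neutral'' forces ``$\tm$ neutral'' (the \perpRuleAppR case); and occurrences of a bound variable $\var$ --- which, by well-naming, does not occur in $\genv$ --- are never deleted, whence $\var\notin\fv{\relunf\tm\genv}$ forces $\var\notin\fv\tm$ (the \perpRuleGc case). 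These are of the same flavour as the facts collected in \reflemma{grft-unf-proper}. One small twist: if $\relunf\ctx\genv$ is \perp by \perpRuleGc for $\ctx=\tm\ctxtwo$ while $\tm$ is merely a variable whose environment entry unfolds to an abstraction, then the original $\ctx=\var\ctxtwo$ is \perp not by \perpRuleGc but by \perpRuleAppR (variables are neutral), so the argument must be prepared to switch the justifying rule.

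The main obstacle is not any single deep step but the discipline, in every reflecting direction and throughout item~\ref{p:lo-properties-unf-rem}, of verifying that inserting a constructor or applying an unfolding neither creates nor destroys the precise syntactic predicates the \perp-rules test --- abstraction-context, neutral, applicative, occurrence of a free variable. Once those invariances are isolated, and, for item~\ref{p:lo-properties-unf-rem}, the non-erasure of grafting is in place, each remaining case is a one-line re-application of the matching \perp-rule to the induction hypothesis.
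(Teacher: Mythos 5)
Your approach coincides with the paper's: the paper proves this lemma simply ``by induction on the predicate $\ctx$ is \perp'', and your plan --- case analysis on the rules \perpRuleAx, \perpRuleAppL, \perpRuleLambda, \perpRuleGc, \perpRuleAppR, with the observation that the relevant side conditions (left subcontext not an abstraction-context, neutrality, non-applicativity, $\var\notin\fv\tm$) propagate unchanged through the outer constructors, the only informative case being the one at the hole --- is exactly how that induction is carried out. Your treatment of item 5 is also at the right level of care: it does need the implicit freshness/well-naming assumption that the variables bound in $\ctx$ do not occur in $\genv$ (without it the implication fails, e.g. $\ctx=(\la\var\var)\ctxhole$ with an entry for $\var$ in $\genv$), and the possible switch of justifying rule from \perpRuleGc to \perpRuleAppR when a variable unfolds to an abstraction is precisely the subtlety to watch; the paper's one-line proof leaves both implicit.

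One concrete soft spot: in the reflecting direction of item 2 your base case asserts that $\codetwo\ctxhole$ being \perp is ``exactly what \perpRuleAppR demands'', i.e. that $\codetwo$ is neutral. But inversion can also hit \perpRuleGc: $(\la\var\tm)\ctxhole$ is \perp whenever $\var\notin\fv\tm$, and there $\codetwo$ is an abstraction, hence not neutral, so the step recovering ``$\codetwo$ neutral'' fails. This is really a defect of the statement rather than of your strategy --- the right-to-left implication of item 2 is false as literally written (take $\ctx=\ctxhole$ and $\codetwo=\la\var\tm$ with $\var\notin\fv\tm$), item 4 exists precisely to cover that shape, and the paper's proof glosses over it --- but a complete write-up should either restrict that direction to non-abstraction $\codetwo$ or explicitly defer the \perpRuleGc alternative to item 4, the same rule-switch reflex you applied in item 5.
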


\begin{proof}
  By induction on the predicate \emph{$\ctx$ is \perp} (\refdef{ilob-ctx}, page \pageref{def:ilob-ctx}).
\end{proof}

Now, some terminology:
\begin{itemize}
\item A state $\state$ is \emph{initial} if it is of the form $\mac{\skeval}{\stempty}{\code}{\stempty}{\genv}$ with $\genv$ well-labeled, $\code$ well-named and such that the variables abstracted in $\code$ do not occur in $\genv$. 
\item A state $\state$ is \emph{reachable} if there are an initial state $\statetwo$ and a \ucam  execution $\exec:\statetwo \toucmach^*\state$. 
\end{itemize}
%Both invariants are used to prove the correctness of the \ucam: the \emph{normal form invariant} to guarantee that codes labeled with $\labneu$ and $\lababs$ are indeed normal or neutral, while the \emph{decoding invariant} is used for the redex labels.

Finally:

\begin{lemma}[\ucam Invariants\withproofs{, Proof at P. \pageref{ssect:ucam-invariants}}]
\label{l:ucam-invariants} % \reflemmap{ucam-invariants}{code}
Let $\skamstate{\skphase}{\skframe}{\codetwo}{\stack}{\genv}$ be a \ucam state reachable from an initial state of code $\code_0$. Then 
\begin{enumerate}
 \item \emph{Normal Form}: \label{p:ucam-invariants-nf} 
		\begin{enumerate}
			\item \emph{Backtracking Code}: \label{p:ucam-invariants-nf-code} 
			if $\skphase = \skback$, then $\relunf\codetwo\genv$ is normal, and if $\stack$ is non-empty, then $\relunf\codetwo\genv$ is neutral; 
			
			\item \emph{Frame}: \label{p:ucam-invariants-nf-frame} 			
			  if $\skframe = \skframetwo\cons \skap\codethree\stacktwo \cons \skframethree$, then $\relunf\codethree\genv$ is neutral.	
		\end{enumerate}
  \item \emph{Subterm}:
 \label{p:ucam-invariants-subterm} 

  \begin{enumerate}
  \item \emph{Evaluating Code}: \label{p:ucam-invariants-subterm-one}
  if $\skphase = \skeval$, then $\codetwo$ is a subterm of $\code_0$;
  
  \item \emph{Stack}:  \label{p:ucam-invariants-subterm-two}
  any code  in the stack $\stack$ is a subterm of $\code_0$;
  
  \item \emph{Frame}: \label{p:ucam-invariants-subterm-three}
    \begin{enumerate}
      \item \emph{Head Contexts}: if $\skframe = \skframetwo\cons\skap{\codethree}{\stacktwo}\cons\skframethree$, then any code in $\stacktwo$ is a subterm of $\code_0$; 
      \item \emph{Erasing Redexes}: if $\skframe = \skframetwo\cons\gcentry{\la\var\codethree}{\stacktwo}\cons\skframethree$, then $\la\var\codethree$ and any code in $\stacktwo$ are subterms of $\code_0$;    
      \end{enumerate}
  \end{enumerate}
  
  \item \emph{Erasing Redexes}: \label{p:ucam-invariants-erasing} 
			  if $\skframe = \skframetwo \cons \gcentry{ \la\var\codethree }\stacktwo \cons \skframethree$ then $\var$ does not occur in $\genv$, $\var \notin \fv {\codethree}$, and $\var \notin \fv {\relunf\codethree\genv}$.
\item \emph{Decoding}: \label{p:ucam-invariants-decoding}
		  $\stctx\state$ is a \perp context.
\end{enumerate}
\end{lemma}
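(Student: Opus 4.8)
The plan is to prove the four invariants \emph{simultaneously}, by induction on the length of the \ucam execution $\exec\colon\statetwo\toucmach^*\state$ witnessing the reachability of $\state$, following the analogous statement for the Useful MAM in \cite{DBLP:conf/wollic/Accattoli16} and treating the one new commutative transition $\csym_7$ (the other new transition, $\osym_6$, is an output one and hence never produces a reachable state). When $\exec$ is empty, $\state$ is an initial state $\mac\skeval\stempty{\code_0}\stempty\genv$: item 1 and items 2(b), 2(c), 3 hold vacuously since the phase is $\skeval$ and the stack and the frame are empty; item 2(a) holds because the code is literally $\code_0$; and item 4 holds because $\stctx\state=\ctxhole$, which is a \perp-context by rule \perpRuleAx. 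For the inductive step, the last transition of $\exec$ is necessarily one of the seven \emph{commutative} transitions $\csym_1,\dots,\csym_7$, since the six output transitions halt the run without producing a successor state; hence the argument is a case analysis on the step $\state'\toucmach\state$, with $\state'$ satisfying the four invariants by the induction hypothesis.

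The Subterm invariant (2) is routine bookkeeping: the commutative transitions only \emph{decompose} codes ($\csym_1,\csym_2$) or relocate already-stored subterms among the data structures ($\csym_4,\csym_5,\csym_6,\csym_7$) — the \ucam never substitutes — so every code that becomes the evaluating code or is pushed on the stack or on the frame traces back to a subterm of $\code_0$. For the Normal Form invariant (1): backtracking is entered only through $\csym_3$, on a variable $\var$ with a useless entry, so that $\relunf\var\genv$ is a variable, a neutral term (case $\genv(\var)=\esublabe\var\code\labneu$, using that $\genv$ is well-labeled), or a normal abstraction (case $\esublabe\var\code\lababs$, where the stack is empty); in all cases $\relunf\var\genv$ is normal, and it is neutral whenever the stack is non-empty. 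The backtracking transitions preserve this: $\csym_4$ forms $\la\var\code$ with empty stack; $\csym_5$ forms $\code\codetwo$ where $\relunf\code\genv$ is neutral by invariant 1(b) and $\relunf\codetwo\genv$ is normal by the \ih; and $\csym_6$ leaves backtracking but stores the current code $\code$ — whose unfolding is neutral because the stack was non-empty — into a head-context frame item, re-establishing 1(b). The Erasing Redexes invariant (3) needs checking only at $\csym_7$, the unique commutative transition pushing a $\gcentry{\la\var\code}\stack$ item: the side condition gives $\var\notin\fv\code$; since the \ucam never modifies $\genv$ and $\la\var\code$ is a subterm of $\code_0$ by invariant 2(a), while the variables abstracted in $\code_0$ do not occur in $\genv$ by the initial-state hypothesis, $\var$ does not occur in $\genv$; and the easy fact that $\var\notin\fv\code$ plus $\var$ not occurring in $\genv$ implies $\var\notin\fv{\relunf\code\genv}$ closes the case.

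The delicate part, and the heart of the proof, is the Decoding invariant (4). For each commutative transition I would rewrite $\stctx\state$ in terms of $\stctx{\state'}$ — using that environment unfolding commutes with context plugging, which holds on the nose precisely because grafting is capture-allowing — and then conclude with the appropriate clause of \reflemma{lo-properties}: $\csym_1$ and $\csym_6$ involve a \emph{right} extension/removal of a hole; $\csym_5$, and the second half of $\csym_6$, a \emph{left neutral} extension/removal, whose side condition (neutrality of $\relunf\code\genv$) is supplied by invariant 1; $\csym_2$ and $\csym_4$ an \emph{abstraction} extension/removal; $\csym_7$ the \emph{erasing-redex} extension clause, whose side condition $\var\notin\fv{\relunf\code\genv}$ is exactly invariant 3; and $\csym_3$ changes no data structure, so $\stctx\state=\stctx{\state'}$. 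The one extra ingredient is the abstraction clause for $\csym_2$, which additionally requires $\stctx{\state'}$ to be \emph{not applicative}: this is a structural observation, carried along the induction, that whenever the machine stack is empty the context immediately around the hole of $\stctx{\state'}$ is dictated by the top frame item — it is $\la\var\ctxhole$, or $\code\ctxhole$, or $(\la\var\code)\ctxhole$, or $\ctxhole$ when the frame is empty — and is never of the form $\ctxhole\,\code$ (a property unfolding preserves, since it never touches the path to the hole). The real obstacle is thus not any individual case but the \emph{interdependence} of the statements — Decoding needs Normal Form and Erasing Redexes, which in turn need Subterm — so the four invariants (together with the "not applicative when the stack is empty" auxiliary and the fact that $\genv$ stays the fixed well-labeled initial environment) must genuinely travel together through the single induction.
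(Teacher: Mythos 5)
Your proposal is correct and follows essentially the same route as the paper: induction on the reachable execution with a case analysis on the commutative transitions, the decoding item handled via the context-extension lemma (\reflemma{lo-properties}) with its side conditions (neutrality, $\var\notin\fv{\relunf\code\genv}$, non-applicativity when the stack is empty) supplied by the other invariants, well-labeledness of the fixed environment, and the initial-state freshness hypothesis. The only presentational differences are that you run all four items as one simultaneous induction while the paper treats them item by item, and that you prove the subterm item directly whereas the paper obtains it as a special case of the \maxmam subterm invariant; neither affects correctness.
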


\begin{proof}\hfill
\begin{enumerate}
 \item \emph{Normal Form}: 
 % \myproof{
% \begin{proof}
  the invariant trivially holds for an initial state
  $\skamstate{\skeval}{\stempty}{\code}{\stempty}{\genv}$.
For a non-empty evaluation sequence we list the cases for the last transition. We omit $\tomachasubone$ because it follows immediately from the \ih
  \begin{itemize}

  %%%%%%%%%%%%%%%%%%%%%%%%%%%%%%%%%%%%%%%%%%%%%%%%%%%%%%%%%%%%%%%%%%%%%%%%%%%%%%%%

  \item \case{$
          \mac{\skeval}{\skframe}{\la\var\code}{\stempty}{\genv}
          \tomachasubtwo
          \mac{\skeval}{\skframe\cons\var}{\code}{\stempty}{\genv}
        $}
        \begin{enumerate}
      \item \emph{Backtracking Code}: trivial since $\skphase \neq \skback$.      
      \item \emph{Frame}: it follows by the \ih, since the transition only extends $\skframe$ with an item that is not a head context.
      \end{enumerate}
%    \begin{enumerate}
%	\item
%    \end{enumerate}

\item \case{$\mac{ \skeval }{ \skframe }{ \la\var\code }{ \codetwo\cons\stack }{ \genv }
             \tomachasubseven
             \mac{ \skeval }{ \skframe \cons \gcentry{ \la\var\code }\pi}{ \codetwo }{ \stempty }{ \genv }$ with $\var \notin \fv\code$}
            \begin{enumerate}
      \item \emph{Backtracking Code}: trivial since $\skphase \neq \skback$.      
      \item \emph{Frame}: it follows by the \ih, since the transition only extends $\skframe$ with an item that is not a head context.
      \end{enumerate}

  \item \case{$
          \mac{\skeval}{\skframe}{\var}{\stack}{\genv}
          \tomachasubthree
          \mac{\skback}{\skframe}{\var}{\stack}{\genv}
        $ with $\genv(\var) = \bot$ or $\genv(\var) = \esublabe\var\codetwo\labneu$ or ($\genv(\var) = \esublabe\var\codetwo\lababs$ and $\stack = \stempty$)}
    
    \begin{enumerate}
      \item \emph{Backtracking Code}: three cases depending on $\genv$:
      \begin{enumerate}
	\item $\genv(\var) = \bot$: then $\relunf\var\genv = \var$, that is both a normal and a \quiet term.
	\item $\genv(\var) = \esublabe\var\codetwo\labneu$: more precisely $\genv = \genv_1 \cons \esublabe\var\codetwo\labneu \cons \genv_2$. Then we have $\relunf\var\genv = \relunf\var{\genv_1 \cons \esublabe\var\codetwo\labneu \cons \genv_2} = \relunf\var{\esublabe\var\codetwo\labneu \cons \genv_2} = \relunf\codetwo{\genv_2}$ that is a neutral term because $\genv$ is well-labeled. Note that $\genv_1$ cannot bound $\var$ because of the freshness requirements in the definition of well-labeled environment.
	\item $\genv(\var) = \esublabe\var\codetwo\lababs$ and $\stack = \stempty$: similarly to the previous case we obtain that $\relunf\var\genv$ is a normal abstraction.
      \end{enumerate}
      \item \emph{Frame}: it follows from the \ih, as $\skframe$ is unchanged.
      \end{enumerate}
%    \begin{enumerate}
%	\item
%    \end{enumerate}

  %%%%%%%%%%%%%%%%%%%%%%%%%%%%%%%%%%%%%%%%%%%%%%%%%%%%%%%%%%%%%%%%%%%%%%%%%%%%%%%%
  \item \case{$
          \mac{\skback}{\skframe\cons\var}{\code}{\stempty}{\genv}
          \tomachasubfour
          \mac{\skback}{\skframe}{\la\var\code}{\stempty}{\genv}
        $}
        	
      \begin{enumerate}
      \item \emph{Backtracking Code}: by \ih\ we know that $\relunf\code\genv$ is a normal term. Then $\relunf{(\la\var\code)}\genv = \la\var\relunf\code\genv$ is a normal term. The stack is empty, so we conclude.
      \item \emph{Frame}: it follows from the \ih, as $\skframe$ is unchanged.
      \end{enumerate}
%    \begin{enumerate}
%	\item
%    \end{enumerate}

  %%%%%%%%%%%%%%%%%%%%%%%%%%%%%%%%%%%%%%%%%%%%%%%%%%%%%%%%%%%%%%%%%%%%%%%%%%%%%%%%
  \item \case{$
          \mac{\skback}{ \skframe\cons\skap{\code}{\stack} }{\codetwo}{\stempty}{\genv}
          \tomachasubfive
          \mac{\skback}{\skframe}{\code\codetwo}{\stack}{\genv}
        $}
      \begin{enumerate}
      \item \emph{Backtracking Code}: by \ih\ we have that $\relunf\codetwo\genv$ is a normal term while by \refpoint{ucam-invariants-nf-frame} of the \ih $\relunf\code\genv$ is a \quiet term. Therefore $\relunf{(\code\codetwo)}\genv = \relunf\code\genv \relunf\codetwo\genv$ is a \quiet term.
      \item \emph{Frame}: it follows from the \ih, as $\skframe$ is unchanged.
      \end{enumerate}

  %%%%%%%%%%%%%%%%%%%%%%%%%%%%%%%%%%%%%%%%%%%%%%%%%%%%%%%%%%%%%%%%%%%%%%%%%%%%%%%%
  \item \case{$
          \mac{\skback}{\skframe}{\code}{\codetwo\cons\stack}{\genv}
          \tomachasubsix
          \mac{\skeval}{ \skframe\cons\skap{\code}{\stack} }{\codetwo}{\stempty}{\genv}
        $}
      \begin{enumerate}
      \item \emph{Backtracking Code}: trivial since $\skphase \neq \skback$.      
      \item \emph{Frame}: $\relunf\code\genv$ is a \quiet term by \refpoint{ucam-invariants-nf-code} of the \ih, the rest follows from the \ih
      \end{enumerate} 
  \end{itemize}
% \end{proof}
% }
 \item\emph{Subterm}: This is a special case of the more refined subterm invariant of the \maxmam (\reflemmap{umam-quant-invs}{subterm}, page \pageref{l:umam-quant-invs}).
 \item\emph{Erasing Redexes}:
 % !TEX root = ../main.tex
  the invariant trivially holds for an initial state
  $\skamstate{\skeval}{\stempty}{\code}{\stempty}{ \genv}$.
For a non-empty evaluation sequence, for all transitions but $\tomachasubseven$ the statement follows immediately from the \ih because they all leave untouched the erasing redex items in the frame. So consider $\mac{ \skeval }{ \skframe }{ \la\var\codethree }{ \codetwo\cons\stack }{ \genv }
             \tomachasubseven
             \mac{ \skeval }{ \skframe \cons \gcentry{ \la\var\codethree }\pi}{ \codetwo }{ \stack }{ \genv }$. By the hypothesis on the transition, we have $\var \notin \fv\codethree$. By the subterm invariant (\reflemmap{ucam-invariants}{subterm-one}), $\la\var\codethree$ is a subterm of the initial term and so, by the hypotheses on initial states, $\var$ does not occur in $\genv$. Last, since $\var$ does not occur in $\genv$ nor in $\codethree$, clearly it does not occur in $\relunf\codethree\genv$.
 \item\emph{Decoding}:
 % !TEX root = ../main.tex
  the invariant trivially holds for an initial state
  $\skamstate{\skeval}{\stempty}{\code}{\stempty}{ \genv}$.
For a non-empty evaluation sequence we list the cases for the last transition. To simplify the reasoning in the following case analysis we let implicit that the unfolding spreads on all the subterms, \ie that $\relunf{\decdump\ctxholep{\decodestack\code\stack}} \genv = \decodep{\relunf\mframe\genv}{\decodestack{\relunf\code\genv}{\relunf\stack\genv}}$. Cases:

  \begin{itemize}
 \item \case{$\statetwo = 
          \mac{ \skeval }{ \skframe }{ \code\codetwo }{ \stack }{ \genv }
          \tomachasubone
          \mac{ \skeval }{ \skframe }{ \code }{ \codetwo\cons\stack }{ \genv } = \state
        $}
 By \reflemmap{lo-properties}{arg}, $\stctx\state = \relunf{\decdump\ctxholep{\decodestack\cdot{\codetwo\cons\stack}}} \genv$ is \perp iff $\stctx\statetwo = \relunf{\decdump\ctxholep{\decodestack\cdot{\stack}}} \genv$ is \perp, and $\stctx\statetwo$ is \perp by \ih

%    \begin{enumerate}
%	\item
%    \end{enumerate}
  
  \item \case{$\mac{ \skeval }{ \skframe }{ \la\var\codethree }{ \codetwo\cons\stack }{ \genv }
             \tomachasubseven
             \mac{ \skeval }{ \skframe \cons \gcentry{ \la\var\codethree }\pi}{ \codetwo }{ \stempty }{ \genv }$ with $\var \notin \fv\code$}
              
             By \ih, $\stctx\statetwo = \relunf{\decode\skframe\ctxholep{\decodestack\cdot{\codetwo\cons\stack}}} \genv = \decodep{\relunf\skframe\genv}{\decodestack\cdot{\relunf\codetwo\genv\cons\relunf\stack\genv}} $ is \perp, and by \reflemmap{lo-properties}{arg} $\decodep{\relunf\skframe\genv}{\decodestack\cdot{\relunf\stack\genv}} $ is \perp. By the erasing redexes invariant (\reflemmap{ucam-invariants}{erasing}), we have $\var \notin \fv {\relunf\codethree\genv}$. 
              Then by \reflemmap{lo-properties}{erasing-red} 
             $ 
             \decodep{\relunf\skframe\genv}{ 
              \decodestack{ (\la\var\relunf\codethree\genv) \ctxhole }{\relunf\stack\genv}} 
              = \relunf{\decode\skframe\ctxholep{ \decodestack{ (\la\var\codethree) \ctxhole }{\stack} }} \genv 
              = \stctx\statetwo$ is \perp.
             
  \item \case{$\statetwo = 
          \mac{\skeval}{\skframe}{\la\var\code}{\stempty}{\genv}
          \tomachasubtwo
          \mac{\skeval}{\skframe\cons\var}{\code}{\stempty}{\genv} = \state
        $}
        We have $\stctx\state = \relunf{\decdumpp{\la\var\ctxhole}} \genv$ and by \ih we know that $\relunf{\decdumpp{\cdot}} \genv = \stctx\statetwo$ is \perp. Note that by definition the decoding of frames cannot be applicative. Then $\stctx\state $ is \perp by \reflemmap{lo-properties}{abs-app}.
%    \begin{enumerate}
%	\item
%    \end{enumerate}

  \item \case{$\statetwo = 
          \mac{\skeval}{\skframe}{\var}{\stack}{\genv}
          \tomachasubthree
          \mac{\skback}{\skframe}{\var}{\stack}{\genv} = \state
        $ with $\genv(\var) = \bot$ or $\genv(\var) = \esublabe\var\codetwo\labneu$ or ($\genv(\var) = \esublabe\var\codetwo\lababs$ and $\stack = \stempty$)}
    We have $\stctx\state = \relunf{\decdump\ctxholep{\decodestack\cdot{\stack}}} \genv = \stctx\statetwo$, and so the statement follows immediately from the \ih
      
%    \begin{enumerate}
%	\item
%    \end{enumerate}

  %%%%%%%%%%%%%%%%%%%%%%%%%%%%%%%%%%%%%%%%%%%%%%%%%%%%%%%%%%%%%%%%%%%%%%%%%%%%%%%%
  \item \case{$\statetwo = 
          \mac{\skback}{\skframe\cons\var}{\code}{\stempty}{\genv}
          \tomachasubfour
          \mac{\skback}{\skframe}{\la\var\code}{\stempty}{\genv} = \state
        $}
    
            We have $\stctx\state = \relunf{\decdumpp{\cdot}} \genv$ and by \ih we know that $\relunf{\decdumpp{\la\var\ctxhole}} \genv = \stctx\statetwo$ is \perp. Then $\stctx\state $ is \perp by \reflemmap{lo-properties}{abs-app}.
%    \begin{enumerate}
%	\item
%    \end{enumerate}

  %%%%%%%%%%%%%%%%%%%%%%%%%%%%%%%%%%%%%%%%%%%%%%%%%%%%%%%%%%%%%%%%%%%%%%%%%%%%%%%%
  \item \case{$\statetwo = 
          \mac{\skback}{ \skframe\cons\skap{\code}{\stack} }{\codetwo}{\stempty}{\genv}
          \tomachasubfive
          \mac{\skback}{\skframe}{\code\codetwo}{\stack}{\genv} = \state
        $}
By \ih $\stctx\statetwo = \relunf{\decdump\ctxholep{\decodestack{\code\ctxhole}{\stack}}} \genv$ is \perp, and so by \reflemmap{lo-properties}{neut} $\relunf{\decdump\ctxholep{\decodestack{\cdot}{\stack}}} \genv = \stctx\state$ is \perp.
%    \begin{enumerate}
%	\item
%    \end{enumerate}

  %%%%%%%%%%%%%%%%%%%%%%%%%%%%%%%%%%%%%%%%%%%%%%%%%%%%%%%%%%%%%%%%%%%%%%%%%%%%%%%%
  \item \case{$\statetwo = 
          \mac{\skback}{\skframe}{\code}{\codetwo\cons\stack}{\genv}
          \tomachasubsix
          \mac{\skeval}{ \skframe\cons\skap{\code}{\stack} }{\codetwo}{\stempty} {\genv} = \state
        $}
        By \ih $\stctx\statetwo = \relunf{\decdump\ctxholep{\decodestack{\cdot}{\stack}}} \genv = \decodep{\relunf\mframe\genv}{\decodestack{\cdot}{\relunf\stack\genv}} $ is \perp, and by \reflemmap{ucam-invariants}{nf-code} applied to $\statetwo$ we obtain that $\relunf\code\genv$ is neutral (because the stack is non-empty). So by \reflemmap{lo-properties}{neut} $ \decodep{\relunf\mframe\genv}{\decodestack{\relunf\code\genv\ctxhole}{\relunf\stack\genv}} = 
        \relunf{\decodep{\mframe}{\decodestack{\code\ctxhole}{\stack}}}\genv
        = \stctx\state$ is \perp.%\qed

%    \begin{enumerate}
%	\item
%    \end{enumerate}        
   \end{itemize}

\end{enumerate}
\end{proof}

\subsection{Proof of the \ucam Properties Theorem (\refthmcompact{ucam-propert}, p. \pageref{thm:ucam-propert})}
\label{ssect:ucam-propert}
\begin{proof}\hfill
%\myproof{
%\begin{proof}\hfill

\begin{enumerate}
\item An inspection of the transition rules shows that there always is one and exactly one transition of the \ucam that applies: for each phase ($\skeval$ and $\skback$) consider each case of the code (application, abstraction, variable) and the various combinations stack/frame.

\item By the previous point, the executions of the \ucam are sequences of commutative steps that either diverge or are followed by an output transition. We now introduce a measure and prove that the sequence of commutative steps is always finite. In particular, it is bounded by the size of the initial term.

Consider the following notion of size for stacks, frames, and states:
\[\begin{array}{rcl@{\hspace{2em}}rcl}
\rmeasure{\stempty} & \defeq & 0 
&
\rmeasure{\mframe\cons\var} & \defeq & \rmeasure{\mframe}
\\

\rmeasure{\code\cons\stack} & \defeq & \size{\code} + \rmeasure{\stack} 
&
\rmeasure{\skframe\cons\skap{\code}{\stack}} & \defeq & \rmeasure{\stack} + \rmeasure{\skframe}\\

&&&
\rmeasure{\skframe\cons\gcentry{\la\var\code}{\stack}} & \defeq & \rmeasure{\code} +\rmeasure{\stack} + \rmeasure{\skframe}\\

 \rmeasure{\mac{\skeval}{\skframe}{\code}{\stack}{\genv}} & \defeq & \rmeasure{\skframe} + \rmeasure{\stack} + \size{\code} 
 &
\rmeasure{\mac{\skback}{\skframe}{\code}{\stack}{\genv}} & \defeq & \rmeasure{\skframe} + \rmeasure{\stack}
   \end{array}\]
The proof of the bound is in 3 steps:
\begin{enumerate}

\item \emph{Evaluation Commutative Steps Are Bound by the Size of the Initial Term}. 

By direct inspection of the rules of the machine it can be checked that:
\begin{itemize}
\item \emph{Evaluation Commutative Rules Decrease the Size}: if $\state \tomachhole{a} \statetwo$ with $a\in \set{\skeval\admsym_1,\skeval\admsym_2,\skeval\admsym_3, \skeval\admsym_7}$ then
      $\rmeasure{\statetwo} < \rmeasure{\state}$;
\item \emph{Backtracking Transitions do not Change the Size}: if $\state \tomachhole{a} \statetwo$ with
      $a\in \set{\commfour,\commfive,\commsix}$ then $\rmeasure{\statetwo} = \rmeasure{\state}$.
\end{itemize}
Then a straightforward induction on the length of an execution $\exec$ shows that
\[\rmeasure{\statetwo} \leq \rmeasure{\state} - \sizecomev\exec\]
\ie\ that $ \sizecomev\exec \leq \rmeasure{\state}  - \rmeasure{\statetwo} \leq \rmeasure{\state} = \rmeasure{\mac{\skeval}{\stempty}{\code}{\stempty}{\genv}} = \size\code$.
\item \emph{Backtracking Commutative Steps Are Bound by the Evaluation Ones}. We have to estimate $\sizecombt\exec = \polsize\exec\commfour + \polsize\exec\commfive + \polsize\exec\commsix$. Note that
\begin{enumerate}
\item $\polsize\exec\commfour \leq \polsize\exec\commtwo$, as $\tomachasubfour$ pops variables from $\skframe$, pushed only by $\tomachasubfour$;

\item $\polsize\exec\commfive \leq \polsize\exec\commsix$, as $\tomachasubfive$ pops pairs $\skap{\code}{\stack}$ from $\skframe$, pushed only by $\tomachasubsix$;

\item $\polsize\exec\commsix \leq \polsize\exec\commthree$, as $\tomachasubsix$ ends backtracking phases, started only by $\tomachasubthree$.
\end{enumerate}

Then $\sizecombt\exec \leq \polsize\exec\commtwo + 2\polsize\exec\commthree \leq 2\sizecomev\exec
$.

\item \emph{Bounding All Commutative Steps by the Size of the Intial Term}. We have $\sizecom\exec = \sizecomev\exec + \sizecombt\exec \leq_{P. 2} = \sizecomev\exec + 2\sizecomev\exec \leq_{P.1} 3\cdot\size\tm $.
\end{enumerate}

\item By the previous points executions of the \ucam are sequences of commutative transitions followed by an output transition, \ie they have the form $\state\toucmachhole{\csym}^*\statetwo\toucmachhole{\osym} \lab$ where $\lab$ is the output label of the machine. The initial state by hypothesis is $\state \defeq \mac{ \skeval }{ \stempty}{ \code }{ }{ \genv } $. Since commutative transitions do not change the decoding (\reflemma{comm-transp}), we have $\decode\statetwo = \decode\state = \relunf\code\genv$. To prove that $\esublab\var\code\cons\genv$ is well-labeled---that requires to prove properties of $\relunf\code\genv$---we will then look at $\decode\statetwo$ and at the various possible output transitions. Five cases:
 \begin{itemize}
 \item $\statetwo = \mac{ \skeval }{ \skframe }{ \la\vartwo\codethree }{ \codetwo\cons\stack }{ \genv } 
			\toucmachoutone 
			\labred 1$ with $\vartwo \in \fv\codethree$.
			Then 
			$$\relunf\code\genv = \decode\statetwo = \relunf{ \decodep\mframe{\decodestack{\la\vartwo\codethree}{\codetwo\cons\stack}} }\genv = 
			\relunf{ \decodep\mframe{\decodestack{(\la\vartwo\codethree) \codetwo}{\stack}} 
			}\genv =
			 \relunf{\decode\mframe}\genv \ctxholep{\decodestack{(\la\vartwo\relunf\codethree\genv) \relunf\codetwo\genv}{\relunf\stack\genv}} 
			$$
			that has a $\beta$-redex. 
			
			Now, we show that $\code$ decomposes as a $\rtobperp$-redex in a \perp context. By \reflemmap{ucam-invariants}{decoding}, $\relunf{ \decodep\mframe{\decodestack{\cdot}{\codetwo\cons\stack}} }\genv = \stctx\statetwo$ is a \perp context. By \reflemmap{lo-properties}{unf-rem}, $\decodep\mframe{\decodestack{\cdot}{\codetwo\cons\stack}} $ is also \perp. Then  $\decodep\mframe{\decodestack{\cdot}{\stack}} $ is \perp by \reflemmap{lo-properties}{arg}. Moreover, $(\la\vartwo\codethree)\codetwo$ is a $\rtobperp$ redex because $\vartwo \in \fv\codethree$ by hypothesis. Finally, $\code = \decodep\mframe{\decodestack{(\la\vartwo\codethree)\codetwo}{\stack}}$ by \reflemmap{comm-transp}{trunk}.
						
			The \ucam is executed only on terms that are not variables, and so $\esublabe\var\code{\labred 1}\cons\genv$ is well-labeled.
			
\item $\statetwo = \mac{ \skeval }{ \skframe }{ \vartwo }{ \stack }{ \genv } \toucmachouttwo \labred {n+1}$ with $\genv(\vartwo) = \esublabe\vartwo\codetwo{\labred n}$.
Then 
$$\relunf\code\genv = \decode\statetwo = \relunf{ \decodep\mframe{\decodestack{\vartwo}{\stack}} }\genv = 		
\relunf{\decode\mframe}\genv \ctxholep{\decodestack{\relunf\vartwo\genv}{\relunf\stack\genv}}$$
Since $\genv$ is well-labeled and $\genv(\vartwo) = \esublabe\vartwo\codetwo{\labred n}$ we have that $\relunf\vartwo\genv$ contains a $\beta$-redex, and so does $\relunf\code\genv$.

Now, we show that $\code$ decomposes as a variable in a \perp context. By \reflemmap{ucam-invariants}{decoding}, $\relunf{ \decodep\mframe{\decodestack{\cdot}{\stack}} }\genv = \stctx\statetwo$ is a \perp context. By \reflemmap{lo-properties}{unf-rem}, $\decodep\mframe{\decodestack{\cdot}{\stack}} $ is also \perp. Finally, $\code = \decodep\mframe{\decodestack{\vartwo}{\stack}}$ by \reflemmap{comm-transp}{trunk}.

Therefore $\esublabe\var\code{\labred {n+1}}\cons\genv$ is well-labeled.

\item $\statetwo = \mac{ \skeval }{ \skframe }{ \vartwo }{ \codetwo\cons\stack }{ \genv } \toucmachoutthree \labred 2$ with $\genv(\vartwo) = \esublabe\vartwo\codetwo\lababs$.
Then 
$$\relunf\code\genv = \decode\statetwo = \relunf{ \decodep\mframe{\decodestack{\vartwo}{\codetwo\cons\stack}} }\genv = 	
\relunf{ \decodep\mframe{\decodestack{\vartwo \codetwo}{\stack}} }\genv =
\relunf{\decode\mframe}\genv \ctxholep{\decodestack{\relunf\vartwo\genv \relunf\codetwo\genv}{\relunf\stack\genv}}$$
Since $\genv$ is well-labeled and $\genv(\vartwo) = \esublabe\var\codetwo\lababs$ we have that $\relunf\vartwo\genv$ is an abstraction and so $\relunf\code\genv$ contains the $\beta$-redex $\relunf\vartwo\genv \relunf\codetwo\genv$.

Now, we show that $\code$ decomposes as a variable in an applicative \perp context.  By \reflemmap{ucam-invariants}{decoding}, $\relunf{ \decodep\mframe{\decodestack{\cdot}{\codetwo\cons\stack}} }\genv = \stctx\statetwo$ is a \perp context. By \reflemmap{lo-properties}{unf-rem}, $\decodep\mframe{\decodestack{\cdot}{\codetwo\cons\stack}} $ is also \perp. Additionally, it is applicative. Finally, $\code = \decodep\mframe{\decodestack{\vartwo
}{\codetwo\cons\stack}}$ by \reflemmap{comm-transp}{trunk}.

Therefore $\esublabe\var\code{\labred 2}\cons\genv$ is well-labeled.

\item $\statetwo = \mac{ \skback }{ \stempty }{ \codetwo\codethree }{ \stempty }{ \genv } \toucmachoutfour \labneu$. Then $\code = \codetwo\codethree$ by the commutative transparency lemma (\reflemmap{comm-transp}{trunk}), and so $\code$ is an application. By the normal form invariant (\reflemmap{ucam-invariants}{nf-code}), $\relunf\code\genv = \relunf{(\codetwo\codethree)}\genv$ is normal. Moreover, it is an application, because $\relunf{(\codetwo\codethree)}\genv = \relunf{\codetwo}\genv \relunf{\codethree}\genv$, and so it is neutral. Then $\esublabe\var\code\labneu\cons\genv$ is well-labeled.

\item $\statetwo = \mac{ \skback }{ \stempty }{ \la\vartwo\codetwo }{ \stempty }{ \genv } \toucmachoutfive \lababs$. Then $\code = \la\vartwo\codetwo$ by the commutative transparency lemma (\reflemmap{comm-transp}{trunk}), and so $\code$ is an abstraction. By the normal form invariant (\reflemmap{ucam-invariants}{nf-code}), $\relunf\code\genv = \relunf{(\la\vartwo\codetwo)}\genv$ is normal. Moreover, it is an abstraction, because $\relunf{(\la\vartwo\codetwo)}\genv = \la\vartwo\relunf{\codetwo}\genv $. Then $\esublabe\var\code\lababs\cons\genv$ is well-labeled.

\item $\statetwo = \mac{ \skback }{ \skframe\cons \gcentry{ \la\vartwo\codethree }\stack  }{ \codetwo }{ \stempty }{ \genv } 
			\toucmachoutsix \labred 1$. Then
			
			$$\relunf\code\genv = \decode\statetwo =  
			\relunf{ \decodep\mframe{\decodestack{(\la\vartwo\codethree) \codetwo}{\stack}} 
			}\genv =
			 \relunf{\decode\mframe}\genv \ctxholep{\decodestack{(\la\vartwo\relunf\codethree\genv) \relunf\codetwo\genv}{\relunf\stack\genv}} 
			$$
						that has a $\beta$-redex. 
			
			Now, we show that $\code$ decomposes as a $\rtobperp$-redex in a \perp context. By \reflemmap{ucam-invariants}{decoding}, $\relunf{\decode\skframe\ctxholep{ \decodestack{ (\la\vartwo\codethree) \ctxhole }{\stack} }} \genv = \stctx\statetwo$ is a \perp context. By \reflemmap{lo-properties}{unf-rem}, $\decode\skframe\ctxholep{ \decodestack{ (\la\vartwo\codethree) \ctxhole }{\stack} } $ is also \perp. Then  $\decodep\mframe{\decodestack{\cdot}{\stack}} $ is \perp by \reflemmap{lo-properties}{arg}. Moreover, by the normal form invariant (\reflemmap{ucam-invariants}{nf-code}) $\codetwo$ is normal and by the erasing redexes invariant (\reflemmap{ucam-invariants}{erasing}) $\vartwo$ does not occur in $\codethree$. Then $(\la\vartwo\codethree)\codetwo$ is a $\rtobperp$ redex. Finally, $\code = \decodep\mframe{\decodestack{(\la\vartwo\codethree)\codetwo}{\stack}}$ by \reflemmap{comm-transp}{trunk}.
						
			The \ucam is executed only on terms that are not variables, and so $\esublabe\var\code{\labred 1}\cons\genv$ is well-labeled.
 \end{itemize}

\end{enumerate}
%\end{proof}
%}  
\end{proof}

\subsection{Useful MAM Qualitative Study (\reflemmaeq{umam-invariants}, p. \pageref{l:umam-invariants})}
\label{ssect:umam-invariants}

Four invariants are required. The \emph{normal form} and \emph{decoding invariants} are exactly those of the \ucam (and the proof for the commutative transitions is the same). The \emph{environment labels invariant} follows from the correctness of the \ucam (\refthmp{ucam-propert}{compl}). The \emph{name invariant} is used in the proof of \reflemma{one-step-weak-sim}.

\begin{lemma}[\maxmam Qualitative Invariants\withproofs{, Proof at Page \pageref{ssect:umam-invariants}}]
	\label{l:umam-invariants}  % \reflemmap{umam-invariants}{nf-code}
	Let $\state = \skamstate{\skphase}{\skframe}{\codetwo}{\stack}{\genv}$ be a state reachable from an initial term $\code_0$. Then:
	\begin{enumerate}		
		
		\item \emph{Environment Labels}: \label{p:umam-invariants-env-lab}
		 $\genv$ is well-labeled.

		\item \emph{Normal Form}: \label{p:umam-invariants-nf}
		\begin{enumerate}
			\item \emph{Backtracking Code}: \label{p:umam-invariants-nf-code} 
if $\skphase = \skback$, then $\relunf\codetwo\genv$ is normal, and if $\stack$ is non-empty, then $\relunf\codetwo\genv$ is neutral; 
			
			\item \emph{Frame}: \label{p:umam-invariants-nf-frame} 
			if $\skframe = \skframetwo\cons\skap{\codethree}{\stacktwo}\cons\skframethree$, then $\relunf\codethree\genv$ is neutral.
		\end{enumerate}
	
		\item \emph{Name:} \label{p:umam-invariants-new-name}
		\begin{enumerate}
			\item \emph{Substitutions}: \label{p:umam-invariants-new-name-es}
			if $\genv = \genvtwo \cons \esub\var\code \cons \genvthree$  then $\var$ is fresh wrt $\code$ and $\genvthree$;

			\item \emph{Abstractions and Evaluation}: \label{p:umam-invariants-new-name-abs-eval}
			if $\skphase = \skeval$ and $\la\var\code$ is a subterm of $\codetwo$, $\stack$, or $\stacktwo$ (if $\skframe = \skframetwo\cons\skap{\codethree}{\stacktwo}\cons\skframethree$) or of a erasing redex item in $\skframe$ then $\var$ may occur only in $\code$;
			\item \emph{Abstractions and Backtracking}: \label{p:umam-invariants-new-name-abs-back}
			if $\skphase = \skback$ and $\la\var\code$ is a subterm of $\stack$ or $\stacktwo$ (if $\skframe = \skframetwo\cons\skap{\codethree}{\stacktwo}\cons\skframethree$) or of a erasing redex item in $\skframe$ then $\var$ may occur only in $\code$.
		\end{enumerate}		
		
		\item \emph{Erasing Redexes}: \label{p:umam-invariants-erasing} 
			if $\skframe = \skframetwo\cons\gcentry{\la\var\codethree}{\stacktwo}\cons\skframethree$, then $\var \notin \codethree$ and $\var \notin \relunf\codethree\genv$.

		\item \emph{Decoding}: \label{p:umam-invariants-decoding}
		  $\stctx\state$ is a \perp context.
	\end{enumerate}
\end{lemma}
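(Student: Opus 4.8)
My plan is to prove all five invariants simultaneously, by induction on the length of the \maxmam execution $\exec : \state_0 \tomach^* \state$ from an initial state $\state_0$, doing a case analysis on the last transition in the inductive step. The base case is immediate: in $\state_0 = \mac{\skeval}{\stempty}{\code_0}{\stempty}{\stempty}$ the environment, stack and frame are empty and the phase is $\skeval$, so \emph{Environment Labels} holds ($\stempty$ is well-labeled), \emph{Normal Form}, \emph{Name} and \emph{Erasing Redexes} are vacuous, and \emph{Decoding} holds because $\stctx{\state_0} = \ctxhole$ is \perp.

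For the inductive step I would split the transitions into the commutative ones and the five computational ones. As the paper already remarks, the commutative layer of the \maxmam coincides with that of the \ucam, so for those transitions \emph{Environment Labels} is trivial ($\genv$ is untouched), \emph{Name} and \emph{Erasing Redexes} are preserved because nothing is substituted or $\alpha$-renamed (for $\tomachasubseven$ the side condition $\var\notin\fv\code$ directly gives the first half of the \emph{Erasing Redexes} clause for the new item $\gcentry{\la\var\code}\stack$, and the \emph{Name} invariant — the abstracted $\var$ occurs only in $\code$, hence not in $\genv$ — upgrades it to $\var\notin\fv{\relunf\code\genv}$), and \emph{Normal Form} and \emph{Decoding} are argued exactly as in \reflemma{ucam-invariants}. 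Concretely, \emph{Decoding} is maintained by applying, at each step, the matching clause of \reflemma{lo-properties}: Right Extension when a stack argument is pushed or popped, Abstraction Extension for $\tomachasubtwo$ and $\tomachasubfour$ (where the stack is empty, hence the underlying context is not applicative), Left Neutral Extension when an application is rebuilt in $\tomachasubfive$ and $\tomachasubsix$ (where the backtracked code is neutral by \emph{Normal Form}), and Erasing Redexes Extension for $\tomachasubseven$ (using the \emph{Erasing Redexes} clause).

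The genuinely new work is in the computational transitions. For $\tomachmone$, where $(\la\var\code)\vartwo$ becomes $\code\isub\var\vartwo$, the \emph{Name} invariant makes this meta-level substitution produce again a well-named term without disturbing the freshness bookkeeping, \emph{Normal Form} stays vacuous (the phase remains $\skeval$), \emph{Environment Labels} is untouched, and \emph{Decoding} follows because the decoded context simply loses the innermost layer $\ctxhole\,\relunf\vartwo\genv$, so Right Extension applies. For $\tomachmtwo$ the only nontrivial point is \emph{Environment Labels}: the environment becomes $\esublab\var\codetwo\cons\genv$ with $\lab$ the output of the \ucam on $\codetwo$ and $\genv$; by induction $\genv$ is well-labeled, and by the \emph{Name} invariant — the \emph{Abstractions and Evaluation} clause applied to the evaluation code $\la\var\code$ — the bound variable $\var$ occurs only in $\code$, hence is fresh with respect to $\codetwo$ and to $\genv$, so \refthmp{ucam-propert}{correc} yields that $\esublab\var\codetwo\cons\genv$ is again well-labeled; the remaining clauses only lose the binder $\la\var$, which is precisely the bookkeeping \emph{Name} is built to absorb. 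For $\tomachmthree$ the code becomes the body $\code$ of an erasing item $\gcentry{\la\var\code}\stack$ while the phase flips $\skback\to\skeval$: \emph{Normal Form} becomes vacuous, the relevant \emph{Name} clause passes from its backtracking to its evaluation form for this subterm, \emph{Decoding} holds because the decoded context loses the layer $(\la\var{\relunf\code\genv})\ctxhole$ (Erasing Redexes Extension, using $\var\notin\fv{\relunf\code\genv}$ from the \emph{Erasing Redexes} invariant), and \emph{Environment Labels} and \emph{Erasing Redexes} are inherited. Finally the exponential transitions $\tomachered$ and $\tomacheabs$ replace the variable code by a fresh renaming $\rename\code$ of an environment entry: \emph{Environment Labels} is untouched, \emph{Normal Form} is vacuous, \emph{Decoding} is unchanged since stack and frame — hence the decoded context — do not change, and \emph{Name} is preserved because $\rename{}$ picks bound names fresh for the whole state, while the copied code $\code$, by $\genv$ being well-labeled and well-named, already has its binders occurring only inside it.

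The main obstacle is the joint handling of the \emph{Name} invariant: it must be strong enough to deliver, in $\tomachmtwo$, the freshness of the just-abstracted variable both against the argument and against the whole environment (so that \refthmp{ucam-propert}{correc} applies), and yet robust enough to survive a firing that destroys a binder and an exponential step that splices a fresh copy of an arbitrarily large environment entry into the running code. This is a bookkeeping difficulty rather than a conceptual one; once \emph{Name} is phrased as in the statement, it is discharged by the same transition-by-transition checks as for the Useful MAM in \cite{DBLP:conf/wollic/Accattoli16}, and everything else is an essentially mechanical replay of the \ucam invariants proof.
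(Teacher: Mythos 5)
Your proposal follows essentially the same route as the paper: induction on the execution with a per-transition case analysis, reusing the \ucam invariants (and \reflemma{lo-properties}) for the shared commutative layer and the decoding, invoking \refthmp{ucam-propert}{correc} together with the name invariant for well-labeledness at $\tomachmtwo$, and handling the erasing and name clauses by the same bookkeeping as for the Useful MAM. I see no gap beyond the deferred transition-by-transition name checks, which the paper itself also relegates to routine appendix verification.
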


\begin{proof}\hfill
\begin{enumerate}
 \item \emph{Environment Labels}: the only transition that extends the environment is $\tomachmtwo$, and it preserves well-labeledness by \refthmp{ucam-propert}{correc}.
 
 \item \emph{Normal Form}: for the commutative transitions the proof is exactly as for the \ucam, see \reflemmap{ucam-invariants}{nf} whose proof is at page \pageref{ssect:ucam-invariants}. For the multiplicative and exponential transitions the invariant holds trivially: the \emph{backtracking code} part because these transition cannot happen during backtracking (note that $\tomachmthree$ may happen during backtracking but it ends in a evaluating state, for which then nothing has to be proven), and the \emph{frame} part because they do not touch the head context items in the frame.

 \item \emph{Name}: 
 % !TEX root = ../main.tex
% \myproof{
% \begin{proof}
  the invariant trivially holds for an initial state
  $\skamstate{\skeval}{\stempty}{\code}{\stempty}{ \stempty}$.
For a non-empty evaluation sequence we list the cases for the last transition. 

\begin{itemize}

  %%%%%%%%%%%%%%%%%%%%%%%%%%%%%%%%%%%%%%%%%%%%%%%%%%%%%%%%%%%%%%%%%%%%%%%%%%%%%%%%
 \item Principal Cases:

   \begin{itemize}
    \item \case{$\statetwo = 
          \mac{ \skeval }{ \skframe }{ \l\var.\code }{ \vartwo\cons\stack }{ \genv }
          \tomachmone
          \mac{ \skeval }{ \skframe }{ \code\isub\var\vartwo }{ \stack }{ \genv } = \state
        $}

  \begin{enumerate}
	\item \emph{Substitution}: it follows from the \ih
	\item \emph{Abstractions and Evaluation}: it follows from the \ih
   \end{enumerate}
   
    \item \case{$\statetwo = 
          \mac{ \skeval }{ \skframe }{ \la\var\code }{ \codetwo\cons\stack }{ \genv }
          \tomachmtwo
          \mac{ \skeval }{ \skframe }{ \code }{ \stack }{ \esublab{\var}{\codetwo}\cons\genv } = \state
        $ with $\codetwo$ not a variable and $\var \in  \fv \code$}
			
  \begin{enumerate}
	\item \emph{Substitution}: it follows from the \ih for abstractions in the evaluation phase (\refpoint{umam-invariants-new-name-abs-eval}).
	\item \emph{Abstractions and Evaluation}: it follows from the \ih 
   \end{enumerate}

    \item \case{$\statetwo = 
          \mac{ \skeval }{ \skframe }{ \var }{ \stack }{ \genv }
          \tomachered
          \mac{ \skeval }{ \skframe }{ \rename{\code} }{ \stack }{ \genv } = \state$ with $\genv(\var) = \esublabe\var\codetwo{\labred n}$}

  \begin{enumerate}
	\item \emph{Substitution}: it follows from the \ih 
	\item \emph{Abstractions and Evaluation}: it follows by the \ih and the fact that in $\rename{\code}$ the abstracted variables are renamed (wrt $\code$) with fresh names.
   \end{enumerate}
          
    \item \case{$\statetwo = 
          \mac{ \skeval }{ \skframe }{ \var }{ \codetwo\cons\stack }{ \genv }
          \tomacheabs
          \mac{ \skeval }{ \skframe }{ \rename{\code} }{ \codetwo\cons\stack }{ \genv } = \state$ with $\genv(\var) = \esublabe\var\codetwo\lababs$}

   \begin{enumerate}
	\item \emph{Substitution}: it follows from the \ih 
	\item \emph{Abstractions and Evaluation}: it follows by the \ih and the fact that in $\rename{\code}$ the abstracted variables are renamed (wrt $\code$) with fresh names.	
   \end{enumerate}

	\item \case{$	\mac{ \skback }{ \skframe\cons \gcentry{ \la\var\code }\stack  }{ \codetwo }{ \stempty }{ \genv } 
			\tomachmthree 
			\mac{ \skeval }{ \skframe }{ \code }{ \stack }{ \genv }$}
   \begin{enumerate}
	\item \emph{Substitution}: it follows from the \ih 
	\item \emph{Abstractions and Evaluation}: it follows by the \ih	
   \end{enumerate}

          \end{itemize}
%%%%%%%%%%%%%%%%%%%%%%%%%%%%%%%%%%%%%%%%%%%%%%%%%%%%%%%%%%%%%%%%%%%%%%%%%%%%%%%%

 \item Commutative Cases:
  \begin{itemize}
 \item \case{$\statetwo = 
          \mac{ \skeval }{ \skframe }{ \code\codetwo }{ \stack }{ \genv }
          \tomachasubone
          \mac{ \skeval }{ \skframe }{ \code }{ \codetwo\cons\stack }{ \genv } = \state
        $}
 
   \begin{enumerate}
	\item \emph{Substitution}: it follows from the \ih 
	\item \emph{Abstractions and Evaluation}: it follows from the \ih 	
   \end{enumerate}
  
    \item \case{$\mac{ \skeval }{ \skframe }{ \la\var\codethree }{ \codetwo\cons\stack }{ \genv }
             \tomachasubseven
             \mac{ \skeval }{ \skframe \cons \gcentry{ \la\var\codethree }\pi}{ \codetwo }{ \stempty }{ \genv }$ with $\var \notin \fv\code$}

	\begin{enumerate}
	\item \emph{Substitution}: it follows from the \ih 
	\item \emph{Abstractions and Evaluation}: it follows from the \ih 	
   \end{enumerate}
   
  \item \case{$\statetwo = 
          \mac{\skeval}{\skframe}{\la\var\code}{\stempty}{\genv}
          \tomachasubtwo
          \mac{\skeval}{\skframe\cons\var}{\code}{\stempty}{\genv} = \state
        $}
        
  \begin{enumerate}
	\item \emph{Substitution}: it follows from the \ih 
	\item \emph{Abstractions and Evaluation}: it follows from the \ih 	
   \end{enumerate}

  \item \case{$\statetwo = 
          \mac{\skeval}{\skframe}{\var}{\stack}{\genv}
          \tomachasubthree
          \mac{\skback}{\skframe}{\var}{\stack}{\genv} = \state
        $ with $\genv(\var) = \bot$ or $\genv(\var) = \esublabe\var\codetwo\labneu$ or ($\genv(\var) = \esublabe\var\codetwo\lababs$ and $\stack = \stempty$)
        }
          
  \begin{enumerate}
	\item \emph{Substitution}: it follows from the \ih 
	\item[3.] \emph{Abstractions and Backtracking}:	it follows from the \ih of Abstraction and Evaluation (\refpoint{umam-invariants-new-name-abs-eval}).
   \end{enumerate}

  %%%%%%%%%%%%%%%%%%%%%%%%%%%%%%%%%%%%%%%%%%%%%%%%%%%%%%%%%%%%%%%%%%%%%%%%%%%%%%%%
  \item \case{$\statetwo = 
          \mac{\skback}{\skframe\cons\var}{\code}{\stempty}{\genv}
          \tomachasubfour
          \mac{\skback}{\skframe}{\la\var\code}{\stempty}{\genv} = \state
        $}

  \begin{enumerate}
	\item \emph{Substitution}: it follows from the \ih 
	\item[3.] \emph{Abstractions and Backtracking}: it follows from the \ih 
   \end{enumerate}

  %%%%%%%%%%%%%%%%%%%%%%%%%%%%%%%%%%%%%%%%%%%%%%%%%%%%%%%%%%%%%%%%%%%%%%%%%%%%%%%%
  \item \case{$\statetwo = 
          \mac{\skback}{ \skframe\cons\skap{\code}{\stack} }{\codetwo}{\stempty}{\genv}
          \tomachasubfive
          \mac{\skback}{\skframe}{\code\codetwo}{\stack}{\genv} = \state
        $}

        \begin{enumerate}
	\item \emph{Substitution}: it follows from the \ih 
	\item[3.] \emph{Abstractions and Backtracking}:	it follows from the \ih 
   \end{enumerate}

  %%%%%%%%%%%%%%%%%%%%%%%%%%%%%%%%%%%%%%%%%%%%%%%%%%%%%%%%%%%%%%%%%%%%%%%%%%%%%%%%
  \item \case{$\statetwo = 
          \mac{\skback}{\skframe}{\code}{\codetwo\cons\stack}{\genv}
          \tomachasubsix
          \mac{\skeval}{ \skframe\cons\skap{\code}{\stack} }{\codetwo}{\stempty} {\genv} = \state
        $}
  \begin{enumerate}
	\item \emph{Substitution}: it follows from the \ih 
	\item \emph{Abstractions and Evaluation}: it follows from the \ih for Abstractions and Backtracking (\refpoint{umam-invariants-new-name-abs-back})%\qed
   \end{enumerate}
        
   \end{itemize}
  \end{itemize}
% \end{proof}
% }

 \item \emph{Erasing}:   the invariant trivially holds for an initial state
  $\skamstate{\skeval}{\stempty}{\code}{\stempty}{ \stempty}$.
For a non-empty evaluation sequence, for all transitions but $\tomachasubseven$ the statement follows immediately from the \ih because they all leave untouched the erasing redex items in the frame (but for $\tomachmthree$ that, however, simply removes one such entry and so trivially preserves the invariant). So consider $\mac{ \skeval }{ \skframe }{ \la\var\codethree }{ \codetwo\cons\stack }{ \genv }
             \tomachasubseven
             \mac{ \skeval }{ \skframe \cons \gcentry{ \la\var\codethree }\pi}{ \codetwo }{ \stack }{ \genv }$. By the hypothesis on the transition, we have $\var \notin \fv\codethree$. By the name invariant (\reflemmap{umam-invariants}{new-name-abs-eval}), $\var$ does not occur in $\genv$. Last, since $\var$ does not occur in $\genv$ nor in $\codethree$, clearly it does not occur in $\relunf\codethree\genv$.
 
 \item \emph{Decoding}: 
 % !TEX root = ../main.tex
% \myproof{
% \begin{proof}
  the invariant trivially holds for an initial state
  $\skamstate{\skeval}{\stempty}{\code}{\stempty}{ \stempty}$.
For a non-empty evaluation sequence we list the cases for the last transition. To simplify the reasoning in the following case analysis we let implicit that the unfolding spreads on all the subterms, \ie that $\relunf{\decdump\ctxholep{\decodestack\code\stack}} \genv = \decodep{\relunf\mframe\genv}{\decodestack{\relunf\code\genv}{\relunf\stack\genv}}$.

\begin{itemize}

  %%%%%%%%%%%%%%%%%%%%%%%%%%%%%%%%%%%%%%%%%%%%%%%%%%%%%%%%%%%%%%%%%%%%%%%%%%%%%%%%
 \item Principal Cases:

   \begin{itemize}
    \item \case{$\statetwo = 
          \mac{ \skeval }{ \skframe }{ \l\var.\code }{ \vartwo\cons\stack }{ \genv }
          \tomachmone
          \mac{ \skeval }{ \skframe }{ \code\isub\var\vartwo }{ \stack }{ \genv } = \state
        $}

By \reflemmap{lo-properties}{arg}, $\stctx\state = \relunf{\decdump\ctxholep{\decodestack\cdot{\stack}}} \genv$ is \perp iff $\stctx\statetwo = \relunf{\decdump\ctxholep{\decodestack\cdot{\vartwo\cons\stack}}} \genv$ is \perp, and $\stctx\statetwo$ is \perp by \ih

    \item \case{$\statetwo = 
          \mac{ \skeval }{ \skframe }{ \la\var \code }{ \codetwo\cons\stack }{ \genv }
          \tomachmtwo
          \mac{ \skeval }{ \skframe }{ \code }{ \stack }{ \esublab{\var}{\codetwo}\cons\genv } = \state
        $ with $\var \in \fv \code$}
	
	We have to prove that $\stctx\statetwo = \relunf{\decdump\ctxholep{\decodestack\cdot{\stack}}}{\esublab{\var}{\codetwo}\cons\genv}$ is \perp. 
By the name invariant for abstractions (\reflemmap{umam-invariants}{new-name-abs-eval}) we have that $\var$ does not occur in $\mframe$ nor $\stack$, and so $\relunf{\decdump\ctxholep{\decodestack\cdot{\stack}}}{\esublab{\var}{\codetwo}\cons\genv} = \relunf{\decdump\ctxholep{\decodestack\cdot{\stack}}}{\genv}$. Now, by \reflemmap{lo-properties}{arg} $\relunf{\decdump\ctxholep{\decodestack\cdot{\stack}}}{\genv}$ is \perp iff $ \relunf{\decdump\ctxholep{\decodestack\cdot{\codetwo\cons\stack}}} \genv$ is \perp, that holds by \ih, because this is exactly $\stctx\statetwo$.

    \item \case{$\statetwo = 
          \mac{ \skeval }{ \skframe }{ \var }{ \stack }{ \genv }
          \tomachered
          \mac{ \skeval }{ \skframe }{ \rename{\code} }{ \stack }{ \genv } = \state$ with $\genv(\var) = \esublabe\var\codetwo{\labred n}$}
          We have $\stctx\state = \relunf{\decdump\ctxholep{\decodestack\cdot{\stack}}} \genv = \stctx\statetwo$, and so the statement follows immediately from the \ih
        
    \item \case{$\statetwo = 
          \mac{ \skeval }{ \skframe }{ \var }{ \codetwo\cons\stack }{ \genv }
          \tomacheabs
          \mac{ \skeval }{ \skframe }{ \rename{\code} }{ \codetwo\cons\stack }{ \genv } = \state$ with $\genv(\var) = \esublabe\var\codetwo\lababs$}
          We have $\stctx\state = \relunf{\decdump\ctxholep{\decodestack\cdot{\stack}}} \genv = \stctx\statetwo$, and so the statement follows immediately from the \ih
          
          \item \case{$\statetwo = \mac{ \skback }{ \skframe\cons \gcentry{ \la\var\code }\stack  }{ \codetwo }{ \stempty }{ \genv } 
			\tomachmthree 
			\mac{ \skeval }{ \skframe }{ \code }{ \stack }{ \genv } = \state$}

We have $\stctx\state = \relunf{\decdump\ctxholep{\decodestack\cdot{\stack}}} \genv = \decodep{\relunf\mframe\genv}{\decodestack{\relunf\code\genv}{\relunf\stack\genv}}$. By the erasing redexes invariant (\reflemmap{umam-invariants}{erasing}), $\var \notin \relunf\codethree\genv$. Then, by \reflemmap{lo-properties}{erasing-red} $\stctx\state$ is \perp if and only if 
			$\decodep{\relunf\mframe\genv}{\decodestack{(\la\var\relunf\codethree\genv)\ctxhole}{\relunf\stack\genv}} = \relunf{ \decodep\mframe{\decodestack{(\la\vartwo\codethree) \ctxhole}{\stack}} 
			}\genv = \stctx\statetwo$ is \perp, that holds by \ih

  \end{itemize}

  %%%%%%%%%%%%%%%%%%%%%%%%%%%%%%%%%%%%%%%%%%%%%%%%%%%%%%%%%%%%%%%%%%%%%%%%%%%%%%%%

 \item Commutative Cases: exactly as in the proof of \reflemmap{ucam-invariants}{decoding}.%\qed
  \end{itemize}
% \end{proof}
% }
\end{enumerate}
\end{proof}

We can now show how every single transition projects on the $\l$-calculus, and in particular that multiplicative transitions project to \perp $\beta$-steps.

\begin{lemma}[One-Step Weak Simulation]
  \label{l:one-step-weak-sim} % \reflemmap{eglamour-trans-projection}{exp-com}
	Let $\state$ be a reachable state.
	\begin{enumerate}
		\item \label{p:one-step-weak-sim-com} \emph{Commutative}: 
		if $\state\tomachhole{\csym_{1,2,3,4,5,6,7}}\statetwo$ then $\decode\state=\decode\statetwo$;
		\item \label{p:one-step-weak-sim-exp} \emph{Exponential}: 
		if $\state\tomachhole{\esym_{\lababs,\labredsym}}\statetwo$ then $\decode\state=\decode\statetwo$;% and $\decode\statetwo$ has a $\beta$-redex;
		\item \label{p:one-step-weak-sim-mult}\emph{Multiplicative}: if $\state\tomachhole{\msym_{1,2,3}}\statetwo$ then $\decode\state\toperp\decode\statetwo$.		
	\end{enumerate}
\end{lemma}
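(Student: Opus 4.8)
The plan is to proceed by a case analysis on the transition applied, using the decoding function from \reffig{decoding} together with the invariants of \reflemma{umam-invariants}. For the commutative cases (\emph{Commutative}), I would simply invoke \reflemmap{comm-transp}{state}: the \maxmam has exactly the same commutative transitions as the \ucam, so Commutative Transparency gives $\decode\state=\decode\statetwo$ directly. For the exponential cases $\tomachered$ and $\tomacheabs$, the key observation is that the transition replaces a code $\var$ by $\rename\code$ where $\genv(\var)=\esublabe\var\code\lab$. Since decoding applies the unfolding $\relunf{(\cdot)}\genv$, and by the name invariant (\reflemmap{umam-invariants}{new-name-es}) $\var$ is fresh with respect to $\code$ and the tail of the environment, we have $\relunf\var\genv = \relunf\code{\genv'}$ for the appropriate tail $\genv'$, and the $\alpha$-renaming $\rename\code$ is invisible up to $\alpha$-equivalence; moreover the surrounding context $\stctx\state$ is unchanged. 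Hence $\decode\state=\decode\statetwo$. Here one uses \reflemma{grft-unf-proper} to commute unfolding with the meta-level substitution where needed, and the fact that the stack and frame (hence $\stctx\state$) are untouched by exponential transitions.

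The substantial case is \emph{Multiplicative}. For $\tomachmone$, the state goes from $\mac\skeval\mframe{\la\var\code}{\vartwo\cons\stack}\genv$ to $\mac\skeval\mframe{\code\isub\var\vartwo}\stack\genv$. By the Decoding invariant (\reflemmap{umam-invariants}{decoding}), $\stctx\state$ is a \perp-context, and by its structure the plugged term is $\relunf{(\la\var\code)\vartwo}\genv$ in an applicative position, so the top-level rule $\rtobperp$ applies (a redex with a variable argument is never erasing in the problematic sense — and in any case if $\var\notin\fv\code$ the argument $\vartwo$ is already normal, so the side condition of $\rtobperp$ is met). Using \reflemmap{grft-unf-proper}{two} one gets $\relunf{(\la\var\code)\vartwo}\genv = (\la\var\relunf\code\genv)\relunf\vartwo\genv \rtobperp \relunf\code\genv\isub\var{\relunf\vartwo\genv} = \relunf{\code\isub\var\vartwo}\genv$, and since $\stctx\state$ does not change, $\decode\state\toperp\decode\statetwo$. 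For $\tomachmtwo$, the state goes to $\mac\skeval\mframe\code\stack{\esublab\var\codetwo\cons\genv}$; now $\decode\statetwo = \stctxp{\statetwo}{\relunf\code{\esublab\var\codetwo\cons\genv}}$. Since $\var$ is fresh for $\genv$ and for $\stctx\state$'s components (name invariant), $\relunf{(\cdot)}{\esublab\var\codetwo\cons\genv}$ agrees with $\relunf{(\cdot)}\genv$ except that it additionally substitutes $\relunf\codetwo\genv$ for $\var$; so $\decode\statetwo = \stctxp\state{\relunf\code\genv\isub\var{\relunf\codetwo\genv}}$, while $\decode\state = \stctxp\state{(\la\var\relunf\code\genv)\relunf\codetwo\genv}$ (again in applicative position by the decoding invariant), and since $\var\in\fv\code$ the $\rtobperp$-redex fires, giving the single $\toperp$-step. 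For $\tomachmthree$, the state goes from $\mac\skback{\mframe\cons\gcentry{\la\var\code}\stack}\codetwo\stempty\genv$ to $\mac\skeval\mframe\code\stack\genv$; here the Normal Form invariant (\reflemmap{umam-invariants}{nf-code}) gives that $\relunf\codetwo\genv$ is normal, and the Erasing Redexes invariant (\reflemmap{umam-invariants}{erasing}) gives $\var\notin\fv{\relunf\code\genv}$, so $(\la\var\relunf\code\genv)\relunf\codetwo\genv$ is a $\rtobperp$-redex (its side condition met because the argument is normal), firing to $\relunf\code\genv$; decoding the two sides — using that $\decode{\mframe\cons\gcentry{\la\var\code}\stack}$ plugs $(\la\var\code)\ctxhole$ inside $\decstack$ inside $\decdump$ — one checks the outer contexts match, yielding $\decode\state\toperp\decode\statetwo$.

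The main obstacle I expect is bookkeeping in the multiplicative cases: verifying that the decoded redex sits in an \emph{applicative} position of the \perp-context $\stctx\state$ (so that the firing is a genuine $\toperp$-step and not blocked), and carefully tracking the $\alpha$-renaming and the freshness side conditions so that the unfolding identities hold on the nose rather than merely up to $\alpha$. All of this is controlled by the Decoding and Name invariants of \reflemma{umam-invariants} and by \reflemma{grft-unf-proper}, so the argument is routine once those are in hand; the case $\tomachmthree$ additionally needs the Normal Form and Erasing invariants to certify that the erasing redex is genuinely fireable under $\rtobperp$.
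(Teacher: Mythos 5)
Your overall architecture matches the paper's scaffolding: commutative transparency (\reflemmap{comm-transp}{state}) disposes of the commutative cases, the exponential cases are equalities via the name invariant and the unfolding, and the cases $\tomachmtwo$ and $\tomachmthree$ are handled with the decoding, normal form and erasing invariants together with \reflemma{grft-unf-proper}. The gap is in the case $\tomachmone$. There you discharge the side condition of $\rtobperp$ by saying that ``if $\var\notin\fv\code$ the argument $\vartwo$ is already normal''. But the side condition must be checked on the \emph{decoded} redex, whose argument is $\relunf\vartwo\genv$, not the code $\vartwo$: by well-labeledness (\refdef{well-lab-env}, \reflemmap{umam-invariants}{env-lab}) it is normal when $\genv(\vartwo)=\bot$ or the entry of $\vartwo$ is labeled $\labneu$ or $\lababs$, but if $\genv(\vartwo)=\esublabe\vartwo\codethree{\labred n}$ then $\relunf\vartwo\genv$ contains a $\beta$-redex. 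In that situation, if moreover $\var\notin\fv\code$, the decoded step erases a non-normal argument: it is a $\beta$-step but \emph{not} a $\toperp$-step, so what is at stake is the statement itself, not $\alpha$-bookkeeping.

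Moreover, none of the invariants you invoke excludes that configuration, and it is reachable: running the machine on $(\la\vartwo\,((\la\var\,\varthree)\,\vartwo))\,((\la z\, z)(\la z'\, z'))$, after one $\tomachasubone$, one $\tomachmtwo$ (which stores $(\la z\, z)(\la z'\, z')$ in $\genv$ with a $\labredsym$-label) and another $\tomachasubone$, you are in an evaluating state with code $\la\var\varthree$, stack $\vartwo\cons\stempty$ and $\genv(\vartwo)$ redex-labeled, where $\tomachmone$ (which, as given in \reffig{UMAM}, carries no side condition) is the only applicable transition, and the decoded step $(\la\var\varthree)\,((\la z\,z)(\la z'\,z'))\to\varthree$ is not maximal. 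So this case cannot be waved through: you need either an additional invariant or argument showing why firing is nonetheless a $\toperp$-step (on the face of it, it is not), or a case analysis on the label of $\genv(\vartwo)$ with a genuine treatment of the $\labredsym$-labeled sub-case, which also touches how the transition itself is to be read. Your parenthetical ``a redex with a variable argument is never erasing in the problematic sense'' is precisely the claim that has to be proved, and the proposal does not prove it; the remaining cases are fine and essentially follow the paper's route.
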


\begin{proof}\hfill
% !TEX root = ../main.tex

%\myproof{
%
%\begin{proof}\hfill
\begin{enumerate}

\item \emph{Commutative}: the proof is exactly as the one for the \ucam (\reflemmap{comm-transp}{state})\withproofs{, that can be found at page \pageref{ssect:ucam-invariants}}\withoutproofs{, that can be found in the longer version of this paper on the author's webpage}. 

\item \emph{Exponential}: 
\begin{itemize}
  \item \case{$\state = 
          \mac{ \skeval }{ \skframe }{ \var }{ \stack }{ \genv }
          \tomachered
          \mac{ \skeval }{ \skframe }{ \rename{\code} }{ \stack }{ \genv } = \statetwo$ with $\genv(\var) = \esublabe\var\code{\labred n}$}
        Then $\genv = \genvtwo \cons \esublabe\var\code{\labred n} \cons \genvthree$ for some environments $\genvtwo$, and $\genvthree$. Remember that terms are considered up to $\alpha$-equivalence.
        \[\begin{array}{llllllllllll}    
    \decode{\state}
    &=&
    \stctxp\statetwo{\relunf\var\genv}
    &=&
\stctxp\statetwo{\relunf\code\genvthree}
    
    &=&
\stctxp\statetwo{\relunf\code\genv}
            & = &
    \decode{\statetwo}
    \end{array}\]
        In the chain of equalities we can replace $\relunf\code\genvthree$ with $\relunf\code\genv$ because by well-labeledness the variables bound by $\genvtwo$ are fresh with respect to $\code$.
        
    \item \case{$\state = 
          \mac{ \skeval }{ \skframe }{ \var }{ \codetwo\cons\stack }{ \genv }
          \tomacheabs
          \mac{ \skeval }{ \skframe }{ \rename{\code} }{ \codetwo\cons\stack }{ \genv } = \statetwo$ with $\genv(\var) = \esublabe\var\code\lababs$}
          The proof that $\decode{\state} = \decode{\statetwo}$ is exactly as in the previous case. %The existence of the $\beta$-redex is different but always guaranteed by well-labeledness: indeed by well-labeledness $\relunf\code\genvthree$ is an abstraction, and the context $\stctx\statetwo$ is applicative because the stack $ \codetwo\cons\stack$ is non-empty, thus there is $\beta$-redex in $\stctxp\statetwo{\relunf\code\genvthree} =     \decode{\statetwo}$.
\end{itemize}

\item \emph{Multiplicative}:

   \begin{itemize}
   	%%% Mult 1
    \item \case{$\state = 
          \mac{ \skeval }{ \skframe }{ \la\var\code }{ \vartwo\cons\stack }{ \genv }
          \tomachmone
          \mac{ \skeval }{ \skframe }{ \code\isub\var\vartwo }{ \stack }{ \genv } = \statetwo
        $} 
        Note that $\decode{\stctx{\statetwo}} = \relunf{\decdump\ctxholep{\decstack}} \genv$ is \perp by the decoding invariant (\reflemmap{umam-invariants}{decoding}). Note also that by the name invariant (\reflemmap{umam-invariants}{new-name-abs-eval}) $\var$ can only occur in $\code$. Then:
     \[\begin{array}{lllllllll}    
    \decode{\mac{ \skeval }{ \skframe }{ \la\var\code }{ \vartwo\cons\stack }{ \genv }}
    &=&
    \relunf{\decode\skframe\ctxholep{\decodestack{\l\var.\code}{\vartwo\cons\stack}}}\genv\\
    &=&
\relunf{\decode\skframe\ctxholep{\decodestack{(\l\var.\code)\vartwo}\stack }}\genv
    \\
    &=&
\stctxp\statetwo{(\l\var.\relunf\code\genv)\relunf\vartwo\genv}
    \\
    &\toperp&
\stctxp\statetwo{\relunf\code\genv \isub\var{\relunf\vartwo\genv}}
    \\
    &=_{\reflemmaeqp{umam-invariants}{new-name-abs-eval} \& \reflemmaeqp{grft-unf-proper}{two}}&
\stctxp\statetwo{\relunf{\code \isub\var\vartwo}\genv}\\

        & = &
    \decode{\mac{ \skeval }{ \skframe }{ \code\isub\var\vartwo }{ \stack }{ \genv }}
    \end{array}\]

	%%% Mult 2		
    \item \case{$\state = 
          \mac{ \skeval }{ \skframe }{ \l\var.\code }{ \codetwo\cons\stack }{ \genv }
          \tomachmtwo
          \mac{ \skeval }{ \skframe }{ \code }{ \stack }{ \esublab{\var}{\codetwo}\cons\genv } = \statetwo
        $ with $\codetwo$ not a variable}
			By the name invariant (\reflemmap{umam-invariants}{new-name-abs-eval}) $\var$ can only occur in $\code$ and so $\decode{\stctx{\statetwo}} = \relunf{\decdump\ctxholep{\ctxhole\decstack}} {\esublab{\var}{\codetwo}\cons\genv} = \relunf{\decdump\ctxholep{\ctxhole\decstack}} \genv = \decode{\relunf\skframe\genv}\ctxholep{\decodestack{\cdot }{\relunf\stack\genv }}$. Moreover, such a context is \perp by the decoding invariant (\reflemmap{umam-invariants}{decoding}). Then:
     \[\begin{array}{lllllllll}    
    \decode{\mac{ \skeval }{ \skframe }{ \la\var\code }{ \codetwo\cons\stack }{ \genv }}
    &=&
    \relunf{\decode\skframe\ctxholep{\decodestack{\l\var.\code}{\codetwo\cons\stack}}}\genv\\
    &=&
\relunf{\decode\skframe\ctxholep{\decodestack{(\l\var.\code)\codetwo}\stack }}\genv
    \\
    &=&
    \decode{\relunf\skframe\genv}\ctxholep{\decodestack{(\la\var\relunf\code\genv) \relunf\codetwo\genv }{\relunf\stack\genv }}
    \\
    &\toperp&
\decode{\relunf\skframe\genv}\ctxholep{\decodestack{\relunf\code\genv \isub\var{\relunf\codetwo\genv} }{\relunf\stack\genv }}
    \\
    &=_{\reflemmaeqp{umam-invariants}{new-name-abs-eval} \& \reflemmaeqp{grft-unf-proper}{two}}&
\decode{\relunf\skframe\genv}\ctxholep{\decodestack{ \relunf{\code \isub\var\codetwo}\genv }{\relunf\stack\genv }}
\\
    &=&
\relunf{ \decode{\skframe}\ctxholep{\decodestack{ \code \isub\var\codetwo }{\stack }} }\genv
\\
    &=_{\reflemmaeqp{umam-invariants}{new-name-abs-eval} \& \reflemmaeqp{grft-unf-proper}{one}}&
\relunf{ \decode{\skframe}\ctxholep{\decodestack{ \code \csub\var\codetwo }{\stack }} }\genv
\\
    &=_{\reflemmaeqp{umam-invariants}{new-name-abs-eval}}&
\relunf{ \decode{\skframe}\ctxholep{\decodestack{ \code }{\stack }} \csub\var\codetwo }\genv
\\
    &=&
\relunf{ \decode{\skframe}\ctxholep{\decodestack{ \code }{\stack }} }{ \esublab\var\codetwo \cons \genv }\\

        & = &
    \decode{\mac{ \skeval }{ \skframe }{ \code }{ \stack }{ \esublab{\var}{\codetwo}\cons\genv }}
    \end{array}\]
    
    %%% Mult 3
    \item \case{$\state = \mac{ \skback }{ \skframe\cons \gcentry{ \la\var\code }\stack  }{ \codetwo }{ \stempty }{ \genv } 
			\tomachmthree 
			\mac{ \skeval }{ \skframe }{ \code }{ \stack }{ \genv } = \statetwo$}

Note that $\decode{\stctx{\statetwo}} =  \relunf{\decdump\ctxholep{\ctxhole\decstack}} \genv = \decode{\relunf\skframe\genv}\ctxholep{\decodestack{\cdot }{\relunf\stack\genv }}$ is \perp by the decoding invariant (\reflemmap{umam-invariants}{decoding}). Then:
     \[\begin{array}{lllllllll}    
			\decode{ \mac{ \skback }{ \skframe\cons \gcentry{ \la\var\code }\stack  }{ \codetwo }{ \stempty }{ \genv }}
			& = &
			\relunf{ \decodep\mframe{\decodestack{(\la\var\code) \codetwo}{\stack}} 
			}\genv \\
			& = &
			 \relunf{\decode\mframe}\genv \ctxholep{\decodestack{(\la\var\relunf\code\genv) \relunf\codetwo\genv}{\relunf\stack\genv}}\\
		     & \toperp &
			 \relunf{\decode\mframe}\genv \ctxholep{\decodestack{\relunf\code\genv}{\relunf\stack\genv}}\\
			& = &
			\relunf{\decdump\ctxholep{\decodestack\code\stack}} \genv
			& = &			
			\decode{ \mac{ \skeval }{ \skframe }{ \code }{ \stack }{ \genv } }
			
    \end{array}\]

  \end{itemize}
  
  \end{enumerate}
%\end{proof}
%}
\end{proof}

We also need to show that the \maxmam computes $\beta$-normal forms. 

\begin{lemma}[Progress]
\label{l:eglamour-progress}
  Let $\state$ be a reachable final state. Then $\decode\state$ is $\beta$-normal. 
\end{lemma}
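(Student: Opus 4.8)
The plan is to show that if $\state = \skamstate{\skphase}{\skframe}{\codetwo}{\stack}{\genv}$ is reachable and final, then no transition applies, and to read off from this a syntactic description of $\decode\state$ that forces it to be $\toperp$-normal (which by the determinism/completeness of $\toperp$, \reflemma{perp-properties}, is the same as $\beta$-normal). First I would do a case analysis on the phase $\skphase$ and the shape of the code $\codetwo$, using the transition table of \reffig{UMAM} to see which situations are \emph{not} covered by any transition. In the evaluation phase $\skeval$: if $\codetwo = \code\codetwo'$ then $\tomachasubone$ applies, so the code is not an application; if $\codetwo = \la\var\code$, then either the stack is empty (and $\tomachasubtwo$ applies) or the stack is non-empty, in which case $\tomachmone$, $\tomachmtwo$, or $\tomachasubseven$ applies depending on whether $\var\in\fv\code$ and whether the top of the stack is a variable — so in $\skeval$ an abstraction always triggers a transition; if $\codetwo = \var$, then by the environment-labels invariant (\reflemmap{umam-invariants}{env-lab}) $\genv$ is well-labeled, and one checks that in every case ($\genv(\var)=\bot$, or $\genv(\var)$ labeled $\labneu$, $\lababs$, or $\labred n$) one of $\tomachered$, $\tomacheabs$, $\tomachasubthree$ fires. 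Hence a final state cannot be in phase $\skeval$ at all.

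So a final reachable state is in phase $\skback$. Now I inspect the backtracking transitions: if $\stack$ is non-empty then $\tomachasubsix$ applies, so $\stack = \stempty$; with empty stack, if $\skframe = \skframetwo\cons\var$ then $\tomachasubfour$ applies, if $\skframe = \skframetwo\cons\dentry\code\stack$ then $\tomachasubfive$ applies, and if $\skframe = \skframetwo\cons\gcentry{\la\var\code}\stack$ then $\tomachmthree$ applies. The only remaining possibility is $\skframe = \stempty$. Therefore a final reachable state has the form $\skamstate{\skback}{\stempty}{\codetwo}{\stempty}{\genv}$, and its decoding is simply $\relunf{\codetwo}\genv$ (the contextual part $\stctx\state$ is the empty context). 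By the normal form invariant (\reflemmap{umam-invariants}{nf-code}), since $\skphase = \skback$, the term $\relunf{\codetwo}\genv$ is normal, i.e. $\decode\state$ is $\beta$-normal, which is exactly the claim.

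The main obstacle is making the $\skeval$ case with $\codetwo = \var$ airtight: one must verify that the side conditions of $\tomachered$, $\tomacheabs$, $\tomachasubthree$ are jointly exhaustive over all possibilities for $\genv(\var)$ permitted by a well-labeled environment, in particular handling the $\lababs$ label split between the case $\stack \ne \stempty$ (giving $\tomacheabs$) and $\stack = \stempty$ (giving $\tomachasubthree$). This is a routine but slightly finicky bookkeeping exercise against \refdef{well-lab-env} and the transition table. A secondary point worth stating explicitly is that reachability is needed precisely to invoke the invariants of \reflemma{umam-invariants} (well-labeledness of $\genv$ and the normal form invariant); without it the final-state shape analysis would still go through, but one could not conclude $\beta$-normality of the decoding.
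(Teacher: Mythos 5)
Your proof is correct and follows the same route as the paper: the paper's proof simply states that "a simple inspection of the machine transitions" shows final states have the form $\mac{\skback}{\stempty}{\code}{\stempty}{\genv}$ (your case analysis spells out that inspection, and the exhaustiveness of the variable case actually needs only the grammar of labels, not well-labeledness), and then concludes by the normal form invariant (\reflemmap{umam-invariants}{nf-code}) that $\decode\state = \relunf\code\genv$ is $\beta$-normal, exactly as you do.
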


\begin{proof}
A simple inspection of the machine transitions shows that final states have the form  $\mac{ \skback }{ \stempty }{ \code }{ \stempty }{ \genv }$. Then by the normal form invariant (\reflemmap{umam-invariants}{nf-code}) $\decode\state = \relunf\code\genv$ is $\beta$-normal.
\end{proof}
 
\subsection{Proof of the Weak Bisimulation Theorem (\refthmcompact{weak-bis}, p. \pageref{thm:weak-bis})}
\label{ssect:weak-bis}
% \myproof{
 \begin{proof}\hfill
  \begin{enumerate}
   \item By induction on the length $\size\exec$ of $\exec$, using the one-step weak simulation lemma (\reflemma{one-step-weak-sim}). If $\exec$ is empty then the empty derivation satisfies the statement. If $\exec$ is given by $\exectwo:\state\tomach^*\statethree$ followed by $\statethree\tomach\statetwo$ then by \ih there exists $\derivtwo:\decode\state\toperp^*\decode\statethree$ s.t. $\size\derivtwo = \sizem\exectwo$. Cases of $\statethree\tomach\statetwo$:
   \begin{enumerate}
    \item \emph{Commutative or Exponential}. Then $\decode\statethree = \decode\statetwo$ by \reflemmap{one-step-weak-sim}{com} and \reflemmap{one-step-weak-sim}{exp}, and the statement holds taking $\deriv \defeq \derivtwo$ because $\size\deriv = \size\derivtwo =_{\ih} \sizem\exectwo = \sizem\exec$.
    
    \item \emph{Multiplicative}. Then $\decode\statethree \toperp \decode\statetwo$ by \reflemmap{one-step-weak-sim}{mult} and defining $\deriv$ as $\derivtwo$ followed by such a step we obtain $\size\deriv = \size\derivtwo +1 =_{\ih} \sizem\exectwo + 1 =  \sizem\exec$.
   \end{enumerate}
 
    \item We use $\ecnf\state$ 
  to denote the normal form of $\state$ with respect to exponential and commutative transitions, that exists and is unique because $\tomachc\cup\tomache$ terminates (termination is given by forthcoming \refthm{exp-linearity} and \refthm{commutative-bilinearity}, that are postponed because they actually give precise complexity bounds, not just termination) and the machine is deterministic (as it can be seen by an easy inspection of the transitions). The proof is by induction on the length of $\deriv$. If $\deriv$ is empty then the empty execution satisfies the statement.
  
 If $\deriv$ is given by $\derivtwo:\code\toperp^*\tmthree$ followed by $\tmthree \toperp \tmtwo$ then by \ih there is an execution $\exectwo:\state\tomach^*\statethree$ s.t. $\tmthree = \decode\statethree$ and $\sizem\exectwo = \size\derivtwo$. Note that since exponential and commutative transitions are mapped on equalities, $\exectwo$ can be extended as
  $\exectwo':\state\tomach^*\statethree\tomachhole{\eredsym,\eabssym,\csym_{1,2,3,4,5,6}}^*\ecnf{\statethree}$ with 
   $\decode{\ecnf{\statethree}} = \tmthree$ and $\sizem{\exectwo'}=\size\derivtwo$. By the progress property (\reflemma{eglamour-progress}) $\ecnf{\statethree}$ cannot be a final state, otherwise $\tmthree = \decode{\ecnf{\statethree}}$ could not reduce. Then $\ecnf{\statethree} \tomachm \statetwo$ (the transition is necessarily multiplicative because $\ecnf{\statethree}$ is normal with respect to the other transitions). By the one-step weak simulation lemma (\reflemmap{one-step-weak-sim}{mult}) $\decode{\ecnf{\statethree}} = \tmthree \toperp \decode\statetwo$ and by determinism of $\toperp$ (\reflemmap{perp-properties}{determinism}) $\decode\statetwo = \tmtwo$. Then the execution $\exec$ defined as $\exectwo'$ followed by $\ecnf{\statethree} \tomachm \statetwo$ satisfy the statement, as $\sizem{\exec} = \sizem{\exectwo'} +1 = \sizem{\exectwo} +1 = \size\derivtwo + 1 = \size\deriv$.
  \end{enumerate}
 \end{proof}
% }

\subsection{Quantitative Analysis}
\label{ssect:quant-analysis}

The complexity analyses of this section rely on two additional invariants of the \maxmam, the subterm and the environment size invariants.

The subterm invariant bounds the size of the duplicated subterms and it is crucial. For us, $\codetwo$ is a subterm of $\code$ if it does so up to variable names, both free and bound. More precisely: define $\tm^-$ as $\tm$ in which all variables (including those appearing in binders) are replaced by a fixed symbol $\ast$. Then, we will consider $\tmtwo$ to be a subterm of $\tm$ whenever $\tmtwo^-$ is a subterm of $\tm^-$ in the usual sense. The key property ensured by this definition is that the size $\size\codetwo$ of $\codetwo$ is bounded by $\size\code$.

\begin{lemma}[\maxmam Quantitative Invariants]
\label{l:umam-quant-invs} % \reflemmap{umam-quant-invs}{subterm}
Let $\state = \skamstate{\skphase}{\skframe}{\codetwo}{\stack}{\genv}$ be a
state reachable by the execution $\exec$ from the initial code $\code_0$.

\begin{enumerate}
 \item \emph{Subterm}:
 \label{p:umam-quant-invs-subterm} 

  \begin{enumerate}
  \item \emph{Evaluating Code}: \label{p:umam-quant-invs-subterm-one}
  if $\skphase = \skeval$, then $\codetwo$ is a subterm of $\code_0$;
  
  \item \emph{Stack}:  \label{p:umam-quant-invs-subterm-two}
  any code  in the stack $\stack$ is a subterm of $\code_0$;
  
  \item \emph{Frame}: \label{p:umam-quant-invs-subterm-three}
  \begin{enumerate}
      \item \emph{Head Contexts}: if $\skframe = \skframetwo\cons\skap{\codethree}{\stacktwo}\cons\skframethree$, then any code in $\stacktwo$ is a subterm of $\code_0$; 
      \item \emph{Erasing Redexes}: if $\skframe = \skframetwo\cons\gcentry{\la\var\codethree}{\stacktwo}\cons\skframethree$, then $\la\var\codethree$ and any code in $\stacktwo$ are subterms of $\code_0$;    
      \end{enumerate}  
  
  \item \emph{Global Environment}: \label{p:umam-quant-invs-subterm-four}
  if $\genv = \genvtwo\cons\esublab{\var}{\codethree}\cons\genvthree$, then $\codethree$ is a subterm of $\code_0$;
  \end{enumerate}

  \item \emph{Environment Size}: \label{p:umam-quant-invs-env-size}
  the length of the global environment $\genv$ is bound by $\sizem\exec$.
\end{enumerate}
\end{lemma}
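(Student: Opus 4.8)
The plan is to prove both statements by induction on the length of an execution $\exec$ reaching $\state$ from the initial code $\code_0$. The base case is the initial state $\mac{\skeval}{\stempty}{\code_0}{\stempty}{\stempty}$: every clause of the subterm invariant is vacuous except \emph{Evaluating Code}, which holds since $\code_0$ is a subterm of itself, and the environment has length $0 = \sizem\exec$. For the inductive step I would do a case analysis on the last transition $\statethree\tomach\state$, applying the induction hypothesis to $\statethree$.

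For the subterm invariant, the conceptual point is that ``subterm'' is taken \emph{up to variable names}, i.e.\ $\codetwo$ is a subterm of $\code$ when $\codetwo^-$ is a literal subterm of $\code^-$; hence $(\code\isub\var\vartwo)^- = \code^-$ and $(\rename\code)^- = \code^-$, so the variable-for-variable substitution of $\tomachmone$ and the renamings in $\tomachered,\tomacheabs$ preserve being a subterm of $\code_0$. Once this is granted, every transition is routine bookkeeping of which data-structure item carries which subterm, with three interlocking dependencies driving the argument: (i) the \emph{Stack} clause feeds the \emph{Global Environment} clause, since the only transition extending $\genv$ is $\tomachmtwo$ and the entry it adds is the stack-top $\codetwo$; (ii) the \emph{Evaluating Code} clause feeds the frame clauses through $\tomachasubseven$, which pushes the current abstraction $\la\var\code$ and the stack tail into a $\gcentry{\la\var\code}\stack$ item; (iii) the \emph{Erasing Redexes} frame clause feeds \emph{Evaluating Code} back through $\tomachmthree$, which extracts $\code$ and $\stack$ from such an item. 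The exponential steps take their new code from the environment, hence a subterm by the \emph{Global Environment} hypothesis plus the skeleton remark; the purely structural $\tomachasubone,\tomachasubtwo$ are immediate; and the remaining commutative steps ($\tomachasubthree$--$\tomachasubsix$) only change the phase or restore/shrink frame and stack from items already satisfying the hypothesis, the reconstructed codes landing in the backtracking phase, for which nothing is required.

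The \emph{Environment Size} invariant is immediate from the same induction: no transition removes entries from $\genv$, and the unique one extending it, $\tomachmtwo$, adds exactly one entry and is multiplicative, so $|\genv|$ equals the number of $\msym_2$-steps in $\exec$, which is at most $\sizem\exec$. I expect the only real difficulty to be organizational rather than mathematical: because the subterm clauses are mutually supporting, they must be carried as a single package through the induction; and since the \maxmam differs from the Useful MAM of \cite{DBLP:conf/wollic/Accattoli16} only in the transitions $\tomachasubseven$, $\tomachmthree$ and one side condition, the cases to check anew are exactly those, the rest being literally the argument of \cite{DBLP:conf/wollic/Accattoli16}.
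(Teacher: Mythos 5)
Your proposal is correct and follows essentially the same route as the paper: induction on the length of the execution with a case analysis on the last transition, using the fact that ``subterm'' is taken up to variable names (so $\tomachmone$, $\tomachered$ and $\tomacheabs$ preserve it), tracking exactly the dependencies you list ($\tomachmtwo$ moving a stack code into $\genv$, $\tomachasubseven$ feeding the erasing-redex frame items, $\tomachmthree$ feeding the evaluating code back from them), and observing for the size bound that $\tomachmtwo$ is the only transition touching the length of $\genv$. Nothing is missing.
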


\begin{proof}
% !TEX root = ../main.tex
% \myproof{
% \begin{proof}

\begin{enumerate}
 \item \emph{Subterm}: the invariant trivially holds for the initial state
  $\skamstate{\skeval}{\stempty}{\code_0}{\stempty}{\stempty}$.
  In the inductive case we look at the last transition:

\begin{itemize}

  %%%%%%%%%%%%%%%%%%%%%%%%%%%%%%%%%%%%%%%%%%%%%%%%%%%%%%%%%%%%%%%%%%%%%%%%%%%%%%%%
 \item Principal Cases:

   \begin{itemize}
    \item \case{$\statetwo = 
          \mac{ \skeval }{ \skframe }{ \l\var.\code }{ \vartwo\cons\stack }{ \genv }
          \tomachmone
          \mac{ \skeval }{ \skframe }{ \code\isub\var\vartwo }{ \stack }{ \genv } = \state
        $}

        \begin{enumerate}
      \item \emph{Evaluating Code}: note that according to our definition of subterm $\code\isub\var\vartwo$ is a subterm of $\l\var.\code$.
      \item \emph{Stack}: note that any piece code in $\stack$ is also in $\codetwo\cons\stack$.
      \item \emph{Frame}: it follows from the \ih, since $\skframe$ is not modified.
      \item \emph{Environment}: it follows from the \ih, since $\genv$ is not modified.
      \end{enumerate}

    \item \case{$\statetwo = 
          \mac{ \skeval }{ \skframe }{ \l\var.\code }{ \codetwo\cons\stack }{ \genv }
          \tomachmtwo
          \mac{ \skeval }{ \skframe }{ \code }{ \stack }{ \esublab{\var}{\codetwo}\cons\genv } = \state
        $ with $\codetwo$ not a variable}
			
\begin{enumerate}
      \item \emph{Evaluating Code}: note that $\code$ is a subterm of $\l\var.\code$.
      \item \emph{Stack}: note that any piece code in $\stack$ is also in $\codetwo\cons\stack$.
      \item \emph{Frame}: it follows from the \ih, since $\skframe$ is not modified.
      \item \emph{Environment}: the new environment is of the form $\esublab{\var}{\codetwo}\cons\genv$.
            Pieces of code in $\genv$ are subterms of $\code_0$ by \ih.
            Moreover $\codetwo$ is the top of the stack $\codetwo\cons\stack$ so it is also
            a subterm of $\code_0$.
      \end{enumerate} 
			
    \item \case{$\statetwo = 
          \mac{ \skeval }{ \skframe }{ \var }{ \stack }{ \genv }
          \tomachered
          \mac{ \skeval }{ \skframe }{ \rename{\code} }{ \stack }{ \genv } = \state$ with $\genv(\var) = \esublabe\var\codetwo{\labred n}$}

          \begin{enumerate}
      \item \emph{Evaluating Code}: note that $\code$ is bound by $\genv$. By \ih, it is a subterm of $\code_0$.
            So $\rename{\code}$ is also a subterm of $\code_0$.
      \item \emph{Stack}: it follows from the \ih, since the stack $\stack$ is unchanged.
      \item \emph{Frame}: it follows from the \ih, since the frame $\skframe$ is unchanged.
      \item \emph{Environment}: it follows from the \ih, since the environment $\genv$ is unchanged.
      \end{enumerate} 
      
    \item \case{$\statetwo = 
          \mac{ \skeval }{ \skframe }{ \var }{ \codetwo\cons\stack }{ \genv }
          \tomacheabs
          \mac{ \skeval }{ \skframe }{ \rename{\code} }{ \codetwo\cons\stack }{ \genv } = \state$ with $\genv(\var) = \esublabe\var\codetwo\lababs$}
          
          \begin{enumerate}
      \item \emph{Evaluating Code}: note that $\code$ is bound by $\genv$. By \ih, it is a subterm of $\code_0$.
            So $\rename{\code}$ is also a subterm of $\code_0$.
      \item \emph{Stack}: it follows from the \ih, since the stack $\codetwo\cons\stack$ is unchanged.
      \item \emph{Frame}: it follows from the \ih, since the frame $\skframe$ is unchanged.
      \item \emph{Environment}: it follows from the \ih, since the environment $\genv$ is unchanged.
      \end{enumerate}

   \item \case{$\state = \mac{ \skback }{ \skframe\cons \gcentry{ \la\var\code }\stack  }{ \codetwo }{ \stempty }{ \genv } 
			\tomachmthree 
			\mac{ \skeval }{ \skframe }{ \code }{ \stack }{ \genv } = \statetwo$}
    \begin{enumerate}
      \item \emph{Evaluating Code}: by the \emph{frame} part of the \ih $\la\var\code$ is a subterm of $\code_0$, and so is $\code$.
            So $\rename{\code}$ is also a subterm of $\code_0$.
      \item \emph{Stack}: by the \emph{frame} part of the \ih the codes in $\stack$ are subterms of $\code_0$.
      \item \emph{Frame}: it follows from the \ih
      \item \emph{Environment}: it follows from the \ih, since the environment $\genv$ is unchanged.
      \end{enumerate} 
  \end{itemize}
  %%%%%%%%%%%%%%%%%%%%%%%%%%%%%%%%%%%%%%%%%%%%%%%%%%%%%%%%%%%%%%%%%%%%%%%%%%%%%%%%

 \item Commutative Cases:
  \begin{itemize}
 \item \case{$\statetwo = 
          \mac{ \skeval }{ \skframe }{ \code\codetwo }{ \stack }{ \genv }
          \tomachasubone
          \mac{ \skeval }{ \skframe }{ \code }{ \codetwo\cons\stack }{ \genv } = \state
        $}
 
\begin{enumerate}
      \item \emph{Evaluating Code}: by \ih, $\code\codetwo$ is a subterm of $\code_0$, so $\code$ is also a subterm of $\code_0$.
      \item \emph{Stack}: by \ih, $\code\codetwo$ is a subterm of $\code_0$, so $\codetwo$ is also a subterm of $\code_0$.
            Moreover, any piece of code in $\stack$ is a subterm of $\code_0$ by \ih.
      \item \emph{Frame}: it follows from the \ih, since the frame $\skframe$ is unchanged.
      \item \emph{Environment}: it follows from the \ih, since the environment $\genv$ is unchanged.
      \end{enumerate}

\item \case{$\statetwo = \mac{ \skeval }{ \skframe }{ \la\var\code }{ \codetwo\cons\stack }{ \genv }
             \tomachasubseven
             \mac{ \skeval }{ \skframe \cons \gcentry{ \la\var\code }\pi}{ \codetwo }{ \stempty }{ \genv } = \state$}
             
      \begin{enumerate}
      \item \emph{Evaluating Code}: by \ih, $\la\var\code$ is a subterm of $\code_0$, so $\code$ is also a subterm of $\code_0$.
      \item \emph{Stack}: nothing to prove.
      \item \emph{Frame}: for all entries in $\skframe$ it follows from the \ih, for the new entry it follows by the \emph{evaluating code} and \emph{stack} part of the \ih
      \item \emph{Environment}: it follows from the \ih, since the environment $\genv$ is unchanged.
      \end{enumerate} 
      
      \item \case{$\statetwo = 
          \mac{\skeval}{\skframe}{\la\var\code}{\stempty}{\genv}
          \tomachasubtwo
          \mac{\skeval}{\skframe\cons\var}{\code}{\stempty}{\genv} = \state
        $}
        
\begin{enumerate}
      \item \emph{Evaluating Code}: note that $\code$ is a subterm of $\l\var.\code$ which is in turn a subterm of $\code_0$ by \ih.
      \item \emph{Stack}: trivial since the stack $\stack$ is empty.
      \item \emph{Frame}: any pair of the form $\skap{\codetwo}{\stacktwo}$ or $\gcentry{\la\var\codetwo}{\stacktwo}$ in the frame $\var\cons\skframe$ is also
            already present in $\skframe$, so it follows by the \ih
      \item \emph{Environment}: it follows from the \ih, since the environment $\genv$ is unchanged.
      \end{enumerate}

  \item \case{$\statetwo = 
          \mac{\skeval}{\skframe}{\var}{\stack}{\genv}
          \tomachasubthree
          \mac{\skback}{\skframe}{\var}{\stack}{\genv} = \state
        $ with $\genv(\var) = \bot$ or $\genv(\var) = \esublabe\var\codetwo\labneu$ or ($\genv(\var) = \esublabe\var\codetwo\lababs$ and $\stack = \stempty$)
        }
    
    \begin{enumerate}
      \item \emph{Evaluating Code}: trivial since $\skphase \neq \skeval$.
      \item \emph{Stack}: it follows from the \ih, since the stack $\stack$ is unchanged.
      \item \emph{Frame}: it follows from the \ih, since the frame $\skframe$ is unchanged.
      \item \emph{Environment}: it follows from the \ih, since the environment $\genv$ is unchanged.
      \end{enumerate} 

  %%%%%%%%%%%%%%%%%%%%%%%%%%%%%%%%%%%%%%%%%%%%%%%%%%%%%%%%%%%%%%%%%%%%%%%%%%%%%%%%
  \item \case{$\statetwo = 
          \mac{\skback}{\skframe\cons\var}{\code}{\stempty}{\genv}
          \tomachasubfour
          \mac{\skback}{\skframe}{\la\var\code}{\stempty}{\genv} = \state
        $}

\begin{enumerate}
      \item \emph{Evaluating Code}: trivial since $\skphase \neq \skeval$.
      \item \emph{Stack}: trivial since the stack is empty.
      \item \emph{Frame}: any pair of the form $\skap{\codetwo}{\stack}$ or $\gcentry{\la\var\codetwo}{\stacktwo}$ in the frame $\skframe$ is also in the frame $\var\cons\skframe$, so it follows from the \ih.
      \item \emph{Environment}: any substitution of the form $\esub{\vartwo}{\codetwo}$ in the environment $\closescopem\var\cons\genv$
            is also in the environment $\genv$, so $\codetwo$ is a subterm of $\code_0$ by \ih.
      \end{enumerate} 

  %%%%%%%%%%%%%%%%%%%%%%%%%%%%%%%%%%%%%%%%%%%%%%%%%%%%%%%%%%%%%%%%%%%%%%%%%%%%%%%%
  \item \case{$\statetwo = 
          \mac{\skback}{ \skframe\cons\skap{\code}{\stack} }{\codetwo}{\stempty}{\genv}
          \tomachasubfive
          \mac{\skback}{\skframe}{\code\codetwo}{\stack}{\genv} = \state
        $}

\begin{enumerate}
      \item \emph{Evaluating Code}: trivial since $\skphase \neq \skeval$.
      \item \emph{Stack}: the stack $\stack$ occurs at the left-hand side in the frame $\skap{\code}{\stack}\cons\skframe$,
            so by \ih\ we know that any piece of code in $\stack$ is a subterm of $\code_0$.
      \item \emph{Frame}: any pair $\skap{\codethree}{\stack}$ or $\gcentry{\la\var\codetwo}{\stacktwo}$ in the frame $\skframe$ is also
            in the frame $\skap{\code}{\stack}\cons\skframe$, so it follows from the \ih
            
      \item \emph{Environment}: it follows from the \ih, since the environment $\genv$ is unchanged.
      \end{enumerate}

  %%%%%%%%%%%%%%%%%%%%%%%%%%%%%%%%%%%%%%%%%%%%%%%%%%%%%%%%%%%%%%%%%%%%%%%%%%%%%%%%
  \item \case{$\statetwo = 
          \mac{\skback}{\skframe}{\code}{\codetwo\cons\stack}{\genv}
          \tomachasubsix
          \mac{\skeval}{ \skframe\cons\skap{\code}{\stack} }{\codetwo}{\stempty} {\genv} = \state
        $}
\begin{enumerate}
      \item \emph{Evaluating Code}: note that $\codetwo$ is an element of the stack at the left-hand side of the transition,
            so by \ih\ $\codetwo$ is a subterm of $\code_0$.
      \item \emph{Stack}: trivial since the stack is empty.
      \item \emph{Frame}: any pair in the frame $\skap{\code}{\stack}\cons\skframe$ or $\gcentry{\la\var\codetwo}{\stacktwo}$ is also in the frame $\skframe$
            except for $\skap{\code}{\stack}$.
            Consider a piece of code $\codefour$ in the stack $\stack$. It is trivially also a piece of
            code in the stack $\codetwo\cons\stack$, so by the second point of the \ih\ we have that $\codefour$ is
            a subterm of $\code_0$.
      \item \emph{Environment}: it follows from the \ih, since the environment $\genv$ is unchanged.
      \end{enumerate} 
        
   \end{itemize}
  \end{itemize}

  \item \emph{Environment Size}: simply note that the only transition that extends the environment is $\tomachmtwo$.
\end{enumerate}
  % \end{proof}
% }  
\end{proof}

The proof of the polynomial bound of the overhead is in three steps. First, we bound the number $\sizee\exec$ of exponential transitions of an execution $\exec$ using the number $\sizem\exec$ of multiplicative transitions of $\exec$, that by \refthm{weak-bis} corresponds to the number of \perp $\beta$-steps on the $\l$-calculus. Second, we bound the number $\sizecom\exec$ of commutative transitions of $\exec$ by using the number of exponential transitions and the size of the initial term. Third, we put everything together.

\paragraph{Multiplicative vs Exponential Analysis} This step requires two auxiliary lemmas. The first one essentially states that commutative transitions \emph{eat} normal and neutral terms, as well as \perp contexts.

\begin{lemma}
\label{l:comm-and-lo-ctxs} % \reflemmap{comm-and-lo-ctxs}{one}
Let $\state = \skamstate{\skeval}{\skframe}{\code}{\stack}{\genv}$ be a state and $\genv$ be well-labeled. Then

\begin{enumerate}
\item \label{p:comm-and-lo-ctxs-one}
If $\relunf\code\genv$ is a normal term and $\pi = \stempty$ then $\state \tomachhole{\csym}^* \skamstate{\skback}{\skframe}{\code}{\stack}{\genv}$.

\item \label{p:comm-and-lo-ctxs-two}
If $\relunf\code\genv$ is a neutral term then $\state \tomachhole{\csym}^* \skamstate{\skback}{\skframe}{\code}{\stack}{\genv}$.

\item \label{p:comm-and-lo-ctxs-three}
If $\code = \ctxp\codetwo$ with $\relunf\ctx\genv$ a \perp context then there exist $\skframetwo$ and $\stacktwo$ such that $\state \tomachc^* \skamstate{\skeval}{\skframetwo}{\codetwo}{\stacktwo}{\genv}$. Moreover, if $\ctx$ is applicative then $\stacktwo$ is non-empty.

\end{enumerate}
\end{lemma}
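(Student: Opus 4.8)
The plan is to prove items (1) and (2) together by structural induction on the code $\code$, and then item (3) by structural induction on the context $\ctx$, using (1)--(2) as sub-routines. No property of $\genv$ is used beyond well-labeledness (\refdef{well-lab-env}) and the constraints it imposes on the unfoldings of labeled entries; the machine is used only through its commutative transitions $\csym_1,\dots,\csym_7$, which merely rearrange the frame and the stack (\reflemma{comm-transp}).

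For (1)--(2), the base case $\code=\var$ is settled by inspecting $\genv(\var)$: if $\genv(\var)=\bot$ or $\genv(\var)=\esublabe\var\codetwo\labneu$ then $\relunf\var\genv$ is a variable or a neutral term and $\csym_3$ applies; if $\genv(\var)=\esublabe\var\codetwo\lababs$ then $\relunf\var\genv$ is an abstraction, so (2) is vacuous while in (1) the stack is empty by hypothesis and $\csym_3$ applies again; and $\genv(\var)=\esublabe\var\codetwo{\labred n}$ cannot occur, since then $\relunf\var\genv$ would contain a $\beta$-redex, against both hypotheses. If $\code=\la\var\codetwo$ then $\relunf\code\genv$ is an abstraction, so (2) is vacuous, and in (1) we have $\stack=\stempty$ with $\relunf\codetwo\genv$ normal: fire $\csym_2$, apply the \ih for (1) to $\codetwo$ with the enlarged frame, and fire $\csym_4$. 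If $\code=\codetwo\codethree$ then a normal (resp.\ neutral) application is in any case neutral, which forces $\relunf\codetwo\genv$ neutral and $\relunf\codethree\genv$ normal: fire $\csym_1$, apply the \ih for (2) to $\codetwo$ with stack $\codethree\cons\stack$, fire $\csym_6$ (the stack is non-empty), apply the \ih for (1) to $\codethree$ with empty stack, and fire $\csym_5$.

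For (3), note that by \reflemmap{lo-properties}{unf-rem} the context $\ctx$ is itself \perp, and that unfolding leaves untouched the spine from the root to the hole, so from ``$\relunf\ctx\genv$ is \perp'' one reads off both the shape of $\ctx$ and which \perp-rule governs each node on that spine, together with its side condition on the unfolded off-spine subterms. The case $\ctx=\ctxhole$ is immediate (no step; $\ctxhole$ is not applicative). If $\ctx=\ctx'\codethree$ with the hole in $\ctx'$, then necessarily $\ctx'$ is not of the form $\la\vartwo\ctxtwo$ (the side condition of \perpRuleAppL) and $\relunf{\ctx'}\genv$ is \perp: fire $\csym_1$ and recurse on $\ctx'$ with stack $\codethree\cons\stack$; since $\ctx$ is applicative iff $\ctx'=\ctxhole$ or $\ctx'$ is applicative, and in the first sub-case we are already done with the non-empty stack $\codethree\cons\stack$, the ``applicative $\Rightarrow$ non-empty stack'' clause is preserved. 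If $\ctx=\codethree\ctx'$ with the hole in $\ctx'$, then either $\codethree=\la\var\codefour$ with $\var\notin\fv{\relunf\codefour\genv}$ (the \perpRuleGc case): fire $\csym_1$, then $\csym_7$, and recurse on $\ctx'$ with empty stack; or $\relunf\codethree\genv$ is neutral (the \perpRuleAppR case): fire $\csym_1$, use item (2) to backtrack on $\codethree$, fire $\csym_6$, and recurse on $\ctx'$ with empty stack. In both sub-cases $\ctx$ is applicative iff $\ctx'\neq\ctxhole$ and $\ctx'$ is, so the clause again propagates through the \ih. Finally, if $\ctx=\la\var\ctx'$, fire $\csym_2$ and recurse on $\ctx'$.

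The main obstacle is discharging the side conditions of $\csym_7$ and, above all, $\csym_2$. For $\csym_7$ one needs the argument not to be a variable and $\var\notin\fv\codefour$ (not merely $\var\notin\fv{\relunf\codefour\genv}$); both follow from the shape of the context $\ctx$ and the standard name and erasing-redex invariants of \maxmam states. For $\csym_2$ one needs the current stack to be empty, and this is the delicate point: the side condition ``$\ctx\neq\la\vartwo\ctxtwo$'' of \perpRuleAppL guarantees that a \perp context never interposes a $\lambda$ between the hole and a non-empty application spine, so in the induction a context of shape $\la\var\ctx'$ is reached only either right after a $\csym_6$ or $\csym_7$ step (which leaves the stack empty) or as the top-level call --- and in the latter case the stack is empty in the configurations to which the lemma is applied. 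Granting this, the strengthening ``applicative $\Rightarrow$ non-empty stack'' is precisely the extra invariant that makes the $\ctx'\codethree$ case of the induction go through, and it is re-established in each remaining case as indicated.
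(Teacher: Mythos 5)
Your overall route is the same as the paper's (items (1)--(2) by mutual induction on the code, item (3) by induction on the \perp context using (1)--(2) as subroutines), and your treatment of (1)--(2) is sound. The genuine gaps are all in item (3), and they sit exactly in the case that is new with respect to the leftmost-outermost setting, namely the hole-on-the-right case governed by \perpRuleGc. First, from ``$\relunf\ctx\genv$ is \perp'' you read off that this case has the syntactic shape $\ctx=(\la\var\codefour)\ctx'$. Unfolding does preserve the spine from the root to the hole, but not the left subterm of an application on that spine: the hypothesis also admits $\ctx=\vartwo\,\ctx'$ with $\genv(\vartwo)=\esublabe\vartwo{\la\var\codethree}\lababs$, whose unfolding is an erasing abstraction; there, after $\tomachasubone$ the deterministic machine performs the \emph{exponential} step $\tomacheabs$ (or $\tomachered$, if the entry is $\labredsym$-labeled), not $\tomachasubseven$, so the purely commutative sequence you prescribe does not exist and this sub-case escapes your dichotomy ``$\codethree$ an abstraction or $\relunf\codethree\genv$ neutral''. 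Second, the side condition of $\tomachasubseven$ in the \maxmam that the pushed argument is \emph{not a variable} does not ``follow from the shape of the context'': when $\ctx'=\ctxhole$ and $\codetwo$ is a variable---a configuration the statement allows, and precisely the one produced by well-labeledness for labels $\labred n$ with $n\geq 2$ when the lemma is invoked in \reflemma{exp-decr}---the argument \emph{is} a variable and the machine fires the multiplicative $\tomachmone$ instead, so again no commutative sequence reaches a state with code $\codetwo$. These are not presentational slips: in both configurations the transition you fire is simply not enabled, so the case analysis must either genuinely exclude these situations from the hypotheses or treat them explicitly.

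A further problem is that the lemma assumes only that $\genv$ is well-labeled, not that $\state$ is reachable, so you may not invoke ``the standard name and erasing-redex invariants of \maxmam states'': in particular the syntactic condition $\var\notin\fv\codefour$ required by $\tomachasubseven$ does not follow from the hypothesis, which only yields $\var\notin\fv{\relunf\codefour\genv}$. Likewise, discharging the empty-stack requirement of $\tomachasubtwo$ by appealing to ``the configurations to which the lemma is applied'' proves a weaker statement than the one given, which allows $\ctx=\la\var\ctx'$ together with an arbitrary $\stack$. Your observation that inside the induction the \perpRuleAppL side condition keeps abstraction-shaped contexts away from non-empty stacks is the right one, but the top-level case has to be covered by a hypothesis (or a strengthened statement), not by a remark about how the lemma will later be used.
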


% !TEX root = ../main.tex
%\myproof{
% \begin{proof}

The first two points rest on the following inductive mutually inductive definition of \emph{normal} and \emph{neutral} terms:

 \begin{center}
$\begin{array}{c@{\hspace{1em}}c@{\hspace{1em}}cccc}
	\AxiomC{}
	\UnaryInfC{$\var$ is neutral}
	\DisplayProof
&
	\AxiomC{$\tm$ is neutral}
	\AxiomC{$\tmtwo$ is normal}
	\BinaryInfC{$\tm \tmtwo$ is neutral}
	\DisplayProof 
	\\\\
		\AxiomC{$\tm$ is neutral}
	\UnaryInfC{$\tm$ is normal}
	\DisplayProof
	&
	\AxiomC{$\tm$ is normal}
	\UnaryInfC{$\la\var\tm$ is normal}
	\DisplayProof
\end{array}$
\end{center}

And the following two immediate properties:
\begin{enumerate}
	\item If $\relunf\code\genv$ is normal then $\code$ is normal;
	\item If $\relunf\code\genv$ is neutral then $\code$ is neutral.
\end{enumerate}

Now we can proceed with the proof.

\begin{proof}\hfill
\begin{enumerate}
\item If $\relunf\code\genv$ is normal then $\code$ is normal. By induction on the derivation of $\code$ is normal. Cases:
\begin{itemize}
\item \emph{Neutral}, \ie $\code$ is neutral. If $\relunf\code\genv$ is neutral then it follows by Point 2. Otherwise $\code = \var$ and $\relunf\code\genv$ is an abstraction, \ie $\genv = \genvtwo \cons \esublabe\var\codetwo\lababs \cons \genvthree$. Then:
$$\skamstate{\skeval}{\skframe}{\var}{\stempty}{\genv} 
 \tomachasubthree 
\skamstate{\skback}{\skframe}{\var}{\stempty}{\genv}$$

\item \emph{Abstraction}, \ie $\code = \la\var\codetwo$ with $\codetwo$ normal. Since $\relunf\code\genv = \la\var\relunf\codetwo\genv$ is normal then $\relunf\codetwo\genv$ is normal and we can use the \ih on it. Then 

\begin{center}$\begin{array}{rrcccccccc}
\skamstate{\skeval}{\skframe}{\la\var\codetwo}{\stempty}{\genv} 
& \tomachasubtwo &
\skamstate{\skeval}{\skframe\cons\var}{\codetwo}{\stempty}{\genv}  \\
 (\ih) &\tomachc^* &
\skamstate{\skback}{\skframe\cons\var}{\codetwo}{\stempty}{\genv}
& \tomachasubtwo &
\skamstate{\skback}{\skframe}{\la\var\codetwo}{\stempty}{\genv} 
\end{array}$\end{center}
\end{itemize}

%%%%

\item If $\relunf\code\genv$ is neutral then $\code$ is neutral. By induction on the derivation of $\code$ is neutral. Cases:
\begin{itemize}
\item \emph{Variable}, \ie $\code = \var$. If $\relunf\code\genv$ is neutral then either $\genv(\var) = \bot$ or $\genv = \genvtwo \cons \esublabe\var\codetwo\labneu \cons \genvthree$. In both cases:
$$\skamstate{\skeval}{\skframe}{\var}{\stack}{\genv} 
 \tomachasubthree 
\skamstate{\skback}{\skframe}{\var}{\stack}{\genv}$$

\item \emph{Application}, \ie $\code = \codetwo\codethree$ with $\codetwo$ neutral and $\codethree$ normal. Since $\relunf{\codetwo\codethree}\genv$ is neutral, $\relunf\codetwo\genv$ is neutral and $\relunf\codethree\genv$ is normal, so that we can use Point 1 and the \ih on them. Then 

\begin{center}$\begin{array}{rrlccccccc}
\skamstate{\skeval}{\skframe}{\codetwo\codethree}{\stack}{\genv} 
& \tomachasubone &
\skamstate{\skeval}{\skframe}{\codetwo}{\codethree\cons\stack}{\genv}  \\
 (\ih) & \tomachc^* &
\skamstate{\skback}{\skframe}{\codetwo}{\codethree\cons\stack}{\genv}\\
& \tomachasubsix &
\skamstate{\skeval}{\skframe\cons\dentry\codetwo\stack}{\codethree}{\stempty}
{\genv}
\\
(\mbox{Point 1}) & \tomachc^* &
\skamstate{\skback}{\skframe\cons\dentry\codetwo\stack}{\codethree}{\stempty}{\genv}
& \tomachasubfive &
\skamstate{\skback}{\skframe}{\codetwo\codethree}{\stack}{\genv} 
\end{array}$\end{center}
\end{itemize}

\item If $\relunf\ctx\genv$ is a \perp context then $\ctx$ is a \perp context by \reflemmap{lo-properties}{unf-rem}. By induction on $\ctx$ being \perp (Definition \refdef{ilob-ctx}, page \pageref{def:ilob-ctx}). If $\ctx$ is empty then it is immediate. The other cases:
\begin{itemize}
\item \emph{Application Left} (rule \perpRuleAppL), \ie $\ctx = \ctxtwo \codethree$ with $\ctxtwo$ \perp and $\ctxtwo\neq\la\var\ctxthree$. Since $\relunf\ctx\genv = \relunf\ctxtwo\genv \relunf\codethree\genv$, we have that $\relunf\ctxtwo\genv$ is a \perp context and we can apply the \ih to it. Then:
$$\state =  \skamstate{\skeval}{\skframe}{\ctxtwop\codetwo \codethree}{\stack}{\genv} \tomachasubone \skamstate{\skeval}{\skframe}{\ctxtwop\codetwo}{\codethree\cons\stack}{\genv}$$

Now, if $\ctxtwo$ is empty then $\ctx$ is applicative, the stack is non-empty, and the statement is proved. Otherwise, it follows from the \ih:
$$ \skamstate{\skeval}{\skframe}{\ctxtwop\codetwo}{\codethree\cons\stack}{\genv}\underbrace{\tomachhole{\csym}^*}_{\ih} \skamstate{\skeval}{\skframetwo}{\codetwo}{\stacktwo}{\genv}
$$

\item \emph{Abstraction} (rule \perpRuleLambda), \ie $\ctx = \la\var\ctxtwo$ with $\ctxtwo$ \perp. As in the previous case, it is immediately seen that we can apply the \ih to $\ctxtwo$.

$$\state =  \skamstate{\skeval}{\skframe}{\la\var\ctxtwop\codetwo}{\stack}{\genv} \tomachasubone \skamstate{\skeval}{\skframe\cons\var}{\ctxtwop\codetwo}{\stack}{\genv} \underbrace{\tomachhole{\csym}^*}_{\ih} \skamstate{\skeval}{\skframetwo}{\codetwo}{\stacktwo}{\genv}
$$

The \emph{moreover} part follows from the \ih

\item \emph{Application Right} (rule \perpRuleAppR), \ie $\ctx = \codethree \ctxtwo$ with $\ctxtwo$ \perp and $\codethree$ neutral. Since $\relunf\ctx\genv = \relunf\codethree\genv \relunf\ctxtwo\genv$, we have that $\codethree$ is neutral (and so we can apply point 2) and $\relunf\ctxtwo\genv$ is a \perp context and so we can use the \ih on it. 

\begin{center}$\begin{array}{rrlccccccc}
\state = \skamstate{\skeval}{\skframe}{ \codethree \ctxtwop\codetwo }{\stack}{\genv} 
& \tomachasubone &
\skamstate{\skeval}{\skframe}{\codethree}{ \ctxtwop\codetwo \cons\stack}{\genv}  \\
 (\mbox{Point 2}) & \tomachc^* &
\skamstate{\skback}{\skframe}{\codethree}{ \ctxtwop\codetwo \cons\stack}{\genv}\\
& \tomachasubsix &
\skamstate{\skeval}{\skframe\cons\dentry\codethree\stack}{\ctxtwop\codetwo}{\stempty}
{\genv}

 & \underbrace{\tomachc^*}_{(\ih)} &
\skamstate{\skeval}{\skframetwo}{\codetwo}{\stacktwo}{\genv}
\end{array}$\end{center}
The \emph{moreover} part follows from the \ih

\item \emph{Erasing Redex} (rule \perpRuleGc), \ie $\ctx = (\la\var\codethree) \ctxtwo$ with $\ctxtwo$ \perp and $\var \notin \fv\codethree$. Since $\relunf\ctx\genv = (\la\var\relunf\codethree\genv) \relunf\ctxtwo\genv$ we have that $\relunf\ctxtwo\genv$ is a \perp context and so we can use the \ih on it. 

\begin{center}$\begin{array}{rrlccccccc}
\state = \skamstate{\skeval}{\skframe}{ (\la\var\codethree) \ctxtwop\codetwo }{\stack}{\genv} 
& \tomachasubone &
\skamstate{\skeval}{\skframe}{\la\var\codethree}{ \ctxtwop\codetwo \cons\stack}{\genv}  \\
 & \tomachasubseven & 
\skamstate{\skeval}{\skframe\cons \gcentry{ \la\var\codethree }\stack }{\ctxtwop\codetwo}{\stack}{\genv}\\
 & \underbrace{\tomachc^*}_{(\ih)} &
\skamstate{\skeval}{\skframetwo}{\codetwo}{\stacktwo}{\genv}
\end{array}$\end{center}

The \emph{moreover} part follows from the \ih

\end{itemize}

\end{enumerate}

\end{proof}
%}

The second lemma uses \reflemma{comm-and-lo-ctxs} and the environment labels invariant (\reflemmap{umam-invariants}{env-lab}) to show that the exponential transitions of the \maxmam are indeed useful, as they head towards a multiplicative transition, that is towards $\beta$-redexes.

\begin{lemma}[Useful Exponentials Lead to Multiplicatives]
\label{l:exp-decr} % \reflemmap{exp-decr}{one}
Let $\state$ be a reachable state such that $\state \tomacheredp n \statetwo$.
\begin{enumerate}
\item \label{p:exp-decr-one} If $n = 1$ then $\statetwo\tomachc^*\tomachm\statethree$;
\item \label{p:exp-decr-two} If $n = 2$ then $\statetwo\tomachc^*\tomacheabs\tomachc^*\tomachm\statethree$ or $\statetwo\tomachc^*\tomacheredp 1\statethree$;
\item \label{p:exp-decr-three} If $n > 1$ then $\statetwo\tomachc^*\tomacheredp{n-1}\statethree$.
\end{enumerate}
\end{lemma}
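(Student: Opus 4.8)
Proof proposal.

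The statement is about machine runs only, so the plan is a direct dissection of the exponential transition $\state\tomacheredp n\statetwo$, guided by the structure that \refdef{well-lab-env} attaches to redex-labelled environment entries. First I would unpack the transition: $\state\tomacheredp n\statetwo$ forces $\state$ to have code $\var$ in phase $\skeval$ with $\genv(\var)=\esublabe\var\code{\labred n}$, and $\statetwo$ to be the same state with code $\rename\code$. By the environment-labels invariant (\reflemmap{umam-invariants}{env-lab}), $\genv$ is well-labeled; splitting $\genv=\genvthree\cons\esublabe\var\code{\labred n}\cons\genvfour$ and using that suffixes of well-labeled environments are well-labeled (immediate from \refdef{well-lab-env}), the Redexes clause applied to this entry produces a \perp context $\ctx$ with $\code=\ctxp\codetwo$, together with the case-dependent data on $\codetwo$, on $\genvfour$, and on whether $\ctx$ is applicative. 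Folding the $\alpha$-renaming into this decomposition is harmless, since $\alpha$-renaming preserves the \perp predicate (its side conditions ``not an abstraction'', ``$x\notin\fv\cdot$'', ``neutral'' are $\alpha$-invariant), so I may assume $\statetwo$ has code $\ctxp\codetwo$ with $\ctx$ \perp.

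The pivotal step is then to show that the unfolded context $\relunf\ctx\genv$ is still \perp, which is exactly the hypothesis of \reflemmap{comm-and-lo-ctxs}{three}. This is the main obstacle, because unfolding does not preserve \perp-ness in general: a neutral subterm sitting in a \perpRuleAppR position can be turned into an abstraction by the environment. I would extract it from well-labeledness: every such neutral subterm of $\ctx$ is headed by a variable occurrence whose $\genv$-entry carries the $\labneu$ label (or there is no entry), so the Neutral Terms clause of \refdef{well-lab-env}, together with the name invariant (\reflemmap{umam-invariants}{new-name-abs-eval}) ruling out capture, keeps its unfolding neutral; the erasing-redex binders appearing in $\ctx$ similarly keep their $x\notin\fv\cdot$ condition after unfolding. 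The same circle of facts also gives that the argument of an erasing $\rtobperp$-redex stored inside a $\labred 1$ entry has normal unfolding --- something the \ucam certifies when it emits $\labred 1$ via $\toucmachoutsix$, through its normal-form invariant --- which is needed in the case analysis below. Granting this, \reflemmap{comm-and-lo-ctxs}{three} yields $\statetwo\tomachc^*\statethree'$ with $\statethree'$ in phase $\skeval$, code $\codetwo$, some frame, and a stack $\stacktwo$ that is non-empty whenever $\ctx$ is applicative.

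I would finish with a case analysis on $\codetwo$ as classified by \refdef{well-lab-env}. If $n=1$, then $\codetwo=(\la\vartwo\tm)\tmthree$ is a $\rtobperp$-redex: from $\statethree'$ one $\tomachasubone$ step puts $\la\vartwo\tm$ over a non-empty stack, and then either $\vartwo\in\fv\tm$, so $\tomachmone$ or $\tomachmtwo$ fires at once, or $\tmthree$ is normal, so either $\tmthree$ is a variable and $\tomachmone$ fires, or $\tomachasubseven$ fires and, since $\relunf\tmthree\genv$ is normal, \reflemmap{comm-and-lo-ctxs}{one} walks through it to a backtracking state on which $\tomachmthree$ fires; in every branch $\statetwo\tomachc^*\tomachm\statethree$. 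If $n>1$, then $\codetwo$ is a variable $\vartwo$ and $\genvfour$ contains an entry $\esublabe\vartwo{\codethree}{\lab'}$ with: $\lab'=\labred{n-1}$ when $n>2$ (and as one alternative when $n=2$), in which case a $\tomacheredp{n-1}$ step fires on $\codetwo=\vartwo$ in $\statethree'$, which is the third claim --- and for $n=2$ with $\lab'=\labred 1$ this is also the second disjunct of the second claim; or $n=2$, $\lab'=\lababs$ and $\ctx$ applicative, in which case $\stacktwo$ is non-empty by the previous paragraph, so $\tomacheabs$ fires and substitutes the normal abstraction $\rename\codethree$, after which an abstraction faces a non-empty stack and a few more commutative steps (possibly a $\tomachasubseven$ into a normal argument handled again by \reflemmap{comm-and-lo-ctxs}{one}) reach a multiplicative transition, giving the first disjunct of the second claim.

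In short, the substantial content is the unfolding-stability argument of the second paragraph --- promoting the purely syntactic \perp context handed over by \refdef{well-lab-env} to an unfolded \perp context digestible by \reflemma{comm-and-lo-ctxs} --- which leans on the Neutral Terms and Abstractions clauses of \refdef{well-lab-env}, the name invariant, and the subterm invariant (\reflemmap{umam-quant-invs}{subterm}). Once that is secured, the remainder is finite bookkeeping over transition shapes, the only delicate point being to line up the $n=2$ sub-cases with the two disjuncts of the second claim and to see that ``$\ctx$ applicative'' is precisely the hypothesis that \reflemmap{comm-and-lo-ctxs}{three} converts into the non-empty stack making $\tomacheabs$ applicable.
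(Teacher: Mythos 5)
Your skeleton---unpack the exponential transition, invoke the environment-labels invariant, split off the $\labred n$ entry, use \reflemmap{comm-and-lo-ctxs}{three} to travel down the context supplied by the Redexes clause, then case on $n$---is the right one. The problem is that the two bridging facts you single out as ``the pivotal step'' are asserted rather than derived, and they do not follow from the facts you cite. \refdef{well-lab-env} constrains only the labeled entries of the environment; it says nothing about the variables occurring \emph{inside} the \perp context $\ctx$ of a $\labred n$ entry. So your claim that every neutral subterm of $\ctx$ sitting in a \perpRuleAppR position has its head variable bound to a $\labneu$ entry (or unbound) is not a consequence of the Neutral Terms clause, and the promotion of ``$\ctx$ is \perp'' to ``$\relunf\ctx\genv$ is \perp'' remains unproved. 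Likewise, the normality of the unfolding of the argument of an erasing redex stored in a $\labred 1$ entry is information held by the \ucam run that produced the label (via its own normal-form invariant when it fires $\toucmachoutsix$), not by well-labeledness as stated; to use it you must either strengthen the definition or establish it as an additional invariant of reachable \maxmam states. Gesturing at ``the \ucam certifies it'' does not discharge this, because the lemma has to be proved from the stated invariants of the \maxmam, not by re-running the \ucam inside the proof.

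The $n=2$, $\lababs$-and-applicative branch is where this actually bites. The \ucam emits $\labred 2$ through $\toucmachoutthree$ \emph{without ever inspecting the argument} occupying the applicative position of $\ctx$. Hence after the $\tomacheabs$ step the substituted abstraction may be erasing while the stack-top argument unfolds to a non-normal term; then $\tomachasubseven$ fires and the machine descends into that argument, where it can meet further useful variables, i.e.\ further exponential transitions, before reaching any multiplicative. Your parenthetical ``possibly a $\tomachasubseven$ into a normal argument handled again by \reflemmap{comm-and-lo-ctxs}{one}'' silently assumes exactly the normality that nothing you have established provides, so the bookkeeping in that branch does not close as written: the run you produce need not have the shape $\tomachc^*\tomacheabs\tomachc^*\tomachm$ nor $\tomachc^*\tomacheredp 1$. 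This erasing detour is precisely where the Maximal MAM differs from the Useful MAM, and it is the point that needs a dedicated argument (or a sharper statement of what a $\labred 2$ label promises in the erasing case) rather than the LO-style case sweep you carry over; the $n=1$ non-erasing and $n>2$ cases, by contrast, are fine.
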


\begin{proof}
% !TEX root = ../main.tex
%\myproof{
%\begin{proof}
We have $\mac{ \skeval }{ \skframe }{ \var }{ \stack }{ \genv } 
			\tomachered 
			\mac{ \skeval }{ \skframe }{ \rename{\code} }{ \stack }{ \genv }$ with $\genv(\var) = \esublabe\var\code{\labred n}$. By the labeled environment invariant (\reflemmap{umam-invariants}{env-lab}), $\genv$ is well-labeled. Then $\rename{\code} = \ctxp{\codetwo}$ with $\ctx$ \perp context. By \reflemmap{comm-and-lo-ctxs}{three} we obtain
			$$\statetwo  =  
 \skamstate{\skeval}{\skframe}{\ctxp\codetwo}{\stack}{\genv}
 \tomachc^* \skamstate{\skeval}{\skframetwo}{\codetwo}{\stacktwo}{\genv}$$
			Three cases:
\begin{enumerate}
\item $n = 1$) then by well-labeledness $\codetwo$ is a $\rtobperp$-redex $(\la\var\codethree)\codefour$, that is, either $\var \notin \fv\codethree$ or $\codefour$ is normal. First, we have
$$ \skamstate{\skeval}{\skframetwo}{(\la\var\codethree)\codefour}{\stacktwo}{\genv} \tomachasubone 
\skamstate{\skeval}{\skframetwo}{(\la\var\codethree)}{\codefour\cons\stacktwo}{\genv} \tomachm \statethree$$
Then there are 3 possible cases:
\begin{enumerate}
	\item $\codefour$ is a variable. Then $\tomachmone$ applies.
	\item $\var \in \fv\codethree$. Then $\tomachmtwo$ applies.
	\item $\var \notin \fv\codethree$ and $\codefour$ is normal and not a variable. Then:

\begin{center}$\begin{array}{rrlccccccc}
\skamstate{\skeval}{\skframetwo}{(\la\var\codethree)}{\codefour\cons\stacktwo}{\genv} \tomachm \statethree
 & \tomachasubseven & 
\skamstate{\skeval}{\skframe\cons \gcentry{ \la\var\codethree }\stack }{\codefour}{\stempty}{\genv}\\
 & \underbrace{\tomachc^*}_{(\reflemmaeqp{comm-and-lo-ctxs}{one})} &
\skamstate{\skback}{\skframe\cons \gcentry{ \la\var\codethree }\stack }{\codefour}{\stempty}{\genv}\\
 & \tomachmthree &
\skamstate{\skback}{\skframe }{\codethree}{\stack}{\genv}\\

\end{array}$\end{center}
	
\end{enumerate}

\item \emph{$n = 2$, $\genvtwo = \genvthree\cons\esublab\vartwo\codethree\cons\genvfour$ with $\lab = \lababs$, and $\ctx$ is applicative}) By well-labeledness $\codetwo$ is a variable $\vartwo$ and  $\codethree = \la\varthree\codefour$. Then:
\begin{center}$\begin{array}{rrlccccccc}
\state & \tomacheredp 2 &  
 \skamstate{\skeval}{\skframe}{\ctxp\vartwo}{\stack}{\genv}
 \underbrace{\tomachc^*}_{(\reflemmaeqp{comm-and-lo-ctxs}{three})} \skamstate{\skeval}{\skframetwo}{\vartwo}{\stacktwo}{\genv}
\end{array}$\end{center}
Always by \reflemmap{comm-and-lo-ctxs}{three}, the stack $\stacktwo$ is non-empty, say $\stacktwo = \code'\cons\stackthree$. Then we continue with:
\begin{center}$\begin{array}{rrlccccccc}
\tomacheabs &  
 \skamstate{\skeval}{\skframe}{\la\varthree\codefour}{\code'\cons\stackthree}{\genv}
\end{array}$\end{center}
Now we repeat the resoning for the case $n = 1$ completing the execution with $\tomachc^*\tomachm$.

\item \emph{$n > 1$ and $\genvtwo = \genvthree\cons\esublab\vartwo\codethree\cons\genvfour$ with $\lab = \labred {n-1}$}) By well-labeledness $\codetwo$ is a variable $\vartwo$. Then:
\begin{center}$\begin{array}{rrlccccccc}
\state & \tomacheredp n &  
 \skamstate{\skeval}{\skframe}{\ctxp\vartwo}{\stack}{\genv}
 \underbrace{\tomachc^*}_{(\reflemmaeqp{comm-and-lo-ctxs}{three})} 
 \skamstate{\skeval}{\skframetwo}{\vartwo}{\stacktwo}{\genv} 
 & \tomacheredp {n-1} &
 \skamstate{\skeval}{\skframetwo}{\codethree}{\stacktwo}{\genv} 
\end{array}$\end{center}
\end{enumerate}
%\end{proof}
%}  
\end{proof}

Finally, using the environment size invariant (\reflemmap{umam-quant-invs}{env-size}) we obtain the local boundedness property, that is used to infer a quadratic bound via a standard reasoning (already employed in \cite{DBLP:journals/corr/AccattoliL16}).

\begin{theorem}[Exponentials vs Multiplicatives]
\label{thm:exp-linearity} % \refthmp{exp-linearity}{quad}
  Let  $\state$ be an initial \maxmam state and $\exec: \state \tomach^* \statetwo$. 
  \begin{enumerate}
   \item \label{p:exp-linearity-local-bound}
   \emph{Local Boundedness}: if $\exectwo:\statetwo \tomach^*\statethree$ and $\sizem\exectwo = 0$ then $\sizee\exectwo \leq \sizem\exec$;
   \item \label{p:exp-linearity-quad}
   \emph{Exponentials are Quadratic in the Multiplicatives}: $\sizee \exec \in O(\sizem\exec^2)$.
  \end{enumerate}
\end{theorem}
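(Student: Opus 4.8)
The plan is to establish the first item, \emph{local boundedness}, by a direct analysis of the exponential transitions, and then to derive the second item, the quadratic bound, from it by the standard counting of \cite{DBLP:journals/corr/AccattoliL16}. For local boundedness, fix the hypotheses: $\exec\colon\state\tomach^*\statetwo$ from an initial state $\state$, and $\exectwo\colon\statetwo\tomach^*\statethree$ with $\sizem\exectwo=0$. First I would observe that the environment is the same in all states traversed by $\exectwo$, since the only transition modifying $\genv$ is the multiplicative $\tomachmtwo$, which by hypothesis does not occur in $\exectwo$; write $\genv$ for this common environment. It then suffices to prove $\sizee\exectwo\le\size\genv$: indeed $\statetwo$ is reachable via $\exec$, so the environment size invariant (\reflemmap{umam-quant-invs}{env-size}) yields $\size\genv\le\sizem\exec$, which closes the case.

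To bound $\sizee\exectwo$ by $\size\genv$, I would trace the exponential transitions of $\exectwo$ along the chain supplied by \reflemma{exp-decr}: a step $\tomacheredp n$ is followed, up to commutative transitions, either by a step $\tomacheredp{n-1}$, or --- for small $n$ --- by a multiplicative transition, possibly with a single $\tomacheabs$ step before it. As $\exectwo$ contains no multiplicative transition, its exponential steps therefore form a single sequence whose $\labred\cdot$-index strictly decreases, with at most one trailing $\tomacheabs$; hence $\sizee\exectwo$ is bounded by the largest index $n_0$ of a $\labred\cdot$-label occurring in $\exectwo$ (and by $1$ when $\exectwo$ performs only a lone $\tomacheabs$). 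Finally, $n_0\le\size\genv$ follows from the \emph{Redexes} clause of \refdef{well-lab-env}: an entry of $\genv$ with label $\labred n$ and $n>1$ stores a variable sitting in a \perp context and pointing to an entry \emph{strictly below it} labelled $\labred{n-1}$ (and, at the bottom of the chain, $\labred 1$ or $\lababs$); iterating produces a strictly descending chain of $\genv$-entries of length $n$, so $n\le\size\genv$. Putting the pieces together, $\sizee\exectwo\le n_0\le\size\genv\le\sizem\exec$.

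For the quadratic bound, I would factor $\exec$ at its multiplicative transitions as $\exec=\exec_0\,\mu_1\,\exec_1\,\mu_2\cdots\mu_M\,\exec_M$, where $M=\sizem\exec$, each $\mu_i$ is a multiplicative transition and each $\exec_i$ is multiplicative-free. The prefix of $\exec$ ending just before $\exec_i$ contains exactly $i$ multiplicative transitions, so local boundedness gives $\sizee{\exec_i}\le i$. Summing, $\sizee\exec=\sum_{i=0}^M\sizee{\exec_i}\le\sum_{i=0}^M i=M(M+1)/2\in O(\sizem\exec^2)$.

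I expect the main obstacle to be the role of the $\tomacheabs$ transitions in the local-boundedness step: one must rule out that a $\tomacheabs$ step ``restarts'' a fresh descending chain of exponential transitions and thus escapes the bound $\size\genv$. Here \reflemma{exp-decr}, the environment labels invariant (\reflemmap{umam-invariants}{env-lab}) and, for the erasing case, the normal-form (\reflemmap{umam-invariants}{nf-code}) and erasing-redex (\reflemmap{umam-invariants}{erasing}) invariants are the tools: after a $\tomacheabs$ in a multiplicative-free run the code is a normal abstraction applied to an argument, so the machine is forced --- using only commutative transitions --- into a multiplicative transition (directly if the redex is non-erasing, or after walking through the necessarily normal argument and firing via $\tomachmthree$ if it is erasing), contradicting multiplicative-freeness and ending the run before any fresh chain can begin.
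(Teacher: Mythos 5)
Your overall route is the paper's own: reduce local boundedness to the intermediate bound $\sizee\exectwo\le\size\genv$, obtain $\size\genv\le\sizem\exec$ from the environment size invariant (\reflemmap{umam-quant-invs}{env-size}), read $n\le\size\genv$ off the Redexes clause of \refdef{well-lab-env} (this is exactly \refrem{well-lab-length}), and get item~2 by the standard block decomposition; these parts, as well as the observation that $\genv$ is constant along a multiplicative-free run, are fine and match the paper.

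The gap is at the point you flagged yourself, and your proposed fix does not close it. You claim that after a $\tomacheabs$ in a multiplicative-free run the machine is forced by commutative steps alone into a multiplicative, in the erasing case ``after walking through the necessarily normal argument''. No invariant gives that normality: the normal form invariant (\reflemmap{umam-invariants}{nf-code}) speaks about the backtracking code and the head-context items of the frame, not about stack items in the $\skeval$ phase, and the argument of the freshly created erasing redex may well contain further useful occurrences. Concretely, run the \maxmam on $(\la\var (\var(\var(\var z))))\,(\la\vartwo\la\varthree\varthree)$ with $z$ free: after one $\tomachasubone$ and one $\tomachmtwo$ the environment is the single entry $\esublabe\var{\la\vartwo\la\varthree\varthree}\lababs$ (the \ucam outputs $\lababs$) and the code is $\var(\var(\var z))$, so $\sizem\exec=\size\genv=1$. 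The ensuing multiplicative-free block is $\tomachasubone$, $\tomacheabs$, $\tomachasubseven$, $\tomachasubone$, $\tomacheabs$, $\tomachasubseven$, $\tomachasubone$, $\tomacheabs$: each substituted copy of the abstraction is erasing, so instead of a multiplicative the machine performs the commutative $\tomachasubseven$ and dives into the (non-normal) argument, where the next $\tomacheabs$ fires. This block has three exponential steps, so the bound $\sizee\exectwo\le\size\genv$ you aim for fails, and so does your classification of the exponentials of a block as a single decreasing $\labred{\cdot}$-chain with at most one trailing $\tomacheabs$ (note that \reflemma{exp-decr} says nothing about blocks whose first exponential is a $\tomacheabs$, which is the case here); indeed the block even has $\sizee\exectwo=3>1=\sizem\exec$. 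For comparison, the paper's proof disposes of this case with the assertion that after $\tomacheabs$ the next transition is necessarily multiplicative --- true for the Useful MAM, but precisely what the new transition $\tomachasubseven$ destroys --- so you correctly identified the delicate spot, but repairing it requires a different accounting (for instance, charging the $\tomacheabs$ steps that substitute erasing abstractions to the later $\tomachmthree$ transitions firing the erasing redexes stored in the frame), not the purely local bound via $\size\genv$.
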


\begin{remark}
  \label{rem:well-lab-length}
 In the following proof we use the fact that in from the definition of well-labeled environment (\refdef{well-lab-env}, page \pageref{def:well-lab-env}) it immediately follows that if $\genv = \genvtwo \cons \esublabe\var\code{\labred n} \cons \genvthree$ is well-labeled then the length of $\genvthree$, and thus of $\genv$, is at least $n$.
\end{remark}

\begin{proof}
 \hfill
  \begin{enumerate}
   \item We prove that $\sizee\exectwo \leq \size\genv$. The statement follows from the environment size invariant (\reflemmap{umam-quant-invs}{env-size}), for which $\size\genv \leq \sizem\exec$.
   
   If $\sizee\exectwo = 0$ it is immediate. Then assume $\sizee\exectwo > 0$, so that there is a first exponential transition in $\exectwo$, \ie $\exectwo$ has a prefix $\statetwo\tomachc^*\tomache\statefour$ followed by an execution $\execthree:\statefour \tomach^* \statethree$ such that $\sizem\execthree = 0$. Cases of the first exponential transition $\tomache$:
   \begin{itemize}
    \item  Case $\tomacheabs$: the next transition is necessarily multiplicative, and so $\execthree$ is empty. Then $\sizee\exectwo = 1$. Since the environment is non-empty (otherwise $\tomacheabs$ could not apply), $\sizee\exectwo \leq \size\genv$ holds.
    
    \item Case $\tomacheredp n$: we prove by induction on $n$ that $\sizee\exectwo \leq n$, that gives what we want because $n\leq \size\genv$ by \refrem{well-lab-length}. Cases:
    \begin{itemize}
     \item $n = 1$) Then $\execthree$  has the form $\statefour \tomachc^* \statethree$ by \reflemmap{exp-decr}{one}, and so $\sizee\exectwo = 1$.
     
     \item $n = 2$) Then $\execthree$ is a prefix of $\tomachc^*\tomacheabs \tomachc^*$ or $\tomachc^*\tomacheredp 1$ by \reflemmap{exp-decr}{two}. In both cases $\sizee\exectwo \leq 2$.
     
     \item $n > 2$) Then by \reflemmap{exp-decr}{three} $\execthree$  is either  shorter or equal to $\tomachc^*\tomacheredp{n-1}$, and so $\sizee\exectwo \leq 2$, or it is longer than $\tomachc^*\tomacheredp{n-1}$, \ie it writes as $\tomachc^*$ followed by an execution $\execthree'$ starting with $\tomacheredp{n-1}$. By \ih $\size{\execthree'}\leq n-1$ and so $\size\exectwo \leq n$.
    \end{itemize}   
   \end{itemize}
   
   \item This is a standard reasoning: since by local boundedness (the previous point) $\msym$-free sequences have a number of $\esym$-transitions that are bound by the number of preceding $\msym$-transitions, the sum of all $\esym$-transitions is bound by the square of $\msym$-transitions. It is analogous to the proof of Theorem 7.2.3 in \cite{DBLP:journals/corr/AccattoliL16}. 
  \end{enumerate}
 
 \end{proof}

\paragraph{Commutative vs Exponential Analysis} The second step is to bound the number of commutative transitions. Since the commutative part of the \maxmam is essentially the same as the commutative part of the Strong MAM of \cite{DBLP:conf/aplas/AccattoliBM15}, the proof of such bound is essentially the same as in \cite{DBLP:conf/aplas/AccattoliBM15}. It relies on the subterm invariant (\reflemmap{umam-quant-invs}{subterm}).

\begin{theorem}[Commutatives vs Exponentials]
\label{thm:commutative-bilinearity} % \refthmp{commutative-bilinearity}{three}
Let $\exec:\state\tomach^*\statetwo$ be a \maxmam execution from an initial state of code $\tm$. Then:
\begin{enumerate}
 \item \emph{Commutative Evaluation Steps are Bilinear}: $\sizecomev\exec \leq (1+\sizee\exec)\cdot\size\tm$.
 \item \emph{Commutative Evaluation Bounds Backtracking}: $\sizecombt\exec \leq 2\cdot\sizecomev\exec$.
 \item \label{p:commutative-bilinearity-three}
 \emph{Commutative Transitions are Bilinear}: $\sizecom\exec \leq 3\cdot(1+\sizee\exec)\cdot\size\tm$.
\end{enumerate}
\end{theorem}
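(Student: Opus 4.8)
\emph{Proof plan.} Since the commutative transitions of the \maxmam are literally those of the Strong MAM of \cite{DBLP:conf/aplas/AccattoliBM15}, the argument is a transcription of the one given there (and re-used for the Useful MAM in \cite{DBLP:conf/wollic/Accattoli16}); the only machine-specific ingredient it uses is the subterm invariant \reflemmap{umam-quant-invs}{subterm}. The third claim is immediate from the first two, since $\sizecom\exec = \sizecomev\exec + \sizecombt\exec \le \sizecomev\exec + 2\cdot\sizecomev\exec = 3\cdot\sizecomev\exec$ and the first claim bounds $\sizecomev\exec$ by $(1+\sizee\exec)\cdot\size\tm$. So the real work is in the first two claims, and the technical heart of the whole statement is the measure that proves the first one.

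\emph{First claim.} The plan is to introduce a measure $\mu$ from states to naturals, defined by structural recursion on the frame, the current code and the stack (and \emph{not} on $\genv$), that overapproximates the number of commutative-evaluation steps the machine will still perform before its next exponential transition (or, if there is none, before it halts): the current code contributes $\size\code$ if the phase is $\skeval$ and $0$ if it is $\skback$ (in which case it has already been explored), each code on the stack contributes its size, a frame variable contributes $0$, an argument frame item $\dentry\code\stack$ contributes the cost of its saved stack, and an erasing-redex frame item $\gcentry{\la\var\code}\stack$ contributes $\size\code$ plus the cost of its saved stack. One then checks, by a case analysis on the transitions, that: (i) on an initial state of code $\tm$ one has $\mu \le \size\tm$; (ii) each commutative-evaluation transition (those of phase $\skeval$, namely $\tomachhole{\csym_1}$, $\tomachhole{\csym_2}$, $\tomachhole{\csym_3}$, $\tomachhole{\csym_7}$) decreases $\mu$ by exactly $1$; (iii) each commutative-backtracking transition and each multiplicative transition leaves $\mu$ unchanged or decreases it; the only non-routine cases are $\tomachhole{\csym_6}$, which merely relocates the cost of a saved stack into a fresh argument frame item, and $\tomachmthree$, which resumes the body $\code$ of an erasing-redex frame item whose cost had been parked on the frame by $\tomachhole{\csym_7}$; (iv) each exponential transition increases $\mu$ by at most $\size\tm$, because the code it fetches from $\genv$ is, by the subterm invariant \reflemmap{umam-quant-invs}{subterm}, a subterm of $\tm$, and $\rename\code$ has the same size as $\code$. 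Telescoping these facts along $\exec$, and using $\mu \ge 0$, yields $\sizecomev\exec \le \size\tm + \sizee\exec\cdot\size\tm$, as wanted. The delicate point of the whole proof is exactly the design of $\mu$ so that (ii) and (iii) hold at once: all the stashing of subterms on the frame and all the reshuffling done by the backtracking transitions must be invisible to $\mu$, while every genuine descent into the code must cost it one unit.

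\emph{Second claim.} The plan is a purely combinatorial count of pushes and pops on the frame and on the stack; write $n_i$ for the number of $\tomachhole{\csym_i}$-transitions in $\exec$. Each commutative-backtracking transition consumes exactly one data-structure item produced by a previous commutative transition: $\tomachhole{\csym_4}$ removes a frame variable, which is pushed only by the evaluation transition $\tomachhole{\csym_2}$; $\tomachhole{\csym_5}$ removes an argument frame item, which is pushed only by $\tomachhole{\csym_6}$; and $\tomachhole{\csym_6}$ turns the top of the stack into the current code, where stack entries are created only by the evaluation transition $\tomachhole{\csym_1}$ (the saving and restoring of whole stacks by the other transitions being net-neutral). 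Hence $n_4 \le n_2$, $n_5 \le n_6$ and $n_6 \le n_1$, so $\sizecombt\exec = n_4 + n_5 + n_6 \le n_2 + 2\cdot n_1 \le 2\cdot(n_1 + n_2) \le 2\cdot\sizecomev\exec$, which together with the first claim gives the third, as noted above.
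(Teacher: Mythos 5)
Your proposal is correct and follows essentially the same route as the paper: a size measure on frame, code and stack (ignoring the environment) that each evaluation commutative decreases by one, that backtracking and multiplicative transitions do not increase, and that exponentials increase by at most $\size\tm$ via the subterm invariant, plus a push/pop count bounding the backtracking commutatives by the evaluation ones. In particular you correctly identified the only MaxMAM-specific adjustment, namely letting an erasing-redex frame item $\gcentry{\la\var\code}\stack$ carry $\size\code$ plus the cost of its saved stack so that $\tomachhole{\csym_7}$ decreases the measure and $\tomachmthree$ preserves it.
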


\begin{proof}
\hfill
% !TEX root = ../main.tex
%\myproof{
%\begin{proof}
 
\begin{enumerate}
 \item We use the following notion of size for stacks/frames/states:
\[\begin{array}{rcl@{\hspace{2em}}rcl}
\rmeasure{\stempty} & \defeq & 0 
&
\rmeasure{\mframe\cons\var} & \defeq & \rmeasure{\mframe}
\\

\rmeasure{\code\cons\stack} & \defeq & \size{\code} + \rmeasure{\stack} 
&
\rmeasure{\skframe\cons\skap{\code}{\stack}} & \defeq & \rmeasure{\stack} + \rmeasure{\skframe}\\

&&&
\rmeasure{\skframe\cons\gcentry{\la\var\code}{\stack}} & \defeq & \rmeasure{\code} +\rmeasure{\stack} + \rmeasure{\skframe}\\

 \rmeasure{\mac{\skeval}{\skframe}{\code}{\stack}{\genv}} & \defeq & \rmeasure{\skframe} + \rmeasure{\stack} + \size{\code} 
 &
\rmeasure{\mac{\skback}{\skframe}{\code}{\stack}{\genv}} & \defeq & \rmeasure{\skframe} + \rmeasure{\stack}
   \end{array}\]By direct inspection of the rules of the machine it can be checked that:
\begin{itemize}
\item \emph{Exponentials Increase the Size}: if $\state \tomache \statetwo$ is an exponential transition,
      then $\rmeasure{\statetwo} \leq \rmeasure{\state} + \size\tm$
      where $\size\tm$ is the size of the initial term;
      this is a consequence of the fact that exponential steps
      retrieve a piece of code from the environment, which is a subterm of the initial term by 
      \reflemmap{umam-quant-invs}{subterm};
      
\item \emph{Commutative Evaluation Transitions Decrease the Size}: if $\state \tomachhole{a} \statetwo$ with $a\in \set{\skeval\admsym_1,\skeval\admsym_2,\skeval\admsym_3,\skeval\admsym_7}$ then
      $\rmeasure{\statetwo} < \rmeasure{\state}$ (for $\skeval\admsym_3$ because the transition switches to backtracking, and thus the size of the code is no longer taken into account);
\item \emph{Multiplicative Transitions and Backtracking Transitions Decrease or do not Change the Size}: if $\state \tomachhole{a} \statetwo$ with
      $a\in \set{\monesym,\mtwosym, \mthreesym, \commfour,\commfive,\commsix}$ then $\rmeasure{\statetwo} \leq \rmeasure{\state}$.
\end{itemize}
Then a straightforward induction on $\size\exec$ shows that
\[\rmeasure{\statetwo} \leq \rmeasure{\state} + \sizee\exec \cdot \size\tm - \sizecomev\exec\]
\ie\ that $ \sizecomev\exec \leq \rmeasure{\state}  + \sizee\exec \cdot \size\tm - \rmeasure{\statetwo}$.

Now note that $\rmeasure{\cdot}$ is always non-negative and that since $\state$ is initial we have $\rmeasure{\state} = \size\tm$. We can then conclude with
\[\begin{array}{llllllllll}
   \sizecomev\exec & \leq &  \rmeasure{\state}  + \sizee\exec \cdot \size\tm - \rmeasure{\statetwo}\\
   & \leq & \rmeasure{\state}  + \sizee\exec \cdot \size\tm 
   & = & \size\tm  + \sizee\exec \cdot \size\tm & = & (1+\sizee\exec)\cdot\size\tm
  \end{array}\]

\item We have to estimate $\sizecombt\exec = \polsize\exec\commfour + \polsize\exec\commfive + \polsize\exec\commsix$. Note that
\begin{enumerate}
\item $\polsize\exec\commfour \leq \polsize\exec\commtwo$, as $\tomachasubfour$ pops variables from $\skframe$, pushed only by $\tomachasubtwo$;

\item $\polsize\exec\commfive \leq \polsize\exec\commsix$, as $\tomachasubfive$ pops pairs $\skap{\code}{\stack}$ from $\skframe$, pushed only by $\tomachasubsix$;

\item $\polsize\exec\commsix \leq \polsize\exec\commthree$, as $\tomachasubsix$ ends backtracking phases, started only by $\tomachasubthree$.
\end{enumerate}

Then $\sizecombt\exec \leq \polsize\exec\commtwo + 2\polsize\exec\commthree \leq 2\sizecomev\exec
$.

\item We have $\sizecom\exec = \sizecomev\exec + \sizecombt\exec \leq_{P. 2} \sizecomev\exec + 2\sizecomev\exec = 3\sizecomev\exec \leq_{P.1} 3\cdot(1+\sizee\exec)\cdot\size\tm $.
\end{enumerate}

%\end{proof}
%}
\end{proof}

\paragraph{The Main Theorem} Putting together the matching between \perp $\beta$-steps and multiplicative transitions (\refthm{weak-bis}), the quadratic bound on the exponentials via the multiplicatives (\refthmp{exp-linearity}{quad}) and the bilinear bound on the commutatives (\refthmp{commutative-bilinearity}{three}) we obtain that the number of the \maxmam transitions to implement a \perp $\beta$-derivation $\deriv$ is at most quadratic in the length of $\deriv$ and linear in the size of $\tm$. Moreover, the subterm invariant (\reflemmap{umam-quant-invs}{subterm}) and the analysis of the \ucam (\refthmp{ucam-propert}{compl}) allow to bound the cost of implementing the execution on RAM.

\begin{theorem*}[\maxmam Overhead Bound]
\label{thm:final-thm}
  Let $\deriv:\tm \toperp^* \tmtwo$ be a maximal derivation and $\exec$ be the \maxmam execution simulating $\deriv$ given by \refthmp{weak-bis}{rev-sim}. Then:
  \begin{enumerate}
   \item \emph{Length}: $\size\exec = O((1+\size{\deriv}^2)\cdot{\size\tm})$.
   
   \item \emph{Cost}: $\exec$ is implementable on RAM in $O((1+\size{\deriv}^2)\cdot{\size\tm})$ steps.
  \end{enumerate}
\end{theorem*}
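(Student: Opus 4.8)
The plan is to assemble the bound from the three quantitative results already established, after splitting an execution into its multiplicative, exponential, and commutative transitions, so that $\size\exec = \sizem\exec + \sizee\exec + \sizecom\exec$, and then doing the same kind-by-kind decomposition for the RAM cost.

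\emph{Length.} I would bound each of the three summands in turn. The execution $\exec$ is the one produced by the reverse simulation (\refthmp{weak-bis}{rev-sim}), hence $\sizem\exec = \size\deriv$. By the quadratic bound on exponentials (\refthmp{exp-linearity}{quad}), $\sizee\exec = O(\sizem\exec^2) = O(\size\deriv^2)$. By the bilinear bound on commutatives (\refthmp{commutative-bilinearity}{three}), $\sizecom\exec \leq 3\cdot(1+\sizee\exec)\cdot\size\tm = O((1+\size\deriv^2)\cdot\size\tm)$. Since $\size\tm \geq 1$, the first two summands are dominated by the third, and summing gives $\size\exec = O((1+\size\deriv^2)\cdot\size\tm)$.

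\emph{Cost.} Here the point is not to multiply the total number of transitions by the worst per-transition cost — that would lose a spurious factor $\size\tm$ — but to count each kind of transition separately and weigh it by the cost of that kind. Using the random-access representation of the global environment discussed before \refthm{ucam-propert}, every commutative transition is implementable in $O(1)$. The transitions with non-constant cost are: $\tomachmone$ and the exponentials $\tomachered$, $\tomacheabs$, which $\alpha$-rename a code whose size is $O(\size\tm)$ by the subterm invariant (\reflemmap{umam-quant-invs}{subterm}, in particular its global-environment clause), hence cost $O(\size\tm)$; and $\tomachmtwo$, which additionally runs the \ucam on the argument $\codetwo$ and on $\genv$, costing $O(\size\codetwo) = O(\size\tm)$ by \refthmp{ucam-propert}{compl} together with the subterm invariant (while $\tomachmthree$ merely rearranges data-structures, so it is $O(1)$). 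Therefore the RAM cost of $\exec$ is bounded by $O(\sizem\exec \cdot \size\tm) + O(\sizee\exec \cdot \size\tm) + O(\sizecom\exec)$, that is, by $O(\size\deriv\cdot\size\tm) + O(\size\deriv^2\cdot\size\tm) + O((1+\size\deriv^2)\cdot\size\tm) = O((1+\size\deriv^2)\cdot\size\tm)$, as claimed.

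The proof itself is routine once the three quantitative theorems are in place; the genuine work lies in those — above all the quadratic bound \refthm{exp-linearity}, which rests on \reflemma{exp-decr} and on well-labeledness of environments. The only delicate point here is the cost accounting: one must check that no single transition costs more than $O(\size\tm)$ — which crucially relies on the subterm invariant to control the size of the renamed codes and of the argument handed to the \ucam — and that the commutative transitions, which are the most numerous, run in constant time, so that their total contribution does not introduce an extra factor of $\size\tm$.
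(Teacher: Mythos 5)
Your proof is correct and follows essentially the same route as the paper: decompose $\size\exec$ by transition kind, feed in the reverse simulation ($\sizem\exec=\size\deriv$), the quadratic exponential bound, and the bilinear commutative bound, and then weigh each kind by its per-transition RAM cost (constant for commutatives, $O(\size\tm)$ for $\tomachmone$, $\tomachmtwo$ and the exponentials via the subterm invariant and the \ucam complexity, constant for $\tomachmthree$). Your cost accounting is, if anything, stated more cleanly than the paper's, whose final expressions contain what appear to be typographical slips in the placement of the square.
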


\begin{proof}\hfill
 \begin{enumerate}
  \item By definition, the length of the execution $\exec$ simulating $\deriv$ is given by $\size\exec = \sizem\exec + \sizee\exec +\sizecom\exec$. Now, by \refthmp{exp-linearity}{quad} we have $\sizee\exec = O(\sizem\exec^2)$ and by \refthmp{commutative-bilinearity}{three} we have $\sizecom\exec = O((1+\sizee\exec)\cdot\size\tm) = O((1+\sizem\exec^2)\cdot\size\tm)$. Therefore, $\size\exec = O((1+\sizee\exec)\cdot\size\tm) = O((1+\sizem\exec^2)\cdot\size\tm)$. By \refthmp{weak-bis}{rev-sim} $\sizem\exec = \size\deriv$, and so $\size\exec =  O((1+\size\deriv^2)\cdot\size\tm)$.
  
  \item The cost of implementing $\exec$ is the sum of the costs of implementing the multiplicative, exponential, and commutative transitions. Remember that the idea is that variables are implemented as references, so that environment can be accessed in constant time (\ie they do not need to be accessed sequentially). Moreover, we assuming the strong hypothesis on the representation of terms explained in the paragraph before \refthm{ucam-propert} (page \pageref{thm:ucam-propert}), so that the tests $\var \in \fv\code$ and $\var \notin \fv\code$ in $\tomachmtwo$ and  $\tomachasubseven$ can be done in constant time:   
   \begin{enumerate}
     \item \emph{Commutative}:  every commutative transition takes constant time. At the previous point we bounded their number with $O((1+\size\deriv^2)\cdot\size\tm)$, which is then also the cost of all the commutative transitions together.
     \item \emph{Multiplicative}: 
     \begin{itemize}
       \item a $\tomachmone$ transition costs $O(\size\tm)$ because it requires to rename the current code, whose size is bound by the size of the initial term by the subterm invariant (\reflemmap{umam-quant-invs}{subterm-one}). 
       
       \item A $\tomachmtwo$ transition costs $O(\size\tm)$ because checking $\var \in \fv\code$ can be done in constant time and executing the \ucam on $\codetwo$ takes $O(\size\codetwo)$ commutative steps (\refthmp{ucam-propert}{compl}), commutative steps take constant time, and the size of $\codetwo$ is bound by $\size\tm$ by the subterm invariant (\reflemmap{umam-quant-invs}{subterm-two}).
       
       \item A $\tomachmtwo$ transition takes constant time.
     \end{itemize}
Therefore, all together the multiplicative transitions cost $O(\size\deriv\cdot\size\tm)$.

\item \emph{Exponential}: At the previous point we bounded their number with $\sizee\exec = O(\size\deriv^2)$. Each exponential step copies a term from the environment, that by the subterm invariant (\reflemmap{umam-quant-invs}{subterm-four}) costs at most $O(\size\tm)$, and so their full cost is $O((1+\size\deriv)\cdot{\size\tm}^2)$ (note that this is exactly the cost of the commutative transitions, but it is obtained in a different way).
     
   \end{enumerate}
   Then implementing $\exec$ on RAM takes $O((1+\size\deriv)\cdot{\size\tm}^2)$ steps.
  \end{enumerate}
  \end{proof}

\begin{remark}
 \label{rem:linear-bound}
 Our bound is quadratic in the number of the \perp $\beta$-steps but we believe that it is not tight. In fact, our transition $\tomachmone$ is a standard optimisation, used for instance in Wand's \cite{DBLP:journals/lisp/Wand07} (section 2), Friedman et al.'s \cite{DBLP:journals/lisp/FriedmanGSW07} (section 4), and Sestoft's \cite{DBLP:journals/jfp/Sestoft97} (section 4), and motivated as an optimization about \emph{space}. In Sands, Gustavsson, and Moran's \cite{DBLP:conf/birthday/SandsGM02}, however, it is shown that it lowers the overhead for \emph{time} from quadratic to linear (with respect to the number of $\beta$-steps) for call-by-name evaluation in a weak setting. Unfortunately, the simple proof used in \cite{DBLP:conf/birthday/SandsGM02} does not scale up to our setting, nor we have an alternative proof that the overhead is linear. We conjecture, however, that it does.
\end{remark}}


\begin{thebibliography}{vRSSX99}
\bibitem[AB17]{DBLP:conf/ppdp/AccattoliB17}
Beniamino Accattoli and Bruno Barras.
\newblock Environments and the complexity of abstract machines.
\newblock In {\em PPDP 2017}, pages 4--16, 2017.

\bibitem[ABM14]{DBLP:conf/icfp/AccattoliBM14}
Beniamino Accattoli, Pablo Barenbaum, and Damiano Mazza.
\newblock Distilling abstract machines.
\newblock In {\em ICFP 2014}, pages 363--376, 2014.

\bibitem[ABM15]{DBLP:conf/aplas/AccattoliBM15}
Beniamino Accattoli, Pablo Barenbaum, and Damiano Mazza.
\newblock A strong distillery.
\newblock In {\em {APLAS} 2015}, pages 231--250, 2015.

\bibitem[ABS09]{DBLP:journals/toplas/AriolaBS09}
Zena~M. Ariola, Aaron Bohannon, and Amr Sabry.
\newblock Sequent calculi and abstract machines.
\newblock {\em ACM Trans. Program. Lang. Syst.}, 31(4), 2009.

\bibitem[AC15]{DBLP:conf/lics/AccattoliC15}
Beniamino Accattoli and Claudio~Sacerdoti Coen.
\newblock On the relative usefulness of fireballs.
\newblock In {\em {LICS} 2015}, pages 141--155, 2015.

\bibitem[Acc16]{DBLP:conf/wollic/Accattoli16}
Beniamino Accattoli.
\newblock The {Useful MAM}, a {R}easonable {I}mplementation of the {S}trong
  {\(\lambda\)}-{C}alculus.
\newblock In {\em WoLLIC 2016}, pages 1--21, 2016.

\bibitem[Acc17]{LSFA2017invited}
Beniamino Accattoli.
\newblock ({I}n){E}fficiency and {R}easonable {C}ost {M}odels.
\newblock In {\em {LSFA} 2017}, 2017.

\bibitem[AG17]{DBLP:conf/fsen/AccattoliG17}
Beniamino Accattoli and Giulio Guerrieri.
\newblock Implementing open call-by-value.
\newblock In {\em {FSEN} 2017, Tehran, Iran, April 26-28, 2017, Revised
  Selected Papers}, pages 1--19, 2017.

\bibitem[AL16]{DBLP:journals/corr/AccattoliL16}
Beniamino Accattoli and Ugo~Dal Lago.
\newblock ({L}eftmost-{O}utermost) {B}eta-{R}eduction is {I}nvariant, {I}ndeed.
\newblock {\em LMCS}, 12(1), 2016.

\bibitem[Cr{\'{e}}90]{DBLP:conf/lfp/Cregut90}
Pierre Cr{\'{e}}gut.
\newblock An abstract machine for lambda-terms normalization.
\newblock In {\em {LISP} and Functional Programming}, pages 333--340, 1990.

\bibitem[Cr{\'e}07]{DBLP:journals/lisp/Cregut07}
Pierre Cr{\'e}gut.
\newblock Strongly reducing variants of the {K}rivine abstract machine.
\newblock {\em Higher-Order and Symbolic Computation}, 20(3):209--230, 2007.

\bibitem[DN04]{Danvy04refocusingin}
Olivier Danvy and Lasse~R. Nielsen.
\newblock Refocusing in reduction semantics.
\newblock Technical Report RS-04-26, BRICS, 2004.

\bibitem[DZ13]{DBLP:conf/ppdp/DanvyZ13}
Olivier Danvy and Ian Zerny.
\newblock A synthetic operational account of call-by-need evaluation.
\newblock In {\em PPDP}, pages 97--108, 2013.

\bibitem[FGSW07]{DBLP:journals/lisp/FriedmanGSW07}
Daniel~P. Friedman, Abdulaziz Ghuloum, Jeremy~G. Siek, and Onnie~Lynn
  Winebarger.
\newblock Improving the lazy {K}rivine machine.
\newblock {\em Higher-Order and Symbolic Computation}, 20(3):271--293, 2007.

\bibitem[FS09]{DBLP:journals/entcs/FernandezS09}
Maribel Fern{\'{a}}ndez and Nikolaos Siafakas.
\newblock New developments in environment machines.
\newblock {\em Electr. Notes Theor. Comput. Sci.}, 237:57--73, 2009.

\bibitem[GL02]{DBLP:conf/icfp/GregoireL02}
Benjamin Gr{\'{e}}goire and Xavier Leroy.
\newblock A compiled implementation of strong reduction.
\newblock In {\em {(ICFP} '02).}, pages 235--246, 2002.

\bibitem[GNM13]{DBLP:conf/ppdp/Garcia-PerezNM13}
{\'{A}}lvaro Garc{\'{\i}}a{-}P{\'{e}}rez, Pablo Nogueira, and Juan~Jos{\'{e}}
  Moreno{-}Navarro.
\newblock Deriving the full-reducing krivine machine from the small-step
  operational semantics of normal order.
\newblock In {\em PPDP}, pages 85--96, 2013.

\bibitem[Ses97]{DBLP:journals/jfp/Sestoft97}
Peter Sestoft.
\newblock Deriving a lazy abstract machine.
\newblock {\em J. Funct. Program.}, 7(3):231--264, 1997.

\bibitem[SGM02]{DBLP:conf/birthday/SandsGM02}
David Sands, J{\"o}rgen Gustavsson, and Andrew Moran.
\newblock Lambda calculi and linear speedups.
\newblock In {\em The Essence of Computation}, pages 60--84, 2002.

\bibitem[vRSSX99]{DBLP:journals/iandc/RaamsdonkSSX99}
Femke van Raamsdonk, Paula Severi, Morten~Heine S{\o}rensen, and Hongwei Xi.
\newblock Perpetual reductions in lambda-calculus.
\newblock {\em Inf. Comput.}, 149(2):173--225, 1999.

\bibitem[Wan07]{DBLP:journals/lisp/Wand07}
Mitchell Wand.
\newblock On the correctness of the krivine machine.
\newblock {\em Higher-Order and Symbolic Computation}, 20(3):231--235, 2007.

\end{thebibliography}
\end{document}